\def\1{\bm{1}}
\def\vone{{\bm{1}}}
\def\va{{\bm{a}}}
\def\ve{{\bm{e}}}
\def\vh{{\bm{h}}}
\def\vl{{\bm{l}}}
\def\vr{{\bm{r}}}
\def\vs{{\bm{s}}}
\def\vv{{\bm{v}}}
\def\vx{{\bm{x}}}
\def\vz{{\bm{z}}}
\def\mA{{\bm{A}}}
\def\mB{{\bm{B}}}
\def\mE{{\bm{E}}}
\def\mF{{\bm{F}}}
\def\mH{{\bm{H}}}
\def\mI{{\bm{I}}}
\def\mR{{\bm{R}}}
\def\mS{{\bm{S}}}
\def\mX{{\bm{X}}}
\DeclareMathAlphabet{\mathsfit}{\encodingdefault}{\sfdefault}{m}{sl}
\SetMathAlphabet{\mathsfit}{bold}{\encodingdefault}{\sfdefault}{bx}{n}
\newcommand{\R}{\mathbb{R}}
\DeclareMathOperator*{\argmax}{arg\,max}
\DeclareMathOperator*{\argmin}{arg\,min}
\newcommand{\subparagraph}{}    
\definecolor{cyan_a}{rgb}{0,0.7490196,0.7686275}
\newcommand{\remove}[1]{}
\newtheorem{lemma}{Lemma}
\newtheorem{theorem}{Theorem}
\newtheorem{definition}{Definition}
\def\R{\mathbb{R}}
\def\va{\mathbf{a}}
\newcommand{\norm}[1]{\left\|#1\right\|}
\newcommand{\set}[1]{\{#1\}}
\newenvironment{noiseless_case}[1][\ding{113} Noiseless case:\\]{\begin{trivlist}\item[\hskip \labelsep {\bfseries #1}]}{\end{trivlist}}
\newenvironment{noisy_case}[1][\ding{113} Noisy case:\\]{\begin{trivlist}\item[\hskip \labelsep {\bfseries #1}]}{\end{trivlist}}
\newtheorem{problem}{Problem}
\newtheoremstyle{solution}%
  {\topsep}{\topsep}{\normalfont}{}%
  {\itshape}{.}{5pt}{}
\begin{document}
%
\title{Reconstructing Point Sets from\\ Distance Distributions}
%
%
%

\author{Shuai~Huang
        and~Ivan~Dokmani\'c,~\IEEEmembership{Member,~IEEE}
\thanks{\copyright 2021 IEEE. Personal use of this material is permitted. Permission from IEEE must be obtained for all other uses, in any current or future media, including reprinting/republishing this material for advertising or promotional purposes, creating new collective  works,  for  resale  or  redistribution  to  servers  or  lists,  or  reuse  of  any  copyrighted  component  of  this  work  in  other works.}
\thanks{This work is supported by National Science Foundation under Grant CIF-1817577. The authors are affiliated with the Coordinated Science Laboratory, University of Illinois at Urbana-Champaign, Urbana, IL 61801 (e-mail: shuai.huang@emory.edu, dokmanic@illinois.edu).

}}

\maketitle

\begin{abstract}
We address the problem of reconstructing a set of points on a line or a loop from their unassigned noisy pairwise distances. When the points lie on a line, the problem is known as the turnpike; when they are on a loop, it is known as the beltway. We approximate the problem by discretizing the domain and representing the $N$ points via an $N$-hot encoding, which is a density supported on the discretized domain. We show how the distance distribution is then simply a collection of quadratic functionals of this density and propose to recover the point locations so that the estimated distance distribution matches the measured distance distribution. This can be cast as a constrained nonconvex optimization problem which we solve using projected gradient descent with a suitable spectral initializer. We derive conditions under which the proposed distance distribution matching approach locally converges to a global optimizer at a linear rate. Compared to the conventional backtracking approach, our method jointly reconstructs all the point locations and is robust to noise in the measurements. We substantiate these claims with state-of-the-art performance across a number of numerical experiments. Our method is the first practical approach to solve the large-scale noisy beltway problem where the points lie on a loop.
\end{abstract}

\begin{IEEEkeywords}
System of quadratic equations, distance geometry, spectral initialization, projected gradient descent.
\end{IEEEkeywords}

%
\IEEEpeerreviewmaketitle

\section{Introduction}
\label{sec:intro}
In this paper we address the problem of reconstructing the geometry of $N$ points from their unassigned pairwise distances in the one-dimensional case where the points lie on a line or a loop. In most distance geometry problems (DGP), one is given an indexed list of $\tbinom{N}{2}$ pairwise distances $\mathcal{D}=\big(d_k,\ 1\leq k\leq \tbinom{N}{2}\big)$, where $d_k$ is the distance between the $k$-th pair of points and could contain noise. In standard, \textit{assigned} problems, every distance $d_k$ is assigned to a pair of points $\set{ u_m,u_n }$ from $\mathcal{U}=\left( u_n, \ 1\leq n\leq N \right)$. Put differently, we know an assignment map $\mathscr{M}(k)=\set{m,n}$ such that $d_k=\norm{u_m-u_n}$. When the distances are exact, having the assignments allows us to construct the distance matrix, which in turn allows us to employ classical techniques based on eigendecomposition such as multidimensional scaling \cite{Torgerson1952} to estimate the relative point locations $\mathcal{U}$.

On the other hand, in the \textit{unassigned} distance geometry problem (uDGP) \cite{Duxbury:2016} addressed in this paper, the correspondences between the distances and pairs of points are unknown: the assignment $\mathscr{M}(k)$ is not available. Instead of a list, we only have the multiset\footnote{To allow for repeated distances.} $\mathcal{D}$ to work with. We must recover both the point locations and the assignments of the distances to pairs of points. Fig. \ref{fig:1d_uDGP} illustrates the two related reconstruction problems in 1D. When the $N$ points lie on a line, the problem is known in computer science as ``the turnpike problem'' \cite{Shamos:1978:CG:908431,Skiena:1990,Dakic:2000}. The multiset $\mathcal{D}$ contains $\tbinom{N}{2}$ distances from $u_m$ to $u_n$.

When the $N$ points lie on a loop, we have ``the beltway problem'' \cite{Skiena:1990,Lemke2003}. Assuming that the distances between pairs of points are measured in the clockwise direction and the length of the loop is $L$, the distance $d(u_m\rightarrow u_n)$ from $u_m$ to $u_n$ and the distance $d(u_n\rightarrow u_m)$ from $u_n$ to $u_m$ satisfy $d(u_m\rightarrow u_n)+d(u_n\rightarrow u_m)=L$. The multiset $\mathcal{H}$ then contains $N(N-1)$ distances.

\begin{figure}[tbp]
\centering
\vspace{-2em}
\subfigure{
\label{fig:1d_uDGP_line}
\includegraphics[height=1in]{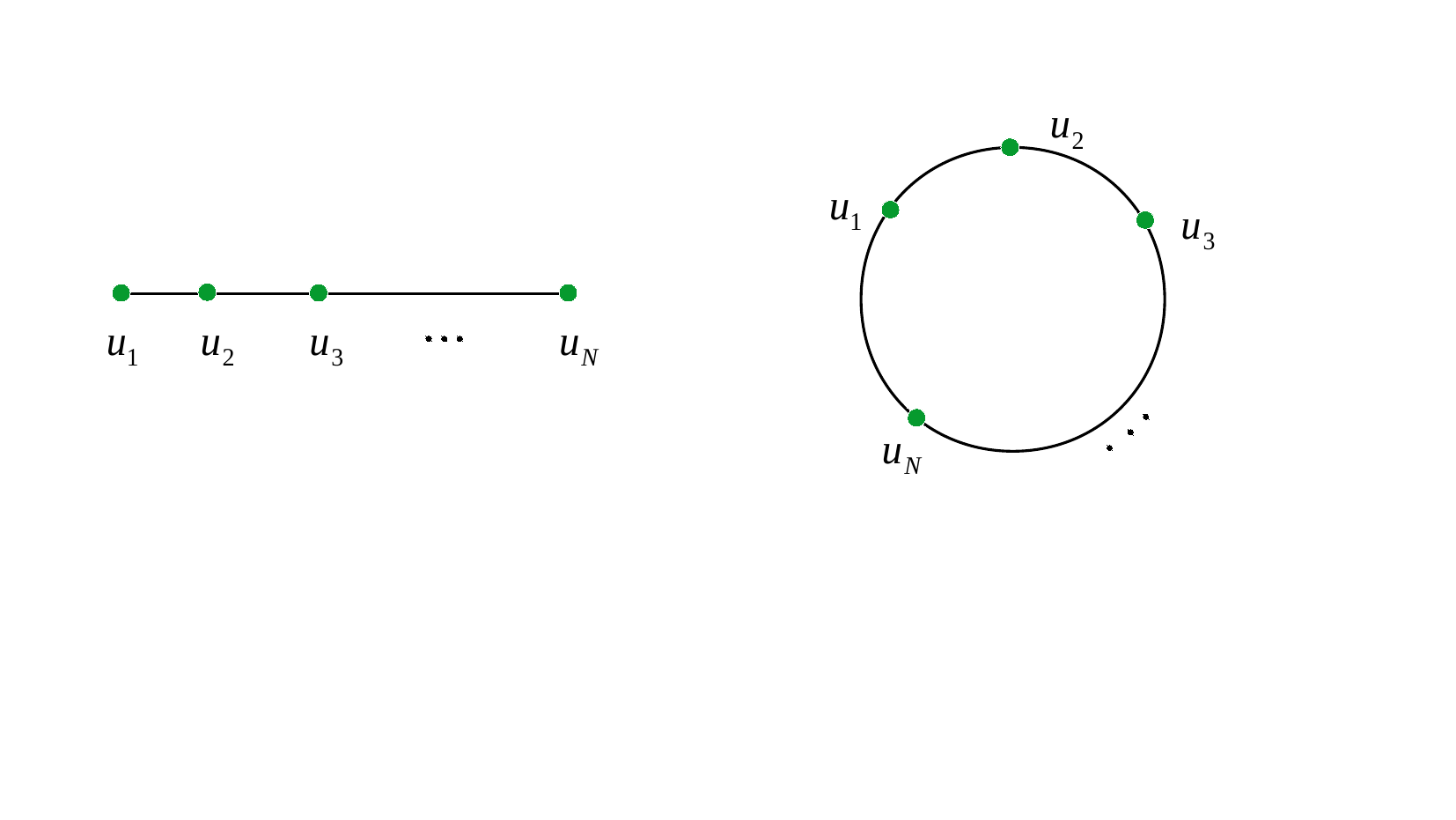}}
\subfigure{
\label{fig:1d_uDGP_loop}
\includegraphics[height=1in]{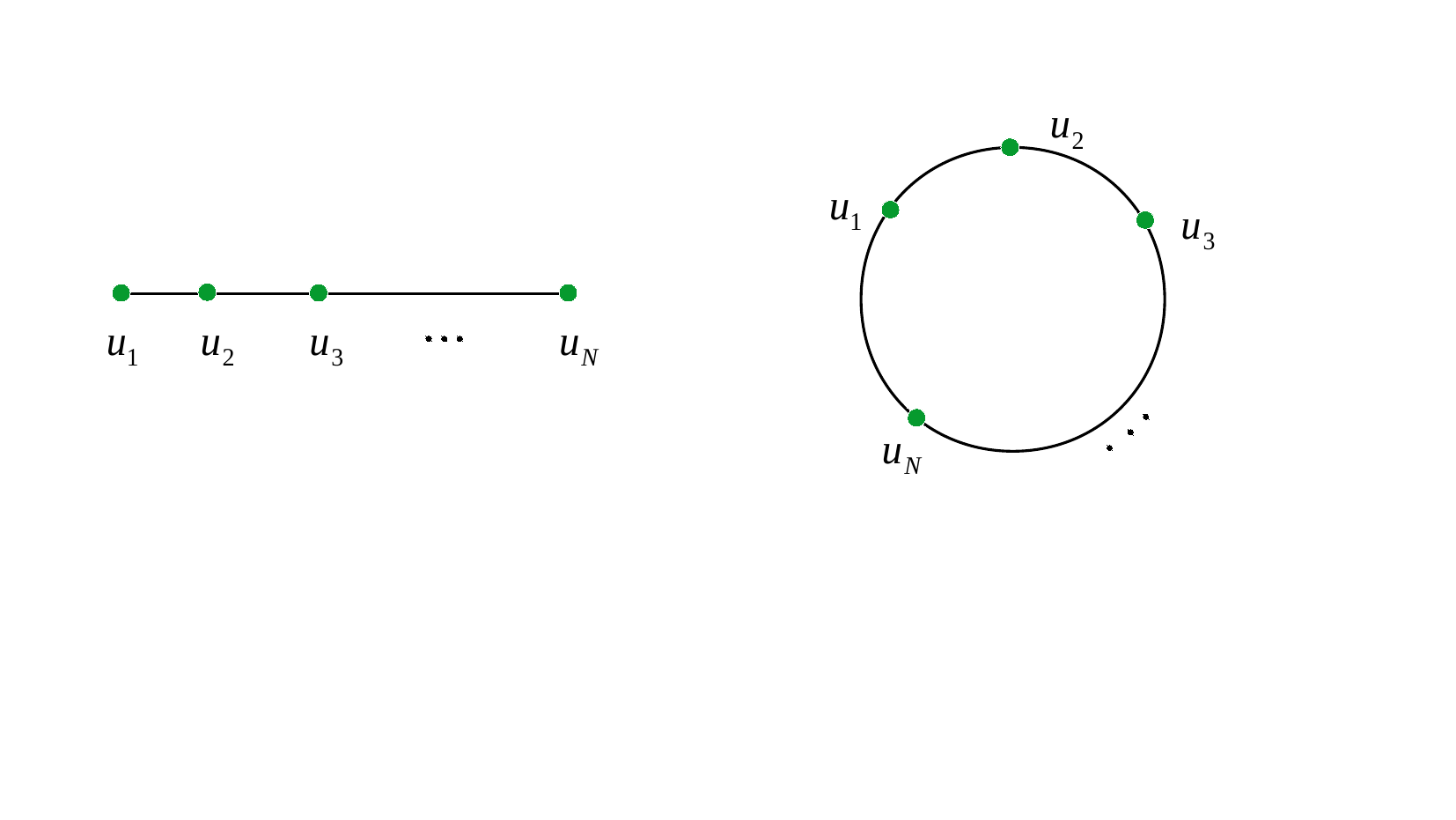}}
\vspace{-0.5em}
\caption{Reconstruction of the locations of $N$ points from their ``unassigned'' pairwise distances in the 1D case where the points could lie on a line or a loop. The correspondence between the distance $d_k$ and the pair of points $(u_m,u_n)$ is unknown.}
\vspace{-1.5em}
\label{fig:1d_uDGP}
\end{figure}

The uDGP is harder to solve than the usual assigned DGP \cite{EDG14} where the assignments are already known. If we want to apply the existing strategies developed for assigned DGP, we need to first find the correct assignments of the distances. The combinatorial nature of this task and the noise in the distance measurements make it challenging. Beyond general theoretical interest in solving the uDGP, its relevance in the 1D case stems from applications. We mention the following three:

\emph{a) Partial digestion.} One of the early methods for genome reconstruction uses partial digestion of DNA \cite{PD:1976}, though nowadays it has been replaced by the commercially available high-throughput sequencing platforms such as Illumina \cite{MOREY20133,REUTER2015586,NGS_review:2016}. In the experiment, an enzyme digests a DNA fragment at the so-called restriction sites $\{u_1<\cdots<u_N\}$. Since the digestion is random and partial, one is left with a collection of fragments whose lengths correspond to the distances between all pairs of restriction sites. The task is then to recover the $N$ site locations from the \emph{unassigned} fragment lengths, which is modeled as the turnpike problem. Hence sometimes the turnpike problem is also referred to as the partial digest problem \cite{Skiena:1990,Waterman:1995,Skiena:1994}.

\emph{b) De novo peptide sequencing.} In tandem mass spectrometry \cite{TMS:1986, TMS:2003}, a peptide is bombarded with electrons and broken down into smaller ionized peptide fragments. The mass-to-charge ratios of those fragments can be measured to produce the tandem mass spectrum of the peptide. In the experiment, the peptide backbone could break at any weak peptide bond and the fragment masses can be interpreted as  ``distances'' between pairs of broken peptide bonds. De novo peptide sequencing \cite{DatabasePeptide:1997,DenovoPeptide:1999} aims to reconstruct the amino acid sequence of a peptide from its mass spectrum. For cyclic peptides \cite{CyclicPeptide:2011,Fomin:2015}, the sequencing problem can be formulated as a beltway problem where the points lie on a loop. For non-cyclic peptides, it becomes the turnpike problem. 

We mention that the turnpike problem is also related to the problem of string reconstruction from substring compositions which arises in protein mass spectrometry \cite{Acharya2015StringRF,Bulteau:2014,LEE2013}. The advances presented here for the turnpike problem might inspire similar approaches to solve its string variant.
    
\emph{c) Spectral estimation.} Zintchenko and Wiebe \cite{SpecEst:2016} showed that randomized phase experiments allow one to infer the eigenvalue gaps in low-dimensional quantum systems. Reconstructing the eigenspectrum $\{\chi_1=0<\chi_2<\cdots\chi_N\}$ from pairwise eigenvalue gaps is then an instance of the turnpike problem. When the eigenvalue gaps between consecutive eigenvalues $(\chi_i, \chi_{i+1})$ are unique, this is known as reconstructing the Golomb ruler \cite{Sidon1932,Babcock:1953}. When $N\neq 6$, the recovered eigenspectrum is unique up to congruence \cite{Golomb:2007}.

\subsection{Related Work}

In the noiseless case, Lemke and Werman \cite{Lemke:1988} address the turnpike problem via polynomial factorization. Namely, the polynomial $Q_\mathcal{D}(a) = N+\sum_{k=1}^K(a^{d_k}+a^{-d_k})$ is invariant to permutations of pairwise distances. If one can factorize it as $Q_\mathcal{D}(a)=R(a)R(a^{-1})$ where $R(a)=\sum_{n=1}^N a^{u_n}$, then the point locations can be read off from the exponents. When the distances are all integers, the factorization runs in a time that is polynomial in the degree of $Q(a)$ \cite{Lenstra1982}, which is the largest pairwise distance. However, this approach quickly becomes impractical, and is brittle in the presence of noise.

The more practical backtracking algorithm by Skiena et al. \cite{Skiena:1990} produces a solution for typical noiseless instances in time $\mathcal{O}(N^2\log N)$ \cite{Lemke2003}. It progressively finds the assignment for the remaining largest unassigned distance in $\mathcal{D}$, and adopts the branch-and-bound search strategy to recover the point locations in a depth-first manner. However, there exist examples with exponential runtime \cite{Skiena:1990,ZhangExp:1994}. Abbas and Bahig \cite{Abbas2016} later demonstrated that some of the worst-case scenarios can be avoided by performing a breadth-first search instead. An alternative to clever combinatorial search is to formulate the problem as a binary integer program \cite{Miller:1960,IBARAKI197639,Papadimitriou:1982}, and then relax it to obtain a convex semidefinite program \cite{Dakic:2000}. One drawback of this scheme is that it is computationally infeasible for large-scale problems. In this paper we propose to relax the integer program to a constrained nonconvex optimization problem that can be solved efficiently using projected gradient descent with a spectral initializer.

To address the noisy case where the turnpike problem becomes NP-hard \cite{Cieliebak:2004}, Skiena and Sundaram proposed a modification of the backtracking algorithm where an interval is associated with each recovered point to account for the uncertainty \cite{Skiena:1994}. As a consequence, the number of backtracking paths could grow exponentially large. Pruning can be performed on the paths when the relative errors in the distances are small; however, it requires careful adaptive tuning and could sometimes lead to no solution. Our approach naturally incorporates noise into the problem formulation, thus exhibiting better performance compared to the current state-of-the-art backtracking approach.

The combinatorial turnpike problem can be formulated as an assignment problem \cite{Assignment:1957,Burkard:2009} or a general integer program (when the domain is discrete) \cite{Miller:1960,IBARAKI197639,Papadimitriou:1982}. Most of the prior approaches try to first find the correct assignments of the distances to pairs of points $\mathscr{M}(k)$, and then recover the point locations $u_n$. On the other hand, the approach by Daki\'c \cite{Dakic:2000} adopts the integer programming formulation where the point locations are represented by a binary vector in the noiseless case, and directly recovered via a semidefinte relaxation. However, the resulting problem size becomes formidable for large-scale problems, and there is no guarantee that the semidefinite relaxation would produce a rank-1 solution. Additionally, a quadratic integer programming formulation was also proposed by Fontoura et al. \cite{Fontoura:superset:2018} to solve the ``minimum distance superset problem'' where some distances are missing.

The beltway problem is more difficult than the turnpike problem \cite{Skiena:1990, Lemke2003}. Due to the loop structure, it can no longer be formulated as a polynomial factorization problem. It is also impossible for the backtracking approach to rely on the remaining largest unassigned distance to find the point locations progressively. Lemke et al. \cite{Lemke2003} showed that the computational complexity of the search in the beltway problem is $\mathcal{O}(N^N\log N)$. For small problems, Fomin \cite{Fomin:2016:1,Fomin:2016:2} proposed to avoid an exhaustive search in the noiseless case by further removing the redundant distances from $\mathcal{H}$ sequentially, and later extended it to handle noisy measurements\cite{Fomin:2019:3}. To the best of our knowledge, our work in this paper offers an alternative by providing the first practical approach to solve the large-scale noisy beltway problem.

\subsection{Uniqueness}

One complication with the turnpike problem is that the solution is not necessarily unique (up to a relabeling of the points and up to a congruence). Fortunately, the solution to the uDGP in any dimension is known to be \textit{generically} unique, in the sense made precise in the form of the reconstructability for the point configurations by Boutin and Kemper in \cite[Theorem 2.6 and Proposition 2.11]{BOUTIN2004709}. For example, if the points are sampled i.i.d. from an absolutely continuous probability distribution, then almost surely the distance distribution specifies their geometry uniquely (up to relabeling and congruence). 

Boutin and Kemper worked with complete distance measurements. Gortler et al. \cite{GUGR2018} later relaxed the completeness assumption and only required the underlying graph to be generically globally rigid \cite{CGGR2010}. Under this sufficient condition, they proved that the reconstruction of a generic point configuration is unique.

Importantly, beyond uniqueness, Boutin and Kemper \cite{BOUTIN2004709} showed that when the multiset $\mathcal{D}$ in the turnpike problem contains only distinct distances, there is a suitably defined neighborhood around each uniquely reconstructable point configuration such that all configurations within the neighborhood are also uniquely reconstructable, and the forward and backward mappings between the different distance multisets are continuous. We emphasize that this result does not depend on any particular algorithm, but is rather a fundamental statement about the well-posedness of the inverse problem of recovering the geometry from distance distribution. To the best of our knowledge, there has not been much work on the uniqueness of beltway reconstructions. In the remainder of this paper, we assume that the measured distances correspond to a uniquely reconstructable configuration.

\subsection{Our Approach and Paper Outline}

We proceed along the line of integer programming to solve the turnpike and beltway problems in Section \ref{sec:main_turnpike} and \ref{sec:main_beltway} respectively. Instead of relaxing the integer program to a convex SDP as Daki\'c \cite{Dakic:2000}, we relax it to a constrained minimization of a nonconvex objective, which is computationally much more efficient and suitable for large-scale problems. To this end we also develop an efficient projection onto the relaxed constraint set. It can be initialized with a suitably constructed initializer inspired by the spectral initialization strategy \cite{Netrapalli2015:RPAM,WF:2015} or a random initializer. The measurement noise is naturally incorporated into our formulation by smoothing the target distance distribution. We complement these results with a convergence analysis of the proposed method in the neighbourhood of a global optimum, and an analysis of the difficulty of recovery using the mutual information between the point and distance.

Starting with the easier turnpike problem, we present the proposed distance distribution matching approach in Section \ref{sec:main_turnpike}, and then demonstrate how it can be adapted to solve the beltway problem in Section \ref{sec:main_beltway}. Convergence analysis of the proposed approach and the accompanying analysis of the difficulty of recovery are given in Section \ref{sec:cvg}. Numerical experiments in Section \ref{sec:exp} show that our method achieves state-of-the-art performances for the turnpike recovery, and is the first practical approach to solve the large-scale noisy beltway problem. We conclude this paper with a discussion of our results in Section \ref{sec:con}. The proofs of the formal results can be found in the Appendix. Detailed derivations of the proofs are given in the accompanying Supplementary Material.

\section{The Noisy Turnpike Problem}
\label{sec:main_turnpike}

We begin by addressing the following problem:

\begin{problem}[Noisy Turnpike]
Reconstruct the relative positions of $N$ points on a line $\{u_1,u_2,\cdots,u_N\}$ from a multiset $\mathcal{D}$ of $\binom{N}{2}$ unassigned noisy pairwise distances,
\[
\mathcal{D}=\left\{d_k = b_k + w_k,\ 1\leq k\leq \tbinom{N}{2}\right\}\,,
\]
where $b_k = \norm{u_{i} - u_{j}}$ with $\set{i, j} = \mathscr{M}^{-1}(k)$, $i < j$, is the noiseless distance, $d_k$ is the measured noisy distance, and $w_k$ is the noise.
\end{problem}

For notational convenience, from now on we will augment $\mathcal{D}$ with $N$ zero self-distances, that is, the distances from every point $u_n$ to itself. The total number of distances considered in the turnpike problem is then $K=\tbinom{N}{2}+N$.

\begin{figure}[tbp]
\centering
\includegraphics[width=3in]{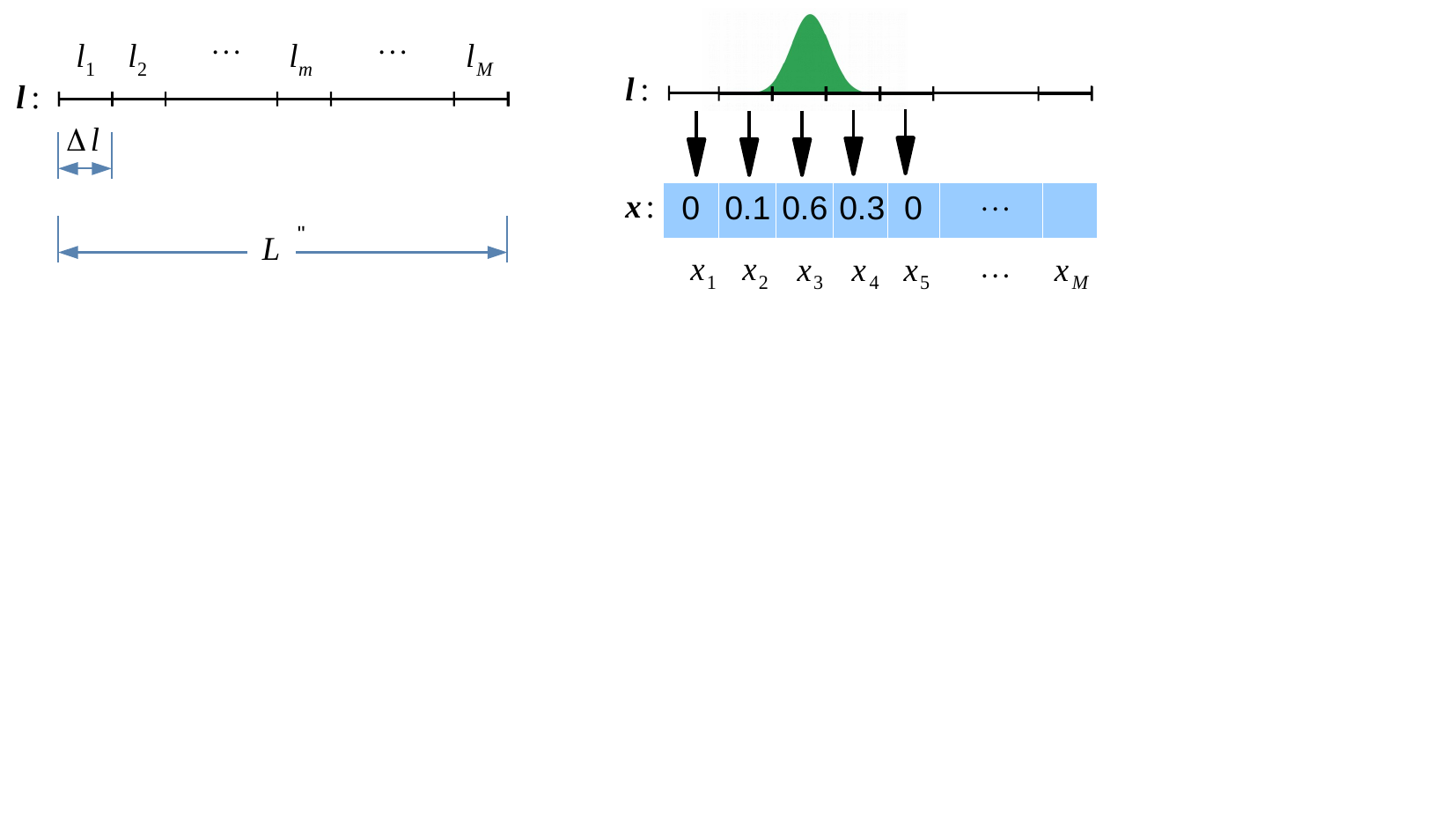}
\caption{In the turnpike problem, the 1D domain $\vl$ is discretized into $M$ segments $\set{l_1,\cdots,l_M}$. The point locations are represented by the vector $\vx$: the $m$-th entry $x_m$ is the probability that a point is located at $l_m$.}
\label{fig:discretization}
\end{figure}

As shown in Fig. \ref{fig:discretization}, suppose the $N$ points lie on a line segment $\boldsymbol l$ of length $L$. We discretize the 1D domain by dividing $\boldsymbol l$ into $M$ segments $\set{l_1,\cdots,l_M}$ of equal length $\lambda$. As a result, the point location $u_n$ and the distance $d_k$ are quantized to $v_n$ and $y_k$ respectively:
\begin{align}
v_n=\left\lfloor\frac{u_n}{\lambda}\right\rceil \quad \textnormal{ and } \quad y_k=\left\lfloor\frac{d_k}{\lambda}\right\rceil\,,
\end{align}
where $\left\lfloor\cdot\right\rceil$ is the nearest integer function. In order to avoid confusion in quantized locations, we need to choose a $\lambda$ at least smaller than the minimum distance between two points. Conversely, this can be interpreted as a minimum separation criterion given a fixed discretization. We will henceforth assume this criterion is satisfied.

We now represent the point set by a vector $\vx = (x_m)_{m=1}^M \in \R^M$, with $x_m = 1$ if the $m$-th segment contains a point and $x_m = 0$ otherwise. However, instead of insisting that each discretization cell contain an integral number of points, we relax the $0$-$1$ integer constraints on $\vx$ as
\begin{align}
    \label{eq:box_0_1}
    &0\leq x_m\leq 1,\ \forall\ m\in\{1,\cdots,M\}\\
    \label{eq:l1_sum}
    &\sum_{m=1}^Mx_m=N\,.
\end{align}
By doing so, we can interpret the rescaled $x_m$ as the probability that a point is located at $l_m$ on the discretized domain. Beyond mathematical convenience of having a convex domain for $\vx$, this is a natural way to handle noise and represent uncertainty in point locations.

The noise in the quantized distance $y_k$ comes from both the measurement noise that is already contained in $d_k$ and the quantization error due to the finite-resolution grid. Letting $y\in\set{0,1,\cdots,M-1}$ denote the quantized distance, we can compute the distance distribution $p(y)$ using $\vx$ as follows:
\begin{align}
    \label{eq:quad_form_p}
    p(y)=\frac{1}{K}\sum_{i=1}^M\sum_{j=i}^Mx_ix_j\cdot\delta\big(y_{ij}-y\big)=\frac{1}{K}\cdot\vx^T\mA_y\vx\,,
\end{align}
where $y_{ij}$ is the quantized distance between the segments $l_i$ and $l_j$, $\delta(\cdot)$ is the Kronecker delta function, and $\mA_y\in\set{0,1}^{M\times M}$ is the measurement matrix whose $(i,j)$-th entry is given by
\begin{align}
\label{eq:measurement_matrix}
A_y(i,j)=\left\{
\begin{array}{l}
1\\
0
\end{array}
\quad
\begin{array}{l}
\textnormal{if } j-i=y,\textnormal{ and } i\leq j\\
\textnormal{otherwise}\,.
\end{array}
\right.
\end{align}
We can see that $\mA_y$ is a Toeplitz matrix. The normalization by $\frac{1}{K}$ in \eqref{eq:quad_form_p} serves to justify interpreting $p(y)$ as a probability mass function, or a distribution. 

Take as an example the case with $N=3$ points $\set{u_1=1,\ u_2=3,\ u_3=5}$ where the distance multiset $\mathcal{D}$ is $\set{0,0,0,2,2,4}$. We have $\vx=[1\ 0\ 1\ 0\ 1]^T$. Apart from counting the frequencies of the distances in $\mathcal{D}$, we can compute $p(y=2)=\vx^T\mA_2\vx$ as follows
\[
p(y=2) =\frac{1}{6}\cdot\vx^T\left[\begin{array}{ccccc} 0 &0 &\colorbox{cyan_a!30}{1} &0 &0\\ 0 &0 &0 &\colorbox{cyan_a!30}{1} &0\\ 0 &0 &0 &0 &\colorbox{cyan_a!30}{1}\\ 0 &0 &0 &0 &0\\ 0 &0 &0 &0 &0 \end{array}\right]\vx = \frac{1}{3}.
\]
Indeed, two pairwise distances, which is a third of $K=N + \binom{N}{2} = 6$ pairwise distances in $\mathcal{D}$, equal $2$.

\subsection{Distance Distribution Matching}
\label{subsec:ddm}

\begin{figure}[tbp]
\centering
\subfigure{
\label{fig:prob_approx_cont}
\includegraphics[width=1.25in]{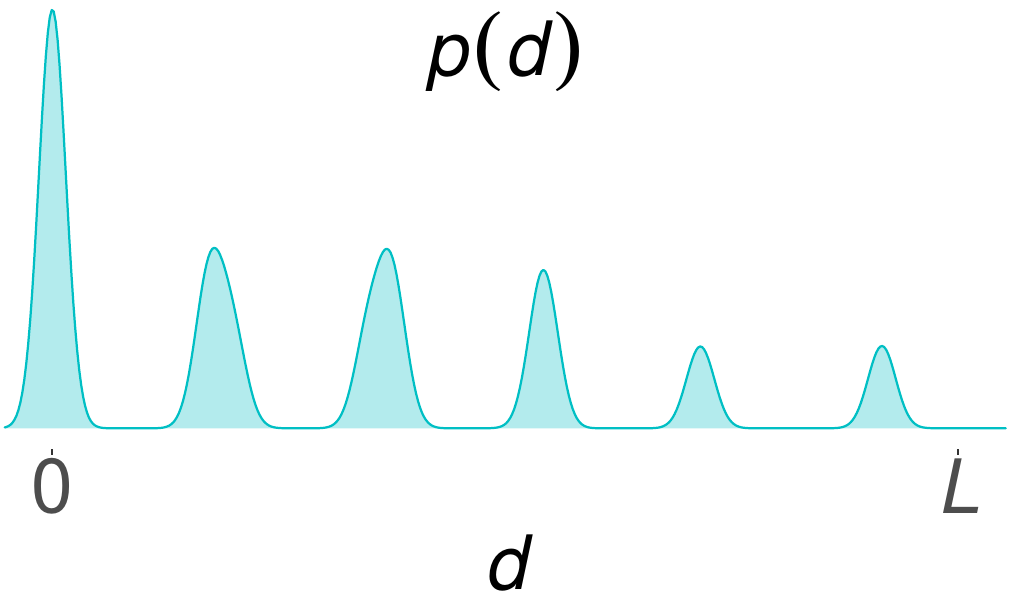}}
\subfigure{
\label{fig:prob_approx_disc}
\includegraphics[width=1.25in]{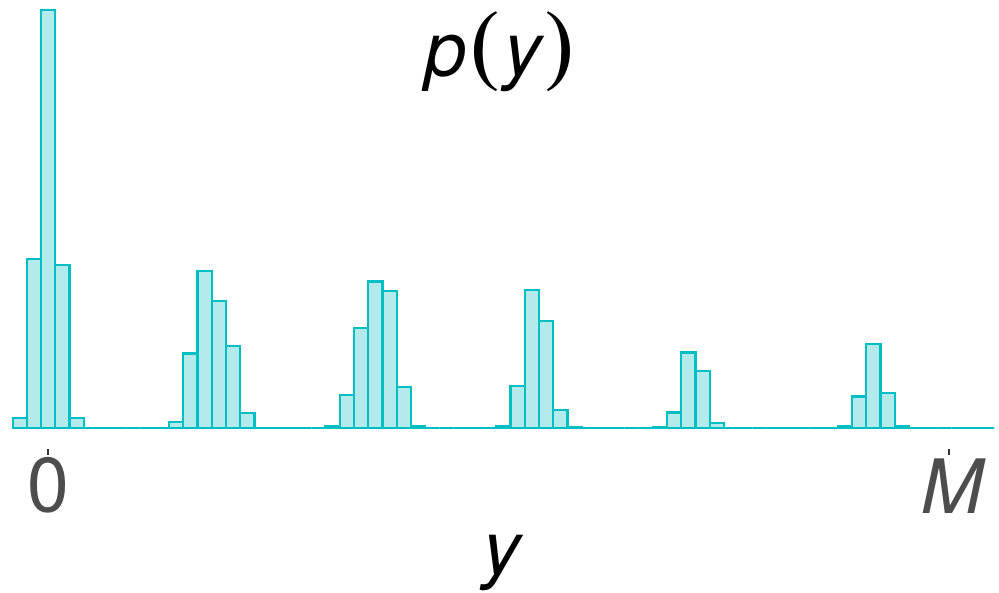}}
\caption{Left: The approximated distribution $p(d)$ based on the distance multiset $\mathcal{D}$; Right: The discretized distance distribution $p(y)$ from $p(d)$.}
\label{fig:prob_dist_discretization}
\end{figure}

Depending on how the distance $d_k$ is measured in various applications, a variety of noise models for the noise $w_k$ may be appropriate \cite{Oppenheim:2009,Vaseghi:2006}. Here we model $w_k$ as i.i.d. zero-mean Gaussian noise with unknown variance $\xi^2$: $w_k\sim\mathcal{N}(0,\xi^2)$. The \emph{oracle} distance distribution $g(d)$ is then
\begin{align}
\label{eq:oracle_dist}
g(d)=\frac{1}{K}\cdot\sum_{k=1}^{K}\mathcal{N}\left(d\ \left|\ b_k,\xi^2\right.\right)\,,
\end{align}
where $b_k$ is the noiseless distance, $\mathcal{N}(d\ |\ b_k,\xi^2)$ is the Gaussian probability density function with $b_k$ as the mean and $\xi^2$ as the variance.

Let $\vz\in[0,1]^M$ denote the solution to the turnpike problem where $z_m$ is the estimated (unnormalized) probability that a point is located at $l_m$. Similar to \eqref{eq:quad_form_p}, the \emph{estimated} distance distribution $q_{\vz}(y)$ can also be expressed in terms of $\vz$:
\begin{align}
\label{eq:quad_form_q}
q_{\vz}(y)=\frac{1}{K}\cdot\vz^T\mA_y\vz\,.
\end{align}
Ideally, we find a solution so that the estimated distribution $q_{\vz}(y)$ matches the oracle distance distribution $g(d)$. However, since $b_k$ and $w_k$ are unknown in practice, we are going to approximate $g(d)$ using the distance measurements in the multiset $\mathcal{D}$. The \emph{approximated} distance distribution $p(d)$ is:
\begin{align}
\label{eq:approx_dist}
p(d)=\frac{1}{K}\cdot\sum_{k=1}^{K}\mathcal{N}\left(d\ \left|\ d_k, \sigma^2\right.\right)\approx g(d)\,,
\end{align}
where $d_k\in\mathcal{D}$, the noise variance $\sigma^2$ should be tuned according to an a priori estimate of the noise level in the data. As shown in Fig. \ref{fig:prob_dist_discretization}, the distribution $p(d)$ is further discretized to the following $p(y)$ in order to perform distribution matching with respect to the quantized distance $y$. 
\begin{align}
\label{eq:obs_dist}
p(y)=\int_{(y-0.5)\lambda}^{(y+0.5)\lambda}p(d)\ \textnormal{d}d\,.
\end{align}
We solve for $\vz$ by minimizing the mean-squared error between the distributions $q_{\vz}(y)$ and $p(y)$ subject to suitable constraints. The resulting optimization problem, which we refer to as ``distance distribution matching'' (DDM), is given as
\begin{equation}
\tag{DDM-T}
\label{eq:constrained_nonconvex} 
\begin{aligned}
\min_{\vz}\quad &f(\vz)=\frac{1}{M}\sum_{y=0}^{M-1}\big(q_{\vz}(y)-p(y)\big)^2
\\
\textnormal{subject to}\quad &0\leq z_m\leq 1,\ \forall\ m\in\set{1,\cdots,M}\\
&\sum_{m=1}^Mz_m = N\,.
\end{aligned}
\end{equation}

The same name ``DDM'' also goes for the beltway optimization problem \eqref{eq:constrained_nonconvex_bw} introduced later in Section \ref{sec:main_beltway} when it is clear from the context.

\subsection{Extracting Point Locations from the Estimated Distribution}

In general, the recovered vector $\vz$ will not be supported on exactly $N$ indices. In the following we discuss how to extract the $N$ point location estimates from $\vz$ when this is the case.

\begin{noiseless_case}
If we assume that there are no measurement noise in $d_k$ and no quantization error in the quantized distance $y_k=\left\lfloor\frac{d_k}{\lambda}\right\rceil$, the vector $\vx$ is then binary: $\vx\in\set{0,1}^M$. Suppose $\vz^\dagger$ is one of the global optimizers of \eqref{eq:constrained_nonconvex} that is different from $\vx$ and $f(\vz^\dagger)=0$. We have from $q(y=0)=p(y=0)$ that 
\[
\|\vz^\dagger\|_2^2={\vz^\dagger}^T\mA_0\vz^\dagger = \vx^T\mA_0\vx=\|\vx\|_2^2=N\,.
\]
From the constraints in \eqref{eq:constrained_nonconvex}, we can get $\|\vz^\dagger\|_2^2=\sum_{m=1}^Mz_m^\dagger=N$. However, if the $m$-th entry $z^\dagger_m\in(0,1)$, then $\|\vz^\dagger\|_2^2<\sum_{m=1}^Mz_m^\dagger$ which contradicts $\|\vz^\dagger\|_2^2=\sum_{m=1}^Mz_m^\dagger$. Hence $z^\dagger_m\notin(0,1)$, and the global optimizer is integer-valued, $\vz^\dagger\in\set{0,1}^M$. The points are at the segments that correspond to the $1$-entries in $\vz^\dagger$.

If the solution $\vz$ is not a global optimizer, then $\vz\in[0,1]^M$. The point locations can be extracted in the same way as in the noisy case which we describe next.
\end{noiseless_case}

\begin{noisy_case}
In the noisy case we have $\vx\in[0,1]^M$. Since the distribution $p(d)$ in \eqref{eq:approx_dist} is an approximation to the oracle distribution $g(d)$, the global optimizer $\vx$ of \eqref{eq:constrained_nonconvex} would be a \emph{perturbed} version of the ground truth noisy signal. For the solution $\vz$, we interpret its $m$-th entry $z_m$ as the \emph{estimated} probability that a point is located at the $m$-th segment $l_m$. 
Extracting $N$ point locations from $\vz$ can be posed as a clustering problem. Each line segment $l_m$ is viewed as a cluster with the weight $z_m$. We cluster the $M$ segments using the agglomerative clustering approach  \cite{Rokach2005}; the pseudocode and illustrations are given in the Supplementary Material. The centroids of the $N$ clusters with the largest weights are taken as the estimated point locations.
\end{noisy_case}

\begin{algorithm}[tbp]
\caption{Projected gradient descent}
\label{alg:pdp_pgd}
\begin{algorithmic}[1]
\REQUIRE adaptive rate $\phi\in(0,1)$, convergence threshold $\epsilon$
\STATE Compute the distribution $q_{\vz}(y)$ and the initializer $\vz_0$
\FOR{$t=\{0,1,\cdots,T\}$}
    \WHILE{true} 
    \STATE Compute the update $\vz_{t+1}=\mathscr{P}_\mathcal{S}\big(\vz_t-\eta\cdot\nabla f(\vz_t)\big)$ 
    \IF {$f(\vz_{t+1})\leq f(\vz_t)$}
        \STATE Increase the step size $\eta=\frac{1}{\phi}\cdot\eta$ and \textbf{break}
    \ELSE
        \STATE Decrease the step size $\eta=\phi\cdot\eta$
    \ENDIF
    \ENDWHILE
    \IF {$\frac{\|\vz_{t+1}-\vz_t\|_2}{\|\vz_t\|_2}<\epsilon$}
        \STATE Convergence is reached, set $\vz=\vz_{t+1}$ and \textbf{break}
    \ENDIF
\ENDFOR

\STATE {\bfseries Return} $\vz$
\end{algorithmic}
\end{algorithm}

\subsection{Projected Gradient Descent}
Let $\mathcal{S}$ denote the convex set defined by the constraints in \eqref{eq:constrained_nonconvex}:
\begin{align}
\label{eq:convex_set}
    \mathcal{S} = \left\{\vz\ |\ 0\leq z_m\leq 1\textnormal{ and }\sum_{m=1}^Mz_m=N\right\}\,.
\end{align}
Given a proper initialization $\vz_0$, we propose to solve \eqref{eq:constrained_nonconvex} via the projected gradient descent method:
\begin{align}
\label{eq:pgd_update}
\vz_{t+1} = \mathscr{P}_\mathcal{S}\big(\vz_t-\eta\cdot\nabla f(\vz_t)\big)\,,
\end{align}
where $\eta>0$ is the step size, $\mathscr{P}_\mathcal{S}(\cdot)$ is the projection of the gradient descent update onto $\mathcal{S}$, and $\nabla f(\vz_t)$ is the gradient
\[
\nabla f(\vz_t) = \frac{2}{MK^2}\sum_{y=0}^{M-1}\left(\vz_t^T\mA_y\vz_t-\vx^T\mA_y\vx\right)\cdot\left(\mA_y+\mA_y^T\right)\vz_t\,,
\]
where both $q_{\vz}(y)$ and $p(y)$ are replaced with their quadratic forms in \eqref{eq:quad_form_p} and \eqref{eq:quad_form_q}. An adaptive strategy can be used to determine some suitable step size $\eta>0$ to minimize the objective function. The distance distribution matching approach is finally summarized by Algorithm \ref{alg:pdp_pgd}.

A suitable initialization is needed to solve the constrained noncovex problem in \eqref{eq:constrained_nonconvex} via projected gradient descent. We explore two initialization strategies: a simple random initialization, and a spectral initialization derived from the quadratic structure of the turnpike problem.

\subsubsection{Spectral Initialization}
\label{subsec:spec_init}
Here we can borrow an idea from another problem with quadratic measurements, the phase retrieval problem \cite{Gerchberg:72,Fienup:82}. In phase retrieval, the task is to compute a complex signal $\vx_{\mathbb{C}}\in\mathbb{C}^M$ from its quadratic measurements of the form $\psi_i = |\langle\vx_{\mathbb{C}}, \va_i\rangle|^2$ for $1 \leq i \leq I$. Since $\psi_i = \vx_{\mathbb{C}}^* \va_i \va_i^* \vx_{\mathbb{C}}$, spectral initialization for phase retrieval is based on a weighted sum of the rank-1 measurement matrices $\va_i \va_i^*$. Namely, using matrix concentration results, Netrapalli et al. \cite{Netrapalli2015:RPAM} showed that the leading eigenvector of $\sum_{i=1}^I \psi_i \va_i \va_i^*$ is close to the true $\vx_{\mathbb{C}}$. Similar arguments can be used for quadratic systems of full-rank random matrices \cite{QuadFR:2019}. 

In our formulation of the turnpike problem \eqref{eq:quad_form_p}, the rank-1 matrices $\va_i \va_i^*$ are replaced by $\mA_y$ which are not necessarily PSD nor rank-1; they are also deterministic. Notwithstanding, we can use the spectral initialization strategy. As we shall see from the numerical experiments in Section \ref{sec:exp}, this strategy works well empirically, although a rigorous proof remains an open question.

One way to interpret the spectral initialization in our case is via a subspace projection. Let $\mH_y=\frac{\mA_y}{\|\mA_y\|_F}$, with $\norm{\,\cdot\,}_F$ being the Frobenius norm. We can rewrite \eqref{eq:quad_form_p} as 
\begin{align}
\psi_y=\frac{p(y)\cdot K}{\|\mA_y\|_F}=\vx^T\mH_y\vx=\langle\mH_y,\ \vx\vx^T\rangle =: \langle \mH_y,\ \mX\rangle\,.
\end{align}

The set $\left\{\mH_y,\ 0\leq y\leq M-1\right\}$ can be viewed as an orthonormal basis for the matrix subspace $\mathrm{span} \, \set{\mH_1, \ldots \mH_{M-1}}$,
\begin{align}
    \langle \mH_i,\ \mH_j \rangle=\left\{\begin{array}{l}
    1\\
    0
    \end{array}
    \quad
    \begin{array}{l}
    \textnormal{if }i=j\\
    \textnormal{if }i\neq j\,
    \end{array}.
    \right.
\end{align}
With this interpretation, $\psi_y$ becomes an expansion coefficient of $\mX$ in the direction of $\mH_y$. Least-squares estimate of $\mX$ is
\begin{align}
\label{eq:spec_init_lsq}
    \widehat{\mX}=\sum_{y=0}^{M-1}\psi_y\cdot\mH_y\,,
\end{align}
which is nothing but the orthogonal projection of $\mX$ on the subspace spanned by the $\mH_y$. Finally, we find the spectral initializer $\vz_0$ so that $\vz_0 \vz_0^T$ is close to $\widehat{\mX}$ in Frobenius norm subject to the constraint that $\|\vz_0\|^2_2=N$. Let the spectral initializer $\vz_0=\sqrt{N}\ve_{\max}$, where $\|\ve_{\max}\|_2=1$. We have
\begin{align}
\begin{split}
\ve_{\max}&=\argmin_{\ve:\|\ve\|_2=1}\ \|\widehat{\mX}-N\ve\ve^T\|_F^2=\argmax_{\ve:\|\ve\|_2=1}\ \ve^T\widehat{\mX}\ve\,.
\end{split}
\end{align}
\begin{itemize}
\item When $\widehat{\mX}$ is symmetric, $\ve_{\max}$ is the leading singular vector of $\widehat{\mX}$ that corresponds to the largest singular value. 
\item When $\widehat{\mX}$ is not symmetric, we use the method of Lagrange multipliers and find the stationary points of the Lagrangian $\mathcal{L}(\ve,\mu)=\ve^T\widehat{\mX}\ve-\mu(\ve^T\ve-1)$. Setting the gradients to $0$, we have
\begin{align}
    (\widehat{\mX}+\widehat{\mX}^T)\ve&=2\mu\ve \quad \textnormal{ and } \quad \ve^T\ve=1\,.
\end{align}
The stationary points are given by the eigenvectors of $\widehat{\mX}+\widehat{\mX}^T$, with $\ve_{\max}$ being the one that corresponds to the largest eigenvalue. We can find it via the power iteration.
\end{itemize}

\begin{figure}[tbp]
\centering
\includegraphics[width=3in]{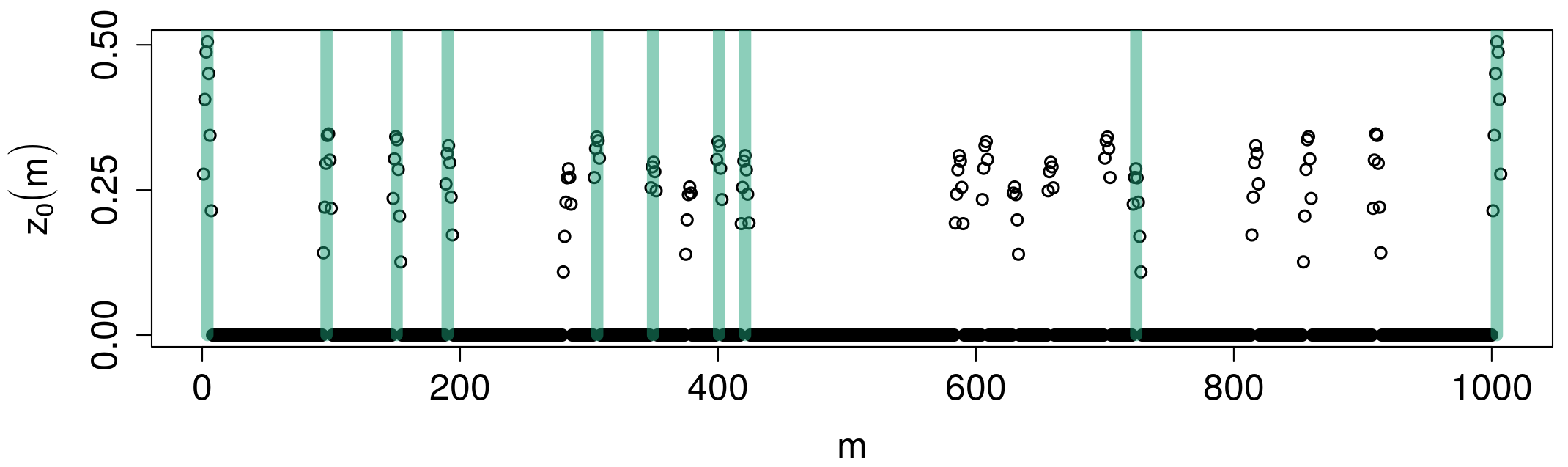}
\caption{The spectral initializer $\vz_0$ for a configuration with $N=10$ points.}
\label{fig:spec_init}
\end{figure}

Fig. \ref{fig:spec_init} shows the spectral initializer $\vz_0$ for a point configuration with $N=10$ points uniformly sampled from $[0,1]$. The 1D domain is discretized with a quantization step $\lambda=1e^{-3}$, producing $M=1e^3$ possible point locations. The true point locations are illustrated by vertical lines. We can see that the entries corresponding to the neighbourhood of the true point locations have larger values, indicating that those locations have higher probabilities in being the actual point locations.

\subsubsection{Efficient Projection onto the Simplex with Box Constraints}
As shown in Fig. \ref{fig:projection}, the gradient descent update $\overline{\vz}=\vz-\eta\nabla f(\vz)$ is projected back onto the convex set $\mathcal{S}$ in \eqref{eq:convex_set}, which is a simplex with box constraints. The projection is the solution to the following convex problem
\begin{align}
\label{eq:projection_box_constraints}
\begin{split}
\min_\vs\quad&\frac{1}{2}\|\vs-\overline{\vz}\|_2^2\\
\textnormal{subject to}\quad&0\leq s_m\leq 1,\ \forall\ m\in\{1,\cdots,M\}\\
&\sum_{m=1}^Ms_m=N\,.
\end{split}
\end{align}
Duchi et al. \cite{projection06,Duchi:2008} proposed an efficient algorithm to compute the projection onto the $l_1$-ball when $s_m$ is only lower-bounded by $0$. Gupta et al. \cite{l1_box10, l1_box12} later extended it to handle projections with box constraints, when $s_m$ is both lower-bounded and upper-bounded. However, their approach is based on a sequential search for an optimal threshold $\kappa$, which is inefficient and cannot be parallelized for large-scale problems. Building on the work of \cite{projection06}, we address these issues by deriving a closed-form expression for the optimal $\kappa$ in \eqref{eq:kappa_compute} in terms of the entry index $r$ of a sorted $\vs$.

\begin{figure}[tbp]
\centering
\includegraphics[width=1.6in]{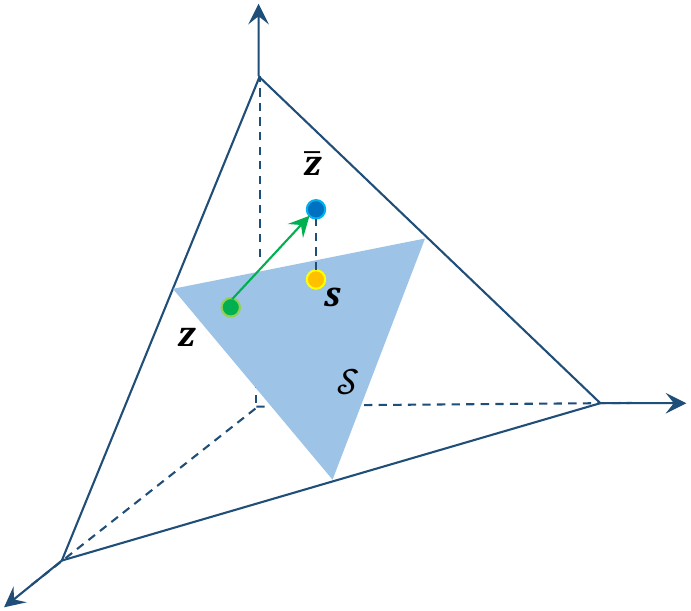}
\caption{ The gradient descent update $\overline{\vz}=\vz-\eta \nabla f(\vz)$ is projected back to the convex set $\mathcal{S}$.}
\label{fig:projection}
\end{figure}

Specifically, the Lagrangian of \eqref{eq:projection_box_constraints} is
\[
\mathcal{L}=\frac{1}{2}\|\vs-\overline{\vz}\|_2^2+\kappa\left(\sum_{m=1}^Ms_m-N\right)-{\boldsymbol\zeta}^T\vs+{\boldsymbol\xi}^T(\vs-\boldsymbol 1)\,,
\]
where $\kappa\in\mathbb{R}$ is a real Lagrange multiplier, ${\boldsymbol\zeta}\in\mathbb{R}_+^M$, ${\boldsymbol\xi}\in\mathbb{R}_+^M$ are the nonnegative Lagrange multipliers. Taking the subgradient of $\mathcal{L}$ w.r.t. $\vs$, and setting it to $0$, we have
\begin{align}
\nabla_{s_m}\mathcal{L}=s_m-\overline{z}_m+\kappa-\zeta_m+\xi_m=0\,.
\end{align}
Since $\mathcal{S}$ is a closed convex set, the projection solution $\vs$ exists and is unique. We need to consider the following two cases.

\emph{a)} If the solution $\vs$ contains only zero or one entries, there are $N$ entries in $\vs$ that equal $1$ and their indices correspond to the top $N$ entries of $\overline{\vz}$. 

\emph{b)} If at least one entry of $\vs$ is in $(0,1)$, the complementary slackness KKT condition indicates that when $0< s_m < 1$, the Lagrange multipliers are $\zeta_m=\xi_m=0$. We then have
\begin{align}
\label{eq:project_one}
s_m=\overline{z}_m-\kappa\quad\quad\textnormal{if } 0< s_m < 1\,.
\end{align}
The above \eqref{eq:project_one} gives us an efficient way to compute $s_m$ if it happens to be between $0$ and $1$: simply subtract the threshold $\kappa$ from $\overline{z}_m$. In order to find the optimal solution $\vs$, we need to compute $\kappa$ and identify the three types of entries of $\vs$: those that equal $0$, those that equal $1$, and those that are between $0$ and $1$. We will make use of the following lemma from \cite{projection06} about the entries of $\vs$ that equal $0$:
\begin{lemma}[Lemma 2, \cite{projection06}]
\label{lemma:lower_bound}
Let $\vs$ be the optimal solution to the minimization problem in (\ref{eq:projection_box_constraints}). Let $i$ and $j$ be two indices such that $\overline{z}_i>\overline{z}_j$. If $s_i=0$ then $s_j$ must be $0$ as well.
\end{lemma}
Similarly, we can prove the following lemma about the entries of $\vs$ that equal $1$ (proved in Appendix \ref{proof:lemma:upper_bound}).
\begin{lemma}
\label{lemma:upper_bound}
Let $\vs$ be the optimal solution to the minimization problem in (\ref{eq:projection_box_constraints}). Let $i$ and $j$ be two indices such that $\overline{z}_i>\overline{z}_j$. If $s_j=1$ then $s_i$ must be $1$ as well.
\end{lemma}

Since reordering of the entries of $\overline{\vz}$ does not change the value of (\ref{eq:projection_box_constraints}), and adding some constant to $\overline{\vz}$ does not change the solution of (\ref{eq:projection_box_constraints}), without loss of generality we can assume that the entries of $\overline{\vz}$ are all positive in a non-increasing order: $\overline{z}_1\geq \overline{z}_2\geq\cdots\geq \overline{z}_M\geq N$. Lemma \ref{lemma:lower_bound} and \ref{lemma:upper_bound} imply that for the optimal solution $\vs$:
\begin{itemize}
\item The entries of $\vs$ are in a non-increasing order.
\item The first $\rho$ entries of $\vs$ satisfy $0<s_m\leq 1$; the rest of the entries are $0$s.
\end{itemize}
Since there exists $s_m\in(0,1)$, we have $\rho>N$ so that at most $N-1$ entries of $\vs$ could equal $1$. Suppose the first $r-1$ entries of $\vs$ are all $1$s. The following must hold for $1\leq r\leq N<\rho$
\begin{align}
\label{eq:cst_r}
&0<z_r-\kappa< 1\\
\label{eq:cst_rm1}
&1\leq z_{r-1}-\kappa,\quad\textnormal{if }2\leq r\leq N<\rho\,.
\end{align}

\begin{algorithm}[tbp]
\caption{Projection onto the simplex with box constraints}
\label{alg:ep_box}
\begin{algorithmic}[1]
\STATE Shift $\overline{\vz}$ s.t. $\overline{z}_m\geq N$, $\forall\ m\in\set{1,\cdots,M}$; and sort $\overline{\vz}$ in a non-increasing order.
\FOR{$r=1:N$}
    \STATE Construct $\vv$ out of $\overline{\vz}$ by removing the first $r-1$ entries
    \STATE Let $N_r=N-r+1$. Compute $\rho_v$ according to \cite{Duchi:2008}: $\rho_v=\max\left\{l\in[N_r]\,:\, v_l-\textstyle\frac{1}{l}\big(\textstyle\sum_{m=1}^lv_m-N_r\big)>0\right\} $

    \STATE Compute $\kappa_v=\frac{1}{\rho_v}\left(\sum_{m=1}^{\rho_v}v_m-(N-r+1)\right)$
    \STATE Check if $(\rho_v, \kappa_v)$ satisfy \eqref{eq:cst_r} via $\widehat{s}_r=z_r-\kappa_v$
    \IF{$0<\widehat{s}_r< 1$}
        \IF{$r=1$}
            \STATE Set $\kappa=\kappa_v$, $\rho=\rho_v+r-1$ and \textbf{break}
        \ELSE
            \STATE Check if $(\rho_v, \kappa_v)$ satisfy \eqref{eq:cst_rm1} via $\widehat{s}_{r-1}=z_{r-1}-\kappa_v$
            \IF{$\widehat{s}_{r-1}\geq 1$}
                \STATE Set $\kappa=\kappa_v$, $\rho=\rho_v+r-1$ and \textbf{break}
            \ENDIF
        \ENDIF
    \ELSE
        \STATE \textbf{continue}
    \ENDIF  
\ENDFOR
\IF{$(r,\rho,\kappa)$ can be found}
\STATE Compute $\vs=\max\{\overline{\vz}-\kappa, 0\}$ and $\vs=\min\{\vs,1\}$
\ELSE
\STATE Compute $\vs$ by setting the top $N$ entries of $\overline{\vz}$ to $1$ and the rest entries to $0$
\ENDIF
\STATE {\bfseries Return} $\vs$
\end{algorithmic}
\end{algorithm}

We can write the sum of $\vs$ as 
$
\sum_{m=1}^Ms_m=\sum_{m=1}^\rho s_m=(r-1)+\sum_{m=r}^\rho (\overline{z}_m-\kappa)=N
$,
which gives
\begin{align}
\label{eq:kappa_compute}
\kappa=\frac{1}{\rho-r+1}\left(\sum_{m=r}^\rho \overline{z}_m-(N-r+1)\right)\,.
\end{align}
Finally, we can write the minimizing $\vs$ as
\begin{align}
\label{eq:min_s}
\vs=\left\{
\begin{array}{l}
1, \\
\overline{z}_m-\kappa, \\
0,
\end{array} \quad 
\begin{array}{l}
\textnormal{if }m\leq r-1\\
\textnormal{if }r\leq m\leq \rho\\
\textnormal{if }\rho+1\leq m\leq M\,.
\end{array}
\right.
\end{align}

If $r$ is known, we can find the value of $\rho$ efficiently using the approach in \cite{projection06,Duchi:2008}, and thus identify the three types of entries in $\vs$. The threshold $\kappa$ and the solution $\vs$ can be computed using \eqref{eq:kappa_compute} and \eqref{eq:min_s}. According to the following lemma (proved in Appendix \ref{proof:lemma:unique_r}), we can find $r$ by checking the integers in $\set{1,\ldots,N}$ one by one or in parallel until the computed $(\rho, \kappa)$ satisfy the two constraints \eqref{eq:cst_r} and \eqref{eq:cst_rm1}.
\begin{lemma}
\label{LEMMA:UNIQUE_R}
If the solution $\vs$ has at least one entry $s_m\in(0,1)$, there is one and only one $r\in\set{1,\ldots,N}$ that produces the $(\rho,\kappa)$ satisfying \eqref{eq:cst_r} and \eqref{eq:cst_rm1}.
\end{lemma}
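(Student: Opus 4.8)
The plan is to prove existence and uniqueness separately, exploiting the fact that the projection $\vs$ onto the closed convex set $\mathcal{S}$ is unique and is completely characterized by the KKT conditions of \eqref{eq:projection_box_constrains}. Throughout I work in the normalized setting already fixed above, in which $\vz$ has been shifted and sorted so that $z_1\geq\cdots\geq z_M\geq N$, and I use the standing hypothesis that the optimum has an entry in $(0,1)$. By Lemma \ref{lemma:lower_bound} and Lemma \ref{lemma:upper_bound} the optimal $\vs$ then has the layered shape of \eqref{eq:min_s}: a block of $r^\ast-1$ leading ones, a nonempty middle block of entries in $(0,1)$ on indices $r^\ast,\dots,\rho$, and zeros afterwards, with $1\leq r^\ast\leq N<\rho$. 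The goal is to show that the integer $r^\ast$ read off from this true solution is the only one the algorithm can certify.

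For existence I would verify that $r=r^\ast$ passes both tests. Fixing the first $r^\ast-1$ coordinates to $1$, the residual problem is the projection of the tail $(z_{r^\ast},\dots,z_M)$ onto the intersection of the box with the scaled simplex of mass $N-(r^\ast-1)$. Because by definition of $r^\ast$ the optimal tail entries are all strictly below $1$, every upper-bound constraint on the tail is inactive at the optimum, so this residual projection coincides with the pure simplex projection solved by Duchi et al.\ \cite{projection06,Duchi:2008}. Hence the pair $(\rho_v,\kappa_v)$ returned there satisfies $\kappa_v=\kappa$ and $\rho_v=\rho-(r^\ast-1)$, reproducing the true threshold. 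The bound $z_{r^\ast}-\kappa<1$ then merely records $s_{r^\ast}<1$, the bound $z_{r^\ast}-\kappa>0$ records $r^\ast\leq\rho$, and, when $r^\ast\geq 2$, the stationarity relation for the saturated coordinate $s_{r^\ast-1}=1$ forces $z_{r^\ast-1}-\kappa\geq 1$; these are exactly \eqref{eq:cst_r} and \eqref{eq:cst_rm1}.

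For uniqueness I would run the implication in reverse: any candidate $r$ that passes both tests reconstructs, through \eqref{eq:kappa_compute} and \eqref{eq:min_s}, a feasible vector $\vs^{(r)}$ satisfying the entire KKT system of \eqref{eq:projection_box_constrains}. Setting $\xi_m=z_m-\kappa-1\geq 0$ on the leading ones (nonnegative by \eqref{eq:cst_rm1} together with $z_1\geq\cdots\geq z_{r-1}$), $\zeta_m=\xi_m=0$ on the middle block (where the positivity property of Duchi's $\rho_v$ gives $z_m-\kappa>0$ and the monotonicity $z_m\leq z_r$ with \eqref{eq:cst_r} gives $z_m-\kappa<1$), and $\zeta_m=\kappa-z_m\geq 0$ on the tail (nonnegative precisely because those indices lie beyond $\rho_v$), one checks the stationarity equation $s_m-z_m+\kappa-\zeta_m+\xi_m=0$ and complementary slackness coordinatewise, while \eqref{eq:kappa_compute} enforces $\|\vs^{(r)}\|_1=N$. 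Thus $\vs^{(r)}$ equals the unique optimum, so any two admissible indices $r_1<r_2$ would yield the same vector; but that vector must equal $1$ at coordinate $r_1$ (since $r_1\leq r_2-1$ in the construction for $r_2$) and simultaneously lie in $(0,1)$ there (by \eqref{eq:cst_r} applied at $r_1$), a contradiction. Hence the admissible $r$ is unique.

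I expect the main obstacle to be the existence half, and specifically the reduction of the residual tail problem to a plain simplex projection. One must argue carefully that the inactivity of the upper bounds at the optimum makes the box-constrained tail projection and the simplex projection of \cite{projection06,Duchi:2008} share the same minimizer, and therefore the same certificate $(\rho,\kappa)$, not merely the same optimal objective value; this is what licenses identifying the algorithm's $(\rho_v,\kappa_v)$ with the true threshold. Once that identification is in place, the uniqueness half is a routine KKT verification driven entirely by the thresholding and positivity properties of Duchi's $\rho_v$.
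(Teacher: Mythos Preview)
Your proposal is correct, and it takes a genuinely different route from the paper, especially on the uniqueness half.

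For existence, the paper simply invokes existence and uniqueness of the projection onto the closed convex set $\mathcal{S}$ and asserts that this yields an admissible $r$; it does not spell out why the Duchi subroutine, when run on the tail starting at that $r$, must return the true threshold $\kappa$. Your reduction---fixing the leading $r^\ast-1$ ones, observing that the upper box constraints are inactive on the tail, and hence identifying the residual box-simplex projection with the plain simplex projection of \cite{projection06,Duchi:2008}---actually fills that gap and makes the existence argument rigorous.

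For uniqueness, the paper argues directly on the arithmetic of the thresholds: from $r_1<r_2$ it extracts $\kappa_2<\kappa_1$ and $\rho_2\geq\rho_1$, and then by a short case split ($r_2\leq\rho_1$ versus $r_2>\rho_1$) shows $\|\vs_1\|_1<\|\vs_2\|_1$, contradicting the common mass $N$. Your argument instead shows that any admissible $r$ certifies a full KKT point, hence reconstructs the unique optimum $\vs$; two admissible $r_1<r_2$ would then force $s_{r_1}=1$ (from the $r_2$-description) and $s_{r_1}<1$ (from \eqref{eq:cst_r} at $r_1$). The paper's argument is more elementary---no optimality theory is used beyond the setup---while yours is more structural and avoids the case analysis, at the price of a coordinatewise KKT verification. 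Both are valid; yours has the advantage that the KKT check simultaneously justifies the correctness of the whole thresholding formula \eqref{eq:min_s}, not just the uniqueness of $r$.
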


In practice we do not know beforehand what the solution $\vs$ is like. Given the uniqueness of the solution, we could look for the right $(r,\rho,\kappa)$-values. If they can be found, $\vs$ can then be computed using \eqref{eq:min_s}. Otherwise, it must be that $\vs$ contains only $0-1$ entries and can be obtained straightforwardly. The proposed projection method is summarized in Algorithm \ref{alg:ep_box}, and it requires $\mathcal{O}(N^2)$ operations.
\vspace{0.5em}

After the two anchor points $u_1,u_N$ corresponding to the largest pairwise distance are located at the two ends of the 1D discretized domain, not all the $M$ possible locations are consistent with the distance multiset $\mathcal{D}$. For example, if $l_i$ is a candidate location in the noiseless case, the two distances from $l_i$ to $u_1$ and $u_N$ must belong to $\mathcal{D}$, i.e., $\{d_{i1},d_{iN}\}\in\mathcal{D}$. We can prune the 1D domain by removing the locations that do not satisfy this condition. The pruning process in the noisy case can be performed in a similar fashion. Let $\overline{M}$ denote the number of candidate locations after pruning. The total complexity of the distance distribution matching is at most $\mathcal{O}(M^2)$, and it can be further reduced to $\mathcal{O}(\overline{M}^3)$ when $\tbinom{\overline{M}}{2}<M$.

\section{The Noisy Beltway Problem}
\label{sec:main_beltway}

We now demonstrate how the approach introduced in Section \ref{sec:main_turnpike} can be adapted to solve the noisy beltway problem:
\begin{problem}[Noisy Beltway]
Reconstruct the relative positions of $N$ points on a loop $\{u_1,u_2,\cdots,u_N\}$ from a multiset $\mathcal{B}$ of $N(N-1)$ unassigned noisy pairwise distances,
\[
\mathcal{B}=\left\{d_k=b_k+w_k,\ 1\leq k\leq N(N-1)\right\}\,,
\]
where $b_k = \norm{u_{i} - u_{j}}$ with $\set{i, j} = \mathscr{M}^{-1}(k)$, $i < j$, is the noiseless distance, $d_k$ is the measured noisy distance, and $w_k$ is the noise.
\end{problem}

\begin{figure}[tbp]
\centering
\includegraphics[height=1.2in]{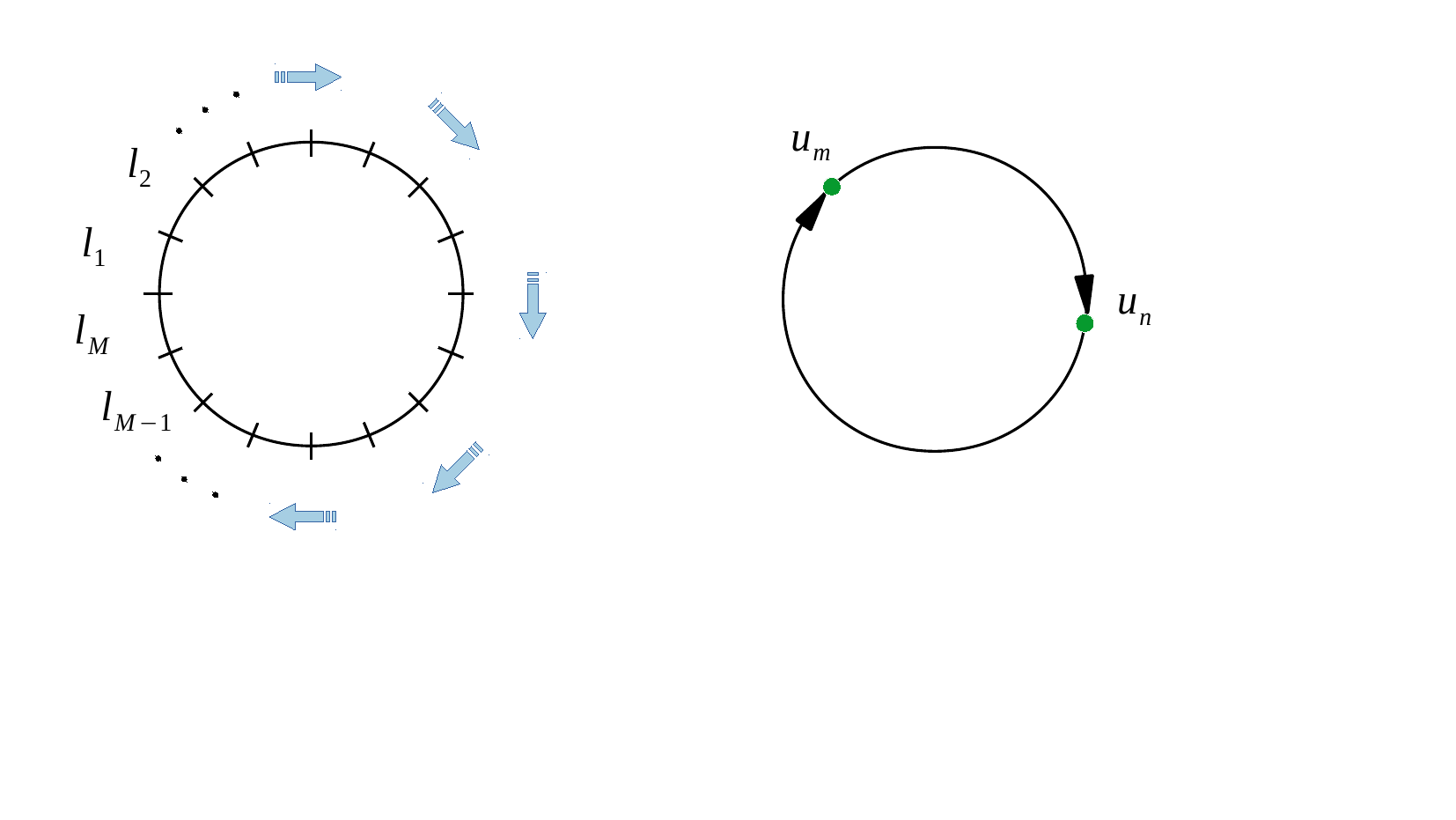}
\caption{In the beltway problem, the 1D domain is also discretized into $M$ segments $\set{l_1,\cdots,l_M}$. The distances are measured in the clockwise direction, and there are two distances associated with a pair of points $(u_m\neq u_n)$: $d(u_n\rightarrow u_m)$ and $d(u_m\rightarrow u_n)$.}
\label{fig:bw_discretization}
\end{figure}

Similarly as for the turnpike problem in Section \ref{sec:main_turnpike}, we augment $\mathcal{B}$ with $N$ zero self-distances. The total number of distances considered in the beltway problem is then $Z=N^2$. As shown in Fig. \ref{fig:bw_discretization}, the loop of length $L$ is also discretized into $M$ line segments $\set{l_1,\ldots,l_M}$. The point locations can be represented by a vector $\vx\in[0,1]^M$ where the $m$-th entry $x_m$ is the probability that a point is located at $l_m$. Compared to the turnpike problem, there are two distances measured in the clockwise direction associated with every pair of points $(u_m\neq u_n)$: the distance $d(u_m\rightarrow u_n)$ from $u_m$ to $u_n$ and the distance $d(u_n\rightarrow u_m)$ from $u_n$ to $u_m$ satisfy: $d(u_m\rightarrow u_n)+d(u_n\rightarrow u_m)=L$.

The quantized distance distribution $r(y)$ can again be written as a quadratic form in terms of $\vx$,
\begin{align}
\label{eq:quad_form_p_bw}
    r(y)=\frac{1}{Z}\sum_{i=1}^M\sum_{j=1}^M x_ix_j \delta\big(y_{i\rightarrow j}-y\big)=\frac{1}{Z}\vx^T\mR_y\vx\,,
\end{align}
where $y_{i\rightarrow j}$ is the quantized distance from $l_i$ to $l_j$, $\delta(\cdot)$ is the delta function, and $\mR_y\in\set{0,1}^{M\times M}$ is the measurement matrix whose $(i,j)$-th entry is given by
\begin{align}
\label{eq:measurement_matrix_bw}
    R_y(i,j)=\left\{
\begin{array}{l}
1\\
1\\
0
\end{array}
\quad
\begin{array}{l}
\textnormal{if } j-i=y,\textnormal{ and } i\leq j\\
\textnormal{if } M-(i-j)=y,\textnormal{ and } i>j\\
\textnormal{otherwise}\,.
\end{array}
\right.
\end{align}
We can see that $\mR_2$ is a circulant matrix. Note the differences between the formulations \eqref{eq:quad_form_p}-\eqref{eq:measurement_matrix} in the turnpike problem and the above \eqref{eq:quad_form_p_bw}-\eqref{eq:measurement_matrix_bw}. In the turnpike problem there is only one distance associated with a pair of segments $(l_i\neq l_j)$. The summation with respect to $j$ goes from $i$ to $M$ in \eqref{eq:quad_form_p}, producing the ``Toeplitz'' matrix $\mA_y$ defined by \eqref{eq:measurement_matrix}. On the other hand, in the beltway problem there are two distances associated with a pair of segments $(\l_i\neq l_j)$. The summation with respect to $j$ goes from $1$ to $M$ in \eqref{eq:quad_form_p_bw}, producing a different ``circulant'' matrix $\mR_y$ defined by \eqref{eq:measurement_matrix_bw}. The distance distributions $p(y)$, $r(y)$ are thus different in the two problems. 

Take as an example the case from Section \ref{sec:main_turnpike} with $N=3$ points $\set{u_1=1,\ u_2=3,\ u_3=5}$ and $\vx=[1\ 0\ 1\ 0\ 1]^T$. Suppose that the 3 points now lie on a loop. We can compute $g(y=2)=\vx^T\mR_2\vx$ as follows:
\[
g(y=2)=\frac{1}{9}\cdot\vx^T\left[\begin{array}{ccccc} 0 &0 &\colorbox{cyan_a!30}{1} &0 &0\\ 0 &0 &0 &\colorbox{cyan_a!30}{1} &0\\ 0 &0 &0 &0 &\colorbox{cyan_a!30}{1}\\ \colorbox{cyan_a!30}{1} &0 &0 &0 &0\\ 0 &\colorbox{cyan_a!30}{1} &0 &0 &0 \end{array}\right]\vx = \frac{2}{9}.
\]

For the noisy beltway problem, we propose to compute an estimate $\vz$ of the true $\vx$ by solving the following optimization problem analogous to the previous \eqref{eq:constrained_nonconvex}:
\begin{equation}
\tag{DDM-B}
\label{eq:constrained_nonconvex_bw}
\begin{aligned}
\min_{\vz}\quad &f(\vz)=\frac{1}{M}\sum_{y=0}^{M-1}\big(h_{\vz}(y)-r(y)\big)^2\\
\textnormal{subject to}\quad &0\leq z_m\leq 1,\ \forall\ m\in\set{1,\cdots,M}\\
&\sum_{m=1}^Mz_m = N\,.
\end{aligned}
\end{equation}

The spectral initialization can be adapted by computing the $\psi_y,\mH_y$ in \eqref{eq:spec_init_lsq} as $\psi_y=\frac{r(y)\cdot Z}{\|\mR_y\|_F}$ and $\mH_y=\frac{\mR_y}{\|\mR_y\|_F}$. We can also prune the locations that are not consistent with the distance multiset $\mathcal{B}$ from the 1D domain and get $\overline{M}$ candidate locations. The computational complexity to solve the above \eqref{eq:constrained_nonconvex_bw} is at most $\mathcal{O}(M^2)$, and it can be further reduced to $\mathcal{O}(\overline{M}^3)$ when $\overline{M}^2<M$.

\section{Analysis on Convergence and Difficulty of Recovery}
\label{sec:cvg}

In this section we first study convergence of our distance distribution matching approach in the neighbourhood $\mathcal{E}(\tau)$ of a global optimizer $\vx$ (see Fig. \ref{fig:cvg_nbhd}),
\begin{align}
\label{eq:cvg_neighbourhood}
    \mathcal{E}(\tau)=\set{\vz\ |\ \|\vz-\vx\|_2<\tau\,,\ \vz\in\mathcal{S}}\,.
\end{align}
We then evaluate the difficulty of recovery using the mutual information between the point and distance.

Unlike the phase retrieval problem whose optimization landscape could be made benign through the design of suitable measurement matrices or increasing the number of measurements, the turnpike and beltway problems are in a disadvantageous situation due to their deterministic measurement models: both the measurement matrices and the number of measurements are fixed. On the other hand, the deterministic nature of the turnpike and beltway problems allows us to study them using the Monte Carlo method.

\subsection{Analysis of the Turnpike Problem}
\label{subsec:turnpike_cvg}

\subsubsection{Convergence Analysis}

\begin{figure}[tbp]
\centering
\includegraphics[width=2in]{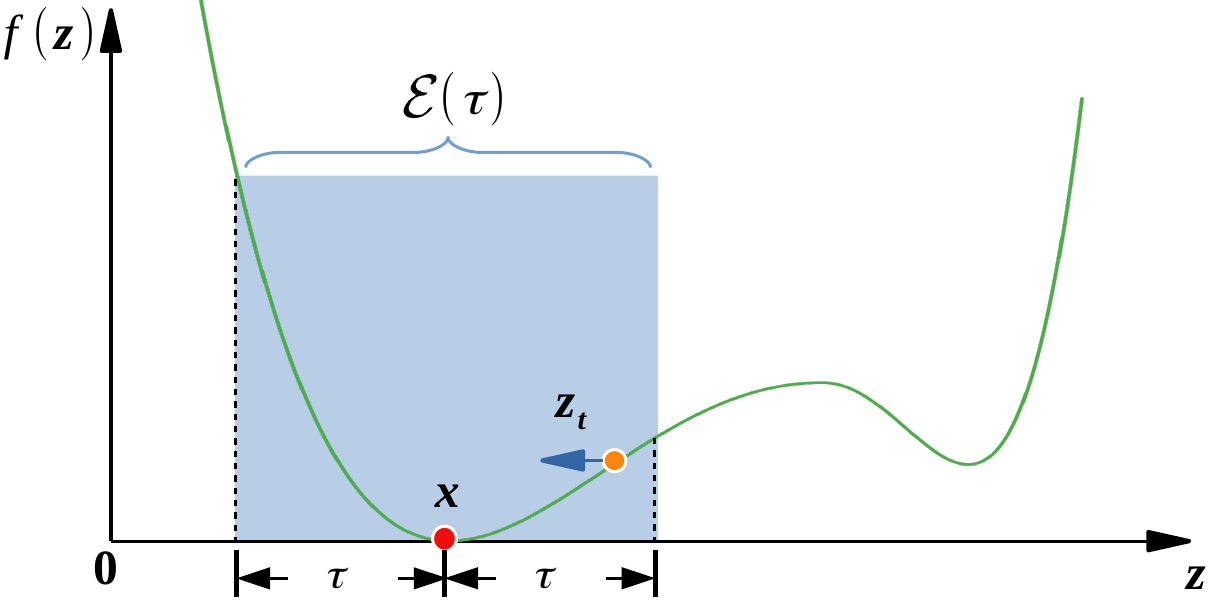}
\caption{When the solution $\vz_t$ reaches the convergence neighbourhood $\mathcal{E}(\tau)$ around a global optimum $\vx$, the projected gradient descent update \eqref{eq:pgd_update} converges linearly to $\vx$.}
\label{fig:cvg_nbhd}
\end{figure}

\begin{figure*}[tbp]
\centering
\subfigure{
\label{fig:dist_rep_1}
\includegraphics[height=0.6in]{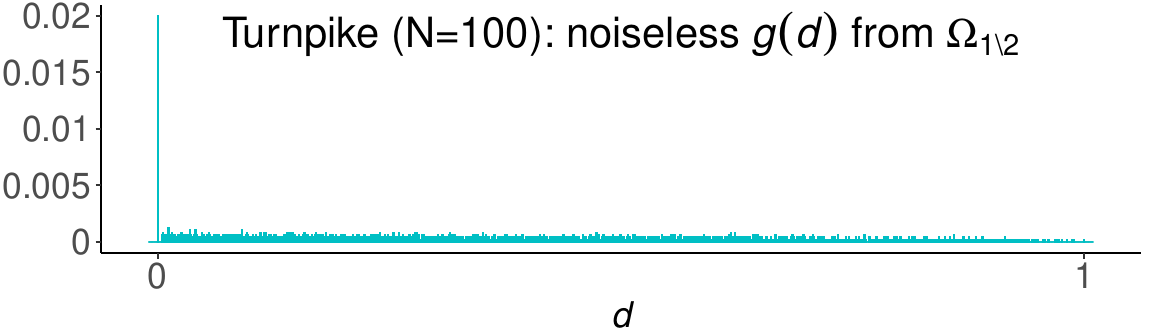}}
\subfigure{
\label{fig:dist_rep_10}
\includegraphics[height=0.6in]{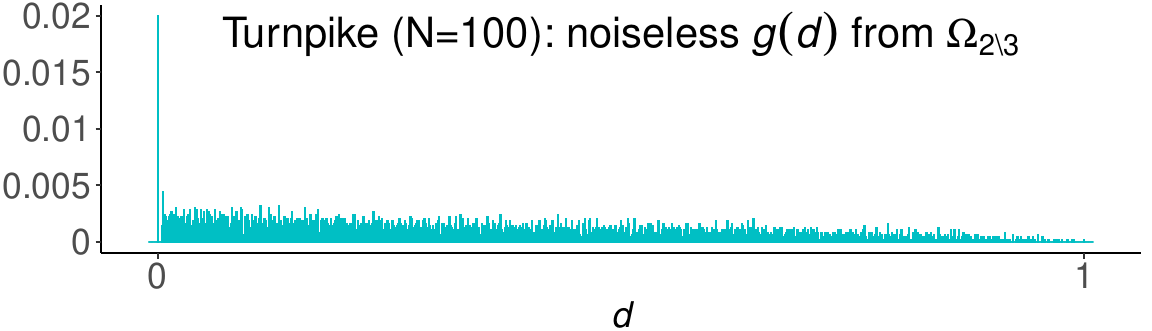}}
\subfigure{
\label{fig:dist_rep_50}
\includegraphics[height=0.6in]{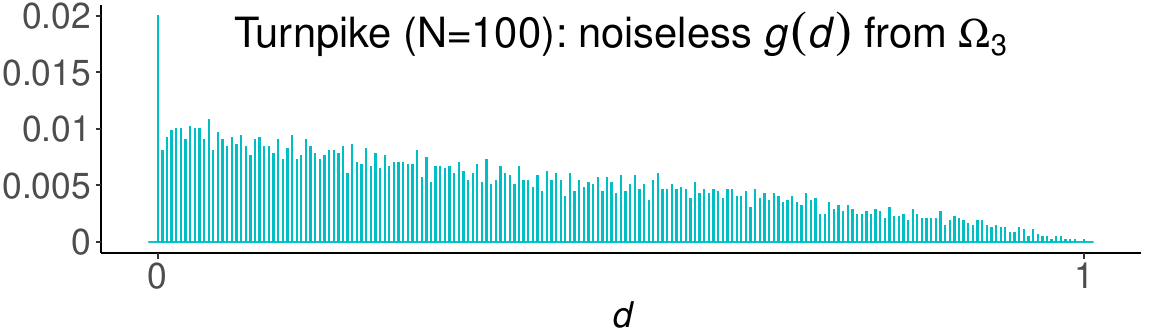}}
\caption{Given the quantization step $\lambda\in\Lambda=\{1e^{-4},1e^{-3},5e^{-3}\}$, the turnpike point configurations drawn from different sets $\Omega^{(N)}_{1\setminus 2},\Omega^{(N)}_{2\setminus 3},\Omega^{(N)}_3$ defined in \eqref{eq:omega_setminus} have different number of repeated distances in the noiseless measurements.}
\label{fig:dist_rep_noiseless}
\end{figure*}

\begin{figure*}[tbp]
\centering
\subfigure{
\label{fig:dist_1_0}
\includegraphics[height=1.7in]{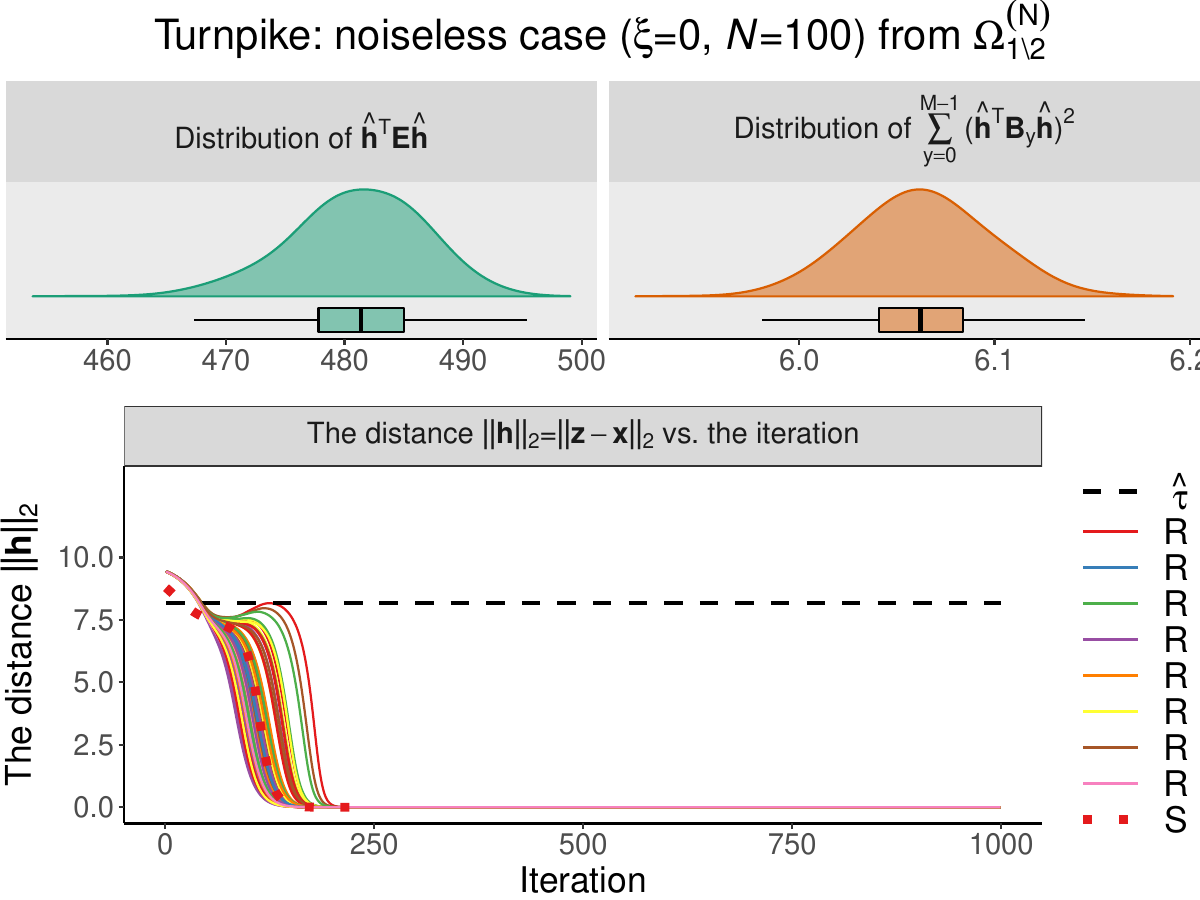}}
\subfigure{
\label{fig:dist_10_0}
\includegraphics[height=1.7in]{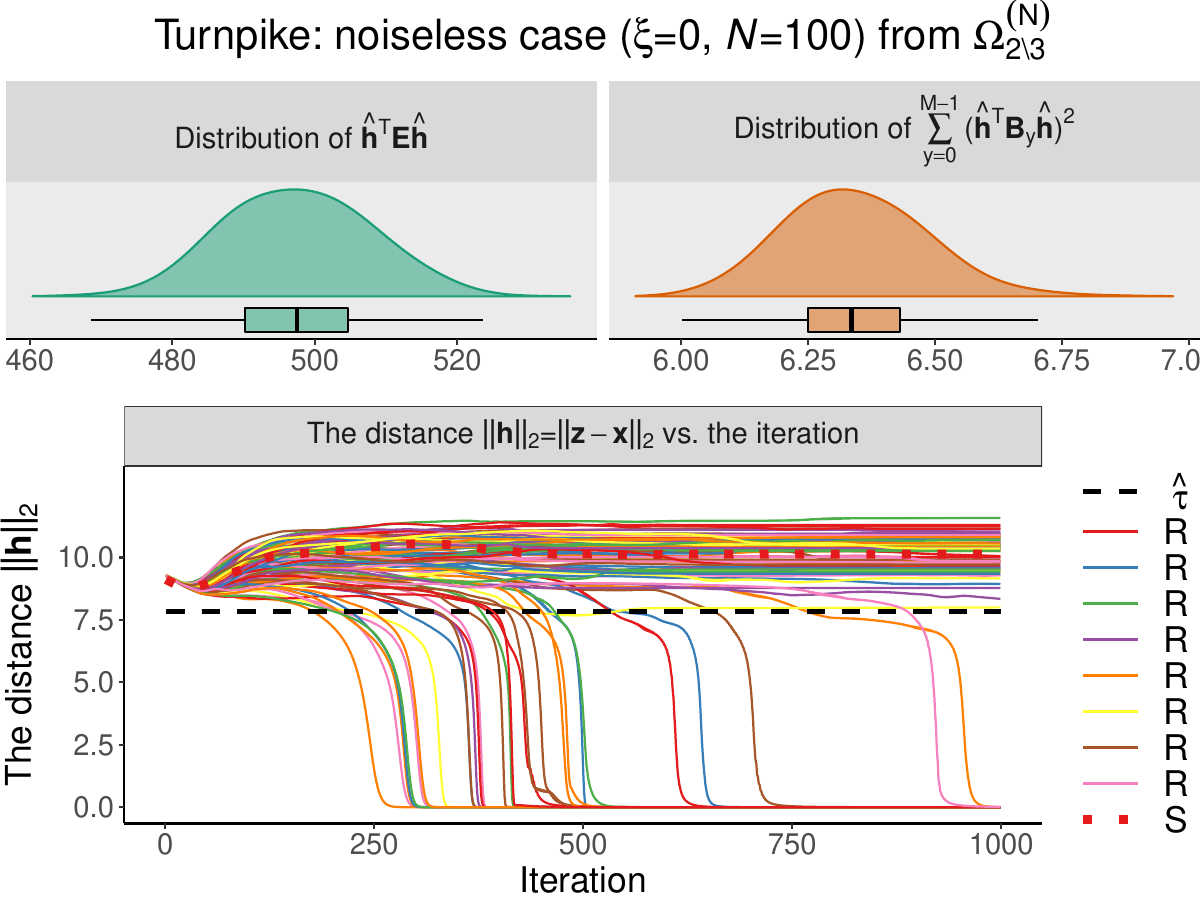}}
\subfigure{
\label{fig:dist_50_0}
\includegraphics[height=1.7in]{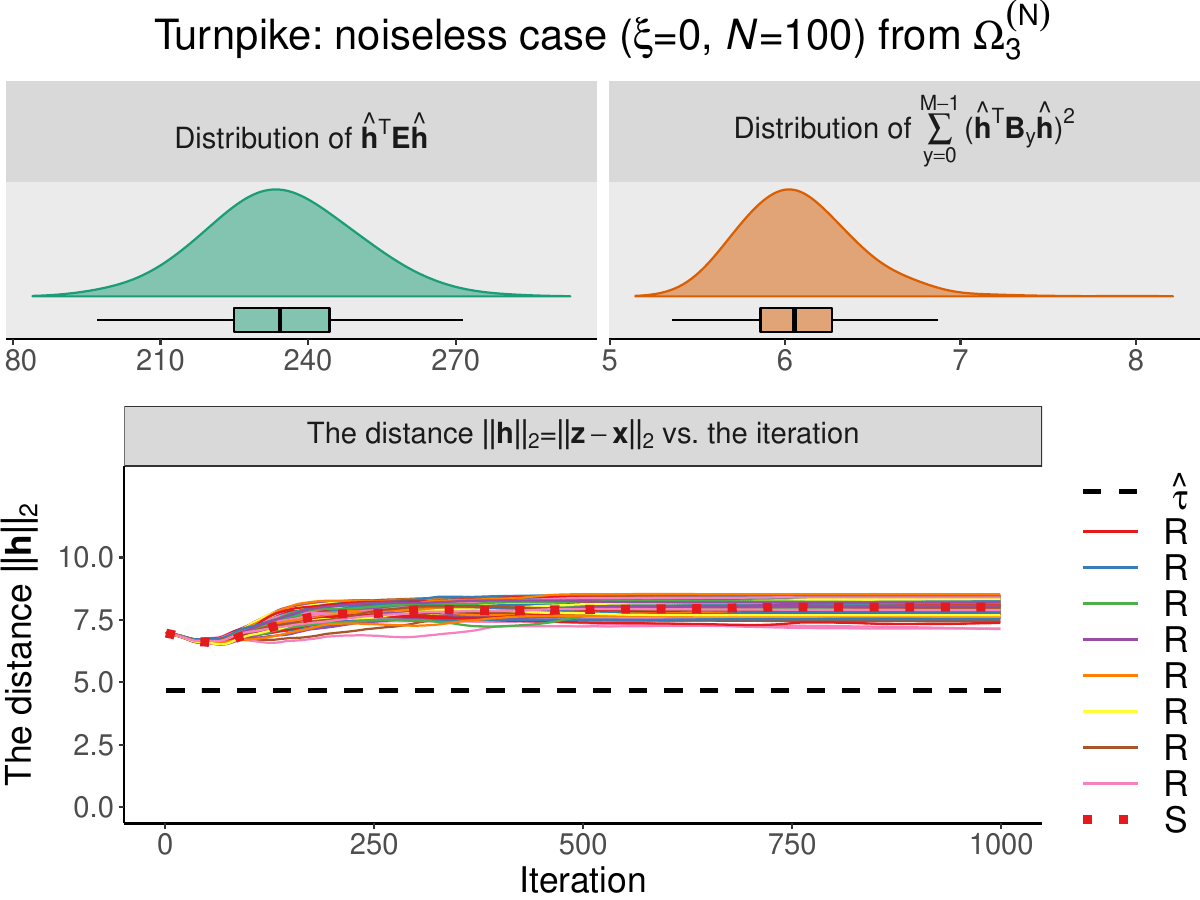}}
\caption{In the noiseless turnpike problem, the number of repeated distances in the measurements affects the performance of projected gradient descent from two aspects: 1) the radius of the empirical convergence neighbourhood $\mathcal{E}(\widehat{\tau})$; 2) the difficulty to reach $\mathcal{E}(\widehat{\tau})$ from outside.}
\label{fig:dist_monte_carlo_noiseless}
\end{figure*}

As illustrated in Fig. \ref{fig:cvg_nbhd}, when the distance between the solution $\vz_t$ and a global optimum $\vx$ is less than some $\tau>0$, i.e. $\|\vz_t-\vx\|_2<\tau$, we would like to show that the projected gradient descent update in \eqref{eq:pgd_update} converges linearly to a global optimizer $\vx$. The convergence neighbourhood $\mathcal{E}(\tau)$ in \eqref{eq:cvg_neighbourhood} is characterized by the regularity condition $RC(\alpha,\beta,\tau)$ of the objective function $f(\vz)$ \cite{WF:2015}: For all $\vz\in\mathcal{E}(\tau)$,
\begin{align}
\label{eq:rc_condition}
    \langle\nabla f(\vz), \vz-\vx\rangle\geq \frac{1}{\alpha}\|\vz-\vx\|_2^2+\frac{1}{\beta}\|\nabla f(\vz)\|_2^2\,,
\end{align}
where $\alpha>0,\beta>0$ are some chosen constants. Let $\overline{\vz}_{t+k}$ denote the gradient descent update. The $RC(\alpha,\beta,\tau)$ in \eqref{eq:rc_condition} ensures $\overline{\vz}_{t+k}$ with a step size $\eta\in(0,\frac{2}{\beta}]$ converges linearly to $\vx$ once $\vz_t$ reaches $\mathcal{E}(\tau)$ \cite[Lemma 7.10]{WF:2015}:
\begin{align}
\label{eq:gd_closer}
    \left\|\overline{\vz}_{t+k}-\vx \right\|^2_2\leq\left(1-\frac{2\eta}{\alpha}\right)^k\cdot\|\vz_t-\vx\|^2_2.
\end{align}
We shall further extend the above \eqref{eq:gd_closer} to the projected gradient descent update $\vz_{t+k}=\mathscr{P}_\mathcal{S}(\overline{\vz}_{t+k})$.

\begin{noiseless_case}
The global optimizer $\vx\in\{0,1\}^M$. Here we make use of the following theorem:
\begin{theorem}
\label{THM:CVG_SED}
In the noiseless case, let $\vh=\vz-\vx$ and $\mB_y=\mA_y+\mA_y^T$. If $\vz$ satisfies
\begin{align}
\label{eq:absolute_radius}
\|\vh\|_2=\|\vz-\vx\|_2<\tau=\left(2-\frac{1}{\theta}\right)\cdot\sqrt{\frac{\mu_\mE}{4}}\,,
\end{align}
where $\theta\in\big(\frac{1}{2},1\big)$ is some fixed constant and $\mu_\mE>0$ depends on the matrix $\mE=\sum_{y=0}^{M-1}\mB_y\vx\vx^T\mB_y^T$, 
\begin{enumerate}
\item There exists a choice of $\{\alpha>0,\beta>0\}$ such that the regularity condition $RC(\alpha,\beta,\tau)$ holds.
\item Under this choice of parameters $\{\alpha,\beta,\tau\}$, if $\|\vh_t\|_2=\|\vz_t-\vx\|_2<\tau$ and the step size $\eta\in(0,\frac{2}{\beta}]$, the projected gradient descent update in \eqref{eq:pgd_update} converges linearly to $\vx$:
\begin{equation}
\label{eq:converge}
\|\vz_{t+k}-\vx\|_2^2<\left(1-\frac{2\eta}{\alpha}\right)^k\cdot\|\vz_t-\vx\|_2^2\,.
\end{equation}
\end{enumerate}
\end{theorem}

The proof is given in Appendix \ref{proof:thm:cvg_sed}. The radius $\tau$ of the convergence neighbourhood varies for different signals $\vx$. 

Let $\vh=\vz-\vx$ and $\overline{\vh}=\vh/\|\vh\|_1$. According to Lemma \ref{LEMMA:E_MIN} in Appendix \ref{proof:lemma:e_min}, $\mu_\mE$ can be computed via the convex program:
\begin{align}
\label{eq:convex-mu}
\mu_\mE = \min_{\vz\in\mathcal{S}, \vz\neq\vx}\,\frac{(\vz-\vx)^T\mE(\vz-\vx)}{\|\vz-\vx\|_1^2} = \min\limits_{\overline{\vh}\in\mathcal{G}}\,\sum_{y=0}^{M-1}\left(\overline{\vh}^T\mB_y\vx\right)^2\,,
\end{align}
where $\vz\in\mathcal{S}$, $\vz\neq\vx$, and $\mathcal{G}$ is the convex set defined in Lemma \ref{LEMMA:E_MIN}. Note that $\mu_\mE>0$ in the noiseless case. To see this, let us assume that $\mu_\mE=0$. We then have
\begin{align}
\label{eq:noiseless_lambda_E_zero}
(\vz-\vx)^T\mB_y\vx=0,\ \forall\ y\in\{0,\cdots,M-1\}.
\end{align}
Using $\mB_0=2\mI$, where $\mI$ is the identity matrix, we can get $\vz^T\vx=\vx^T\vx=N$. Since $\vz\in\mathcal{S}$ and $\vx$ is a binary vector containing exactly $N$ ones, the vector $\vz$ must equal $\vx$ to ensure $\vz^T\vx=N$. This is in contradiction with the assumption that $\vz\neq\vx$.\footnote{If $\vz=\vx$, then we already have a global optimal solution.} Hence $\mu_\mE\neq 0$. Since $\mE$ is a positive semidefinite matrix, we can get that $\mu_\mE>0$ and $\tau>0$ for all $\vh$. 

From \eqref{eq:second_term_lb1_two} in the proof of Theorem \ref{THM:CVG_SED}, we can see that finding an upper bound on $\sum_y(\vh^T\mB_y\vh)^2$ and a lower bound on $\vh^T\mE\vh$ is the key to verify that the regularity condition $RC(\alpha,\beta,\tau)$ holds. The absolute upper and lower bounds are given in \eqref{eq:ub_hDh} and \eqref{eq:lb_hdx}. However, together with \eqref{eq:convex-mu} these bounds lead to a pessimistic estimate of the convergence radius $\tau$ in \eqref{eq:absolute_radius}. For example, in the simulation experiments where there are $N=100$ points, we can get $\tau\approx0.5$. An empirical radius $\widehat{\tau}$ that does not rely on absolute bounds would be practically more useful to describe the algorithm's convergence behavior around $\vx$. To this end, we complement Theorem \ref{THM:CVG_SED} by estimates of the convergence radius computed numerically.

Let $\widehat{\vh}=\vh/\|\vh\|_2$. Note that $\widehat{\vh}$ is obtained by normalizing $\vh$ with the $l_2$-norm $\|\vh\|_2$. This is different from the previously defined $\overline{\vh}$ which is computed by normalizing $\vh$ with the $l_1$-norm $\|\vh\|_1$. Using the Monte Carlo simulations detailed in Appendix \ref{app:monte_carlo}, we can estimate the empirical upper and lower bounds $\nu_1>0,\nu_2>0$ so that the following two inequalities
\begin{align}
    \label{eq:upper_bd_hbh}
    \sum_{y=0}^{M-1}\big(\widehat{\vh}^T\mB_y\widehat{\vh}\big)^2&\leq\nu_1\\
    \label{eq:lower_bd_heh}
    \widehat{\vh}^T\mE\widehat{\vh}&\geq\nu_2\,,
\end{align}
hold with probability $P(\nu_1)$ and $P(\nu_2)$ respectively. The left- and right-hand sides of \eqref{eq:second_term_lb1_two} can then be bounded as
\begin{align}
    \sqrt{\sum_y\left(\vh^\textrm{T}\mB_y\vh\right)^2}&\leq\|\vh\|_2^2\cdot \sqrt{\nu_1}\\
    (2-\frac{1}{\theta})\sqrt{\sum_y\left(\vh^T\mB_y\vx\right)^2}&\geq\left(2-\frac{1}{\theta}\right)\|\vh\|_2\cdot \sqrt{\nu_2}\,,
\end{align}
with probability $P(\nu_1)$ and $P(\nu_2)$ respectively.
In order for \eqref{eq:second_term_lb1_two} to hold empirically, $\|\vh\|_2$ should satisfy
\begin{align}
    \|\vh\|_2<\widehat{\tau} = \left(2-\frac{1}{\theta}\right)\cdot\sqrt{\frac{\nu_2}{\nu_1}}\,,
\end{align}
where $\widehat{\tau}$ is the resulting empirical radius. As shown in Fig. \ref{fig:dist_monte_carlo_noiseless}, the distributions of $\widehat{\vh}^T\mE\widehat{\vh}$ and $\sum_{y=0}^{M-1}\big(\widehat{\vh}^T\mB_y\widehat{\vh}\big)^2$ can be used to find $\{\nu_1,\nu_2\}$ for different types of signals. We observe that \emph{the number of repeated distances} in noiseless measurements (see Fig. \ref{fig:dist_rep_noiseless}) affects the recovery performance from two aspects:
\begin{enumerate}
    \item The radius of the empirical convergence neighbourhood $\mathcal{E}(\widehat{\tau})$. The empirical radius $\widehat{\tau}$ decreases when there are more repeated distances in the noiseless measurements.
    \item The difficulty to reach $\mathcal{E}(\widehat{\tau})$ from outside. When the noiseless measurements contain many repeated distances, it is much more difficult to reach $\mathcal{E}(\widehat{\tau})$ compared to the case where there are few repeated distances.
\end{enumerate}

In the Monte Carlo experiments, we generate point configurations with varying numbers of repeated distances as follows:
\begin{itemize}[leftmargin=*]
    \item A continuous point configuration $\mathcal{U}_{\textnormal{con}}$ consists of $N=100$ points uniformly sampled in $[0,1]$ with the minimum pairwise distance $d_{\min}\geq 5e^{-3}$ and the maximum pairwise distance $d_{\max}=1$. $\mathcal{U}_{\textnormal{con}}$ has unique pairwise distances\footnote{Since the point locations are sampled continuously in $[0,1]$, the possibility that the point configuration produces repeated distances goes to zero.}.
    \item Using different quantization steps $\lambda\in\Lambda$:
    \begin{align}
    \label{eq:discretization_set}
    \Lambda=\{\lambda_1=1e^{-4},\lambda_2=1e^{-3},\lambda_3=5e^{-3}\}\,,
    \end{align}
    We can generate discrete point configuration $\mathcal{U}_{\textnormal{dis}}$ out of $\mathcal{U}_{\textnormal{con}}$. The set of discrete point configurations can be defined as:
    \begin{definition}
    $\Omega^{(N)}(\lambda)$ is the finite set that contains all possible $N$-point configurations in the discretized 1D domain with the quantization step $\lambda$, the minimum pairwise distance $d_{\min}\geq 5e^{-3}$ and the maximum pairwise distance $d_{\max}$=1.
    \end{definition}
    Using the $\Lambda$ in \eqref{eq:discretization_set}, we can get discrete point configurations belonging to three sets $\Omega^{(N)}(\lambda_1)\supset\Omega^{(N)}(\lambda_2)\supset\Omega^{(N)}(\lambda_3)$. We study how the proposed approach performs with respect to configurations drawn from the following disjoint sets:
    \begin{subequations}
    \label{eq:omega_setminus}
    \begin{align}
    \Omega^{(N)}_{1\setminus2}&=\Omega^{(N)}(\lambda_1)\setminus\Omega^{(N)}(\lambda_2)\\
    \Omega^{(N)}_{2\setminus3}&=\Omega^{(N)}(\lambda_2)\setminus\Omega^{(N)}(\lambda_3)\\
    \Omega^{(N)}_3&=\Omega^{(N)}(\lambda_3)
    \end{align}
    \end{subequations}
    where $\Omega^{(N)}(\lambda_i)\setminus\Omega^{(N)}(\lambda_{i+1})$ is the operation that removes all the elements of $\Omega^{(N)}(\lambda_{i+1})$ from its superset $\Omega^{(N)}(\lambda_i)$. When the quantization step $\lambda$ increases, the more likely we can observe repeated distances in the measurements. This way the discrete point configuration $\mathcal{U}_{\textnormal{dis}}$ drawn from disjoint sets in \eqref{eq:omega_setminus} will produce different numbers of repeated distances.
\end{itemize}
Fig. \ref{fig:dist_rep_noiseless} shows the noiseless distance distributions produced by turnpike configurations from different sets: the $\mathcal{U}$ from $\Omega^{(N)}_{1\setminus2}$ has the least number of repeated distances, and the $\mathcal{U}$ from $\Omega^{(N)}_3$ has the highest number of repeated distances.

For each point configuration, we randomly sample $1e^4$ vectors $\vz\in\mathcal{S}$, where $\mathcal{S}$ is the convex set defined by the constraints in \eqref{eq:constrained_nonconvex}. We approximate the distributions of $\widehat{\vh}^T\mE\widehat{\vh}$ and $\sum_{y=0}^{M-1}\big(\widehat{\vh}^T\mB_y\widehat{\vh}\big)^2$ via Monte Carlo simulations. The thresholds $\{\nu_1,\nu_2\}$ are chosen so that \eqref{eq:lower_bd_heh} and \eqref{eq:upper_bd_hbh} hold with high probability $P(\nu_1)=0.999$, $P(\nu_2)=0.999$. The empirical convergence radius $\widehat{\tau}$ can then be computed with the constant $\theta$ set to $0.95$. 

Fig. \ref{fig:dist_monte_carlo_noiseless} also shows how the distance distribution matching approach recovers a point configuration under different initializations. The projected gradient descent is initialized with 100 different random initializations ({\bfseries R}) and the spectral initialization ({\bfseries S}). From Fig. \ref{fig:dist_monte_carlo_noiseless}, we can see that both the random and spectral initializations are not inside the empirical convergence neighbourhood $\mathcal{E}(\widehat{\tau})$ at the beginning. It becomes increasingly difficult for the iterate $\vz_t$ to reach $\mathcal{E}(\widehat{\tau})$ as more repeated distances start to appear in the noiseless measurements. In fact, it is also more difficult for the backtracking approach by Skiena et al. to backtrack to the right path when there are many repeated distances, since their algorithm only backtracks when it has used up suitable distances to build its current path. 

Additionally, since $\widehat{\tau}$ is an empirical radius that relies on \eqref{eq:upper_bd_hbh} and \eqref{eq:lower_bd_heh} to hold with high probabilities, even if the iterates $\vz_t$ reaches $\mathcal{E}(\widehat{\tau})$, there is still a small chance for the next iterate $\vz_{t+1}$ to pull away from $\vx$, as illustrated by the noiseless case from $\Omega^{(N)}_{1\setminus2}$ in Fig. \ref{fig:dist_monte_carlo_noiseless}.

\end{noiseless_case}

\begin{noisy_case}
In the noisy case the global optimizer is no more binary; we have $\vx\in[0,1]^M$. There is no guarantee that a convergence neighborhood exists around the ground truth signal. Since we only have access to noisy distances and need to approximate the oracle distance distribution $g(d)$ using $p(d)$ in \eqref{eq:approx_dist}, the global optimizer $\vx$ would be a perturbation of the ground truth signal which depends on the realization of noise and is unknown in practice. While this precludes numerical estimates of $\widehat{\tau}$ as in the noiseless case, the algorithm empirically still converges to a solution that, after agglomerative clustering, leads to successful recovery of the point locations under moderate levels of noise. This is confirmed by extensive numerical experiments in Section \ref{sec:exp:turnpike}. We thus leave a theoretical analysis of convergence in the presence of noise an open question.
\end{noisy_case}

\subsubsection{Analysis of Difficulty of Recovery}
\label{subsubsec:turnpike_mutual_info}
We now quantitatively characterize the difficulty of recovering different types of point configurations from an information-theoretic perspective. We define the following two random variables:
\begin{definition}
The point $X\in\{1,\cdots,M\}$ is a random variable with distribution $P(X=m)=\frac{1}{N}x_m$, where $x_m$ is the $m$-th entry of the ground truth signal $\vx$, and $P(X=m)$ corresponds to the normalized point density at the $m$-th segment $l_m$ in the 1D discrete domain.
\end{definition}
\begin{definition}
The distance $Y\in\{0,\cdots,M-1\}$ is a random variable with conditional distribution $P(Y=y|X=m)=\sum_{k=1}^MP(X=k)\delta(y=|k-m|)$. Its marginal distribution is
\begin{align}
\label{eq:it_analysis_distance_dist}
\begin{split}
P(Y=y)&=\sum_{m=1}^MP(X=m)P(Y=y|X=m)\\
&=\frac{1}{N^2} \sum_{m=1}^M\sum_{k=1}^M x_mx_k\delta(y=|k-m|)\,.
\end{split}
\end{align}
\end{definition}
We note that the above \eqref{eq:it_analysis_distance_dist} is equivalent to the distance distribution $p(y)$ in \eqref{eq:quad_form_p} in the sense that they both correspond to the same distance multiset $\mathcal{D}$.

The mutual information $I(X;Y)$ between the point $X$ and the distance $Y$ measures the information shared between $X$ and $Y$. We have:
\begin{align}
\label{eq:mutual_info}
\begin{split}
    I(X;Y)&=H(Y)-H(Y|X)\,,
\end{split}
\end{align}
where $H(Y)$ is the entropy of $Y$ and $H(Y|X)$ is the conditional entropy of $Y$ given $X$. A higher $I(X;Y)$ suggests it is easier to recover $X$ (and, by extension, $\mathcal{U}$) from $Y$. For every point configuration $\mathcal{U}\in\Omega^{(N)}_{i\setminus(i+1)}$, we can then compute $I(X;Y)$ according to the above \eqref{eq:mutual_info} exactly, and use it to quantitatively characterize the difficulty of recovery. The maximum mutual information is $\max (I(X;Y))=H(X)$, which is obtained when $H(X|Y)=0$. We emphasize that the mutual information estimate is a heuristic proxy for the difficulty of recovery. Whether one can rigorously connect mutual information to the optimization landscape remains an open question.

Since we do not know the true signal $\vx$ beforehand, the mutual information could not be computed in practice. In this case, the number of repeated distances in noiseless measurements is what we could observe. Performing Monte Carlo simulations under the same setting as before, we show the mutual information values (mean$\pm$standard deviation) in Table \ref{tab:turnpike_mutual_information} based on $100$ point configurations in each $\Omega^{(N)}_{i\setminus(i+1)}$. We can see that $I(X;Y)$ decreases when there are more repeated distances in noiseless measurements. This empirical correlation between the mutual information and the number of repeated distances in noiseless measurement suggests that the latter could serve as a coarse proxy to mutual information.

\begin{table}[tbp]
\caption{Turnpike ($N=100$): mutual information.}
\label{tab:turnpike_mutual_information}
\centering
\begin{tabular}{@{}lccc@{}}
\toprule
 & $\Omega^{(N)}_{1\setminus2}$ & $\Omega^{(N)}_{2\setminus3}$ & $\Omega^{(N)}_3$ \\ \cmidrule{2-4} 
$I(X;Y)$ & 3.877$\pm$0.014 & 2.695$\pm$0.020 & 1.194$\pm$0.019 \\ 
\bottomrule
\end{tabular}
\end{table}

\subsection{Analysis of the Beltway Problem}

\begin{figure*}[tbp]
\centering
\subfigure{
\label{fig:dist_bw_rep_1}
\includegraphics[height=0.6in]{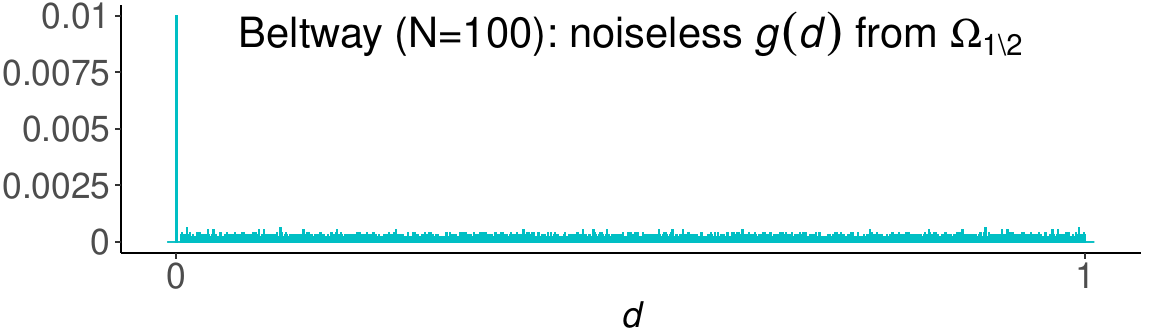}}
\subfigure{
\label{fig:dist_bw_rep_10}
\includegraphics[height=0.6in]{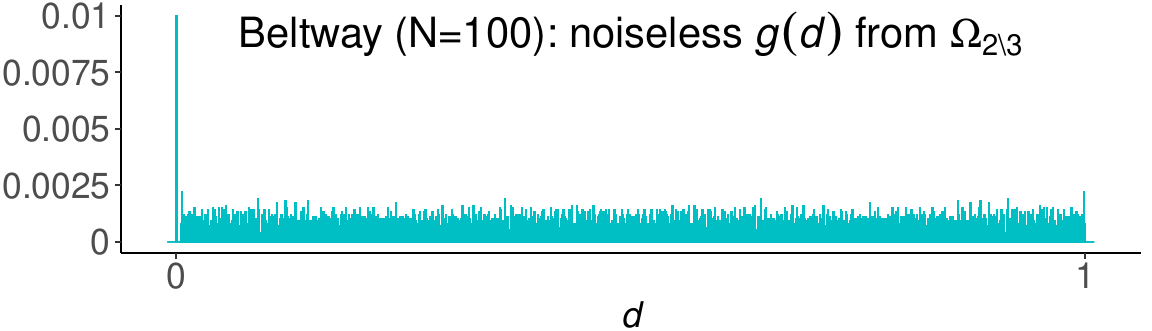}}
\subfigure{
\label{fig:dist_bw_rep_50}
\includegraphics[height=0.6in]{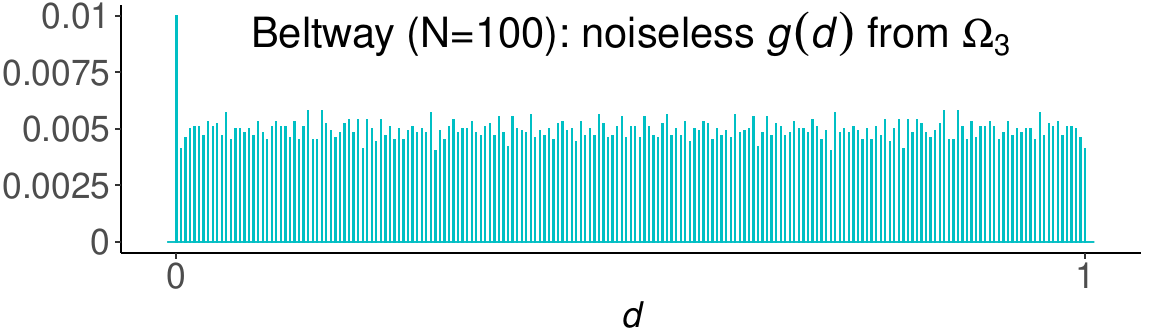}}
\caption{Given the quantization step $\lambda\in\Lambda=\{1e^{-4},1e^{-3},5e^{-3}\}$, the beltway point configurations drawn from different sets $\Omega^{(N)}_{1\setminus 2},\Omega^{(N)}_{2\setminus 3},\Omega^{(N)}_3$ defined in \eqref{eq:omega_setminus} have different number of repeated distances in the noiseless measurements.}
\label{fig:dist_bw_rep_noiseless}
\end{figure*}

\begin{figure*}[tbp]
\centering
\subfigure{
\label{fig:dist_bw_1_0}
\includegraphics[height=1.7in]{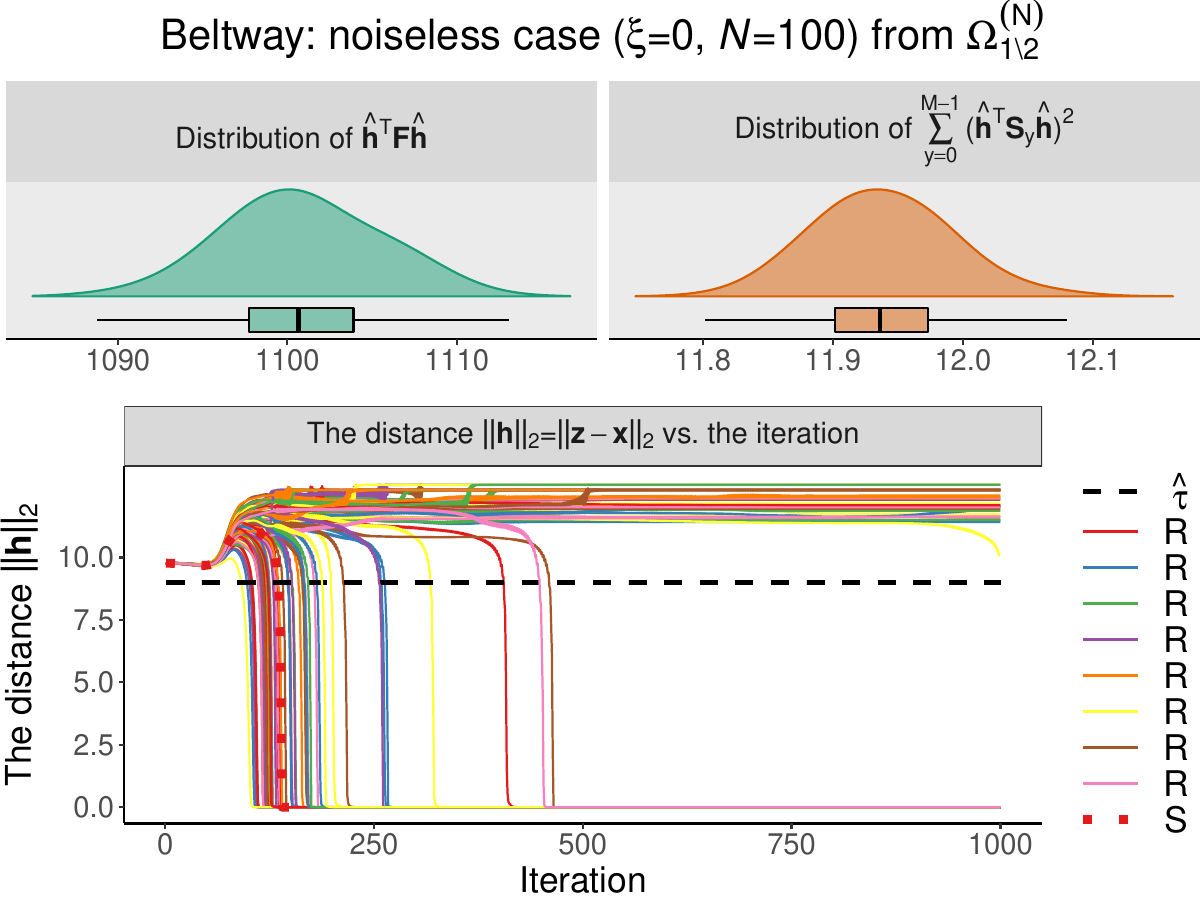}}
\subfigure{
\label{fig:dist_bw_10_0}
\includegraphics[height=1.7in]{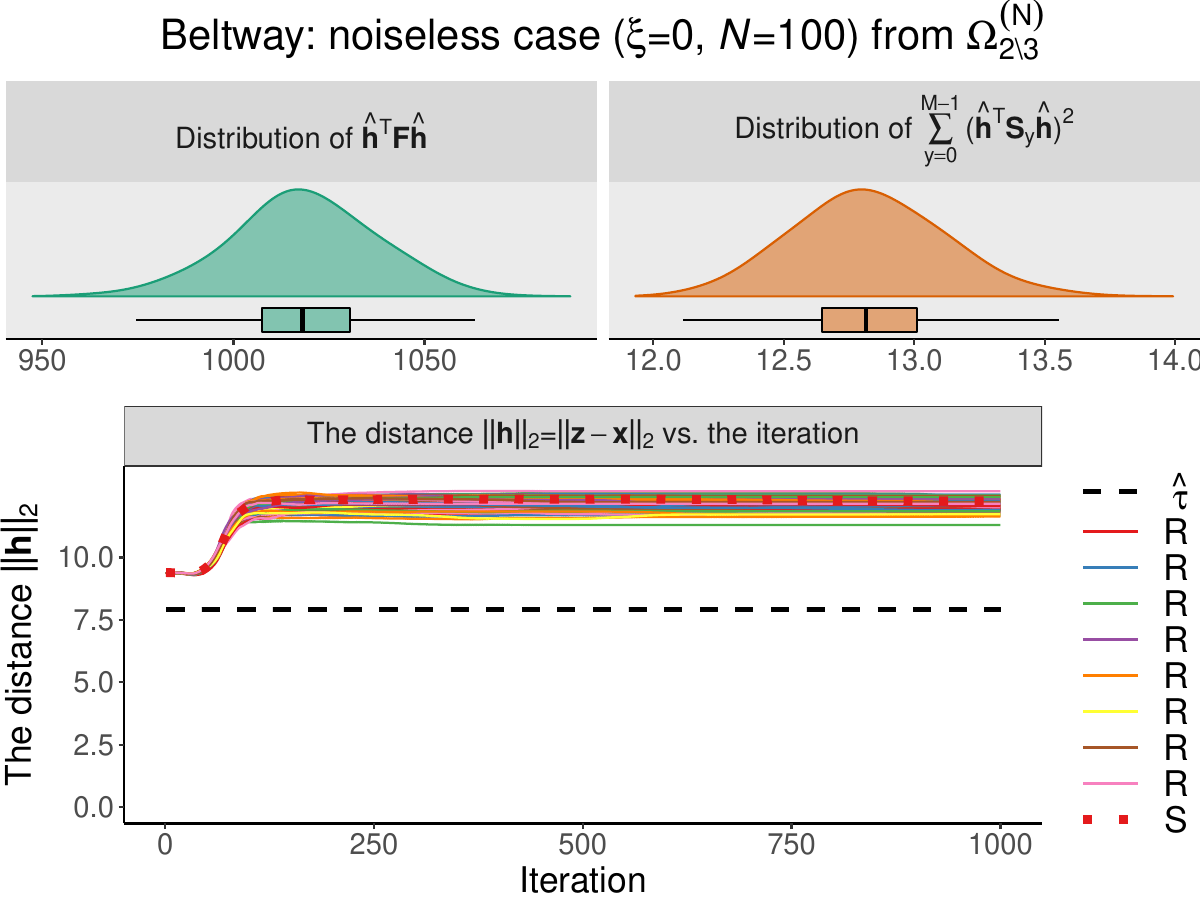}}
\subfigure{
\label{fig:dist_bw_50_0}
\includegraphics[height=1.7in]{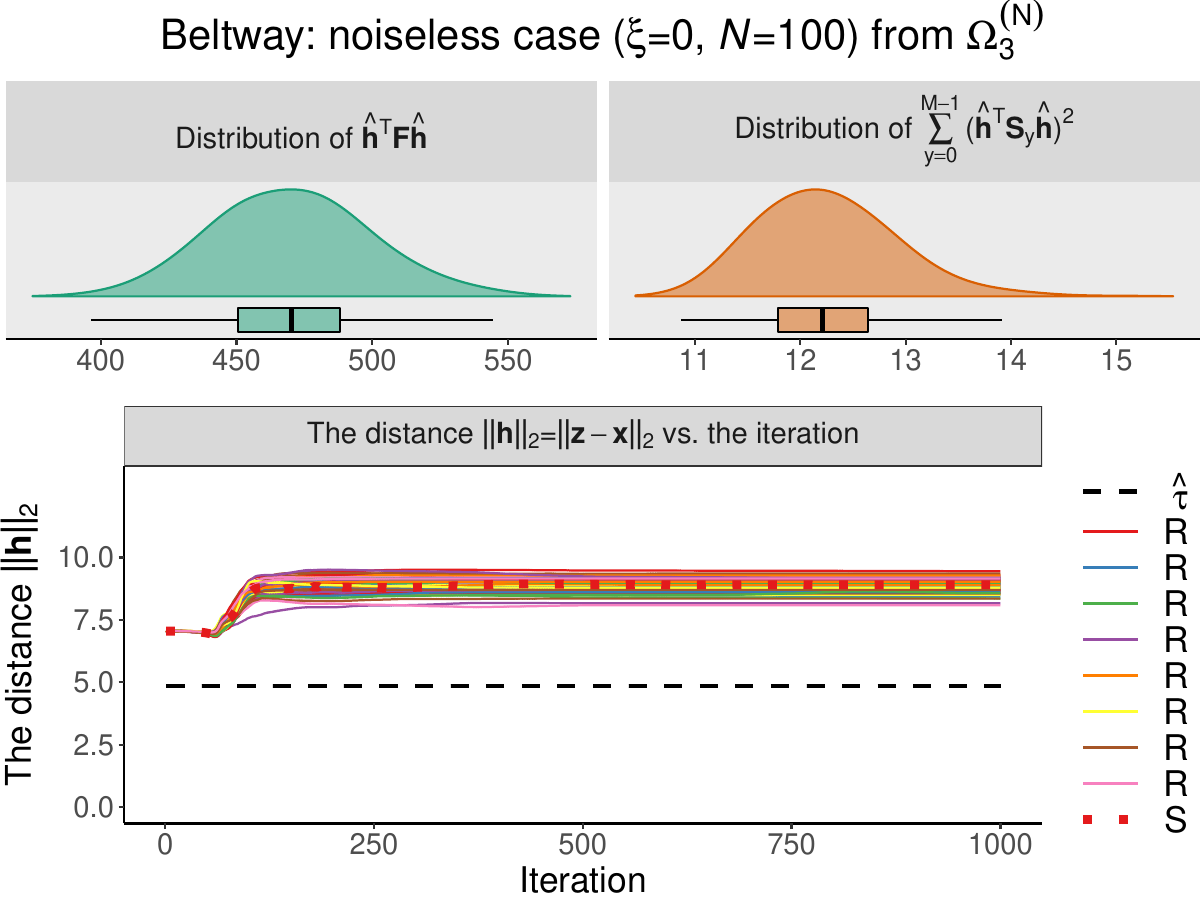}}
\caption{In the noiseless beltway problem, the number of repeated distances in the measurements affects the performance of projected gradient descent from two aspects: 1) the radius of the empirical convergence neighbourhood $\mathcal{E}(\widehat{\tau})$; 2) the difficulty to reach $\mathcal{E}(\widehat{\tau})$ from outside.}
\label{fig:dist_bw_monte_carlo_noiseless}
\end{figure*}

For the beltway problem, we have similar results given in the Supplementary Material. We also run Monte Carlo simulations to show how \emph{the number of repeated distances} in the noiseless measurements affects the reconstruction performance. The point locations and the experimental setup are the same as those of the turnpike problem in Section \ref{subsec:turnpike_cvg}, except that the points are now migrated onto a loop with length $L=1.005$ where $d(u_1\rightarrow u_N)=1$ and $d(u_N\rightarrow u_1)=0.005$. 

\begin{table}[tbp]
\caption{Beltway ($N=100$): mutual information.}
\label{tab:beltway_mutual_information}
\centering
\begin{tabular}{@{}lccc@{}}
\toprule
 & $\Omega^{(N)}_{1\setminus2}$ & $\Omega^{(N)}_{2\setminus3}$ & $\Omega^{(N)}_3$ \\ \cmidrule{2-4} 
$I(X;Y)$ & 3.605$\pm$0.008 & 2.239$\pm$0.004 & 0.693$\pm$0.001 \\ 
\bottomrule
\end{tabular}
\end{table}

Fig. \ref{fig:dist_bw_rep_noiseless} shows the noiseless distance distributions from different types of point configurations. As shown in Fig. \ref{fig:dist_bw_monte_carlo_noiseless}, when there are more repeated distances in the noiseless measurements, the empirical radius $\widehat{\tau}$ decreases and it is more difficult for the spectral and random initializations to reach $\mathcal{E}(\widehat{\tau})$ from outside. By comparing the beltway recoveries in Fig. \ref{fig:dist_bw_monte_carlo_noiseless} with the turnpike recoveries in Fig. \ref{fig:dist_monte_carlo_noiseless}, we can see that the empirical convergence radii $\widehat{\tau}$ are similar in both problems. However, it is more difficult for the iterate $\vz_t$ to reach $\mathcal{E}(\widehat{\tau})$ from outside in the beltway problem. This is corroborated by the analysis on difficulty of recovery where we use the mutual information $I(X;Y)$ as a proxy to the hardness of reconstruction. Table \ref{tab:beltway_mutual_information} shows that the estimated $I(X;Y)$ in each case of the beltway recovery is lower than the corresponding turnpike recovery, indicating it is more difficult to recover beltway point configurations.

\section{Experimental Results}
\label{sec:exp}
In this section we compare the proposed distance distribution matching approach with the current state-of-the-art backtracking approach for the turnpike recovery, and show that our approach can solve large-scale noisy beltway recovery (to the best of our knowledge this is the first such algorithm). Reproducible code and data are available online at \urlstyle{tt}\url{https://github.com/swing-research/turnpike-beltway}

\subsection{Noiseless Partial Digestion on Real Genome Data}
\label{subsec:pd_real_genome}

\begin{figure}[tbp]
\centering
\includegraphics[height=1.5in]{figures/partial_digestion}
\caption{Partial digestion of a DNA with the restriction enzyme when $N=5$.}
\label{fig:partial_digestion}
\end{figure}

As illustrated in Fig. \ref{fig:partial_digestion}, suppose the target sequence appears in the restriction sites $\{u_2,u_3,u_4\}$ with $\{u_1,u_5\}$ being the two ends of the DNA. By carefully controlling the amount of restriction enzyme and the time of the digestion, the DNA is partially digested. This means that the enzyme cleaves the molecule randomly, producing fragments of different lengths that correspond to the pairwise distances between pairs of restriction sites \cite{PD:1976,Waterman:1995}. We experiments with \emph{E. Coli} K12 MG1655 genome data from the GenBank$^{\tiny{\text{\textregistered}}}$ assembly \cite{E:Coli:K12:MG1655}, which is a nucleotide sequence of length$=4,641,652$. Four letters \texttt{A}, \texttt{C}, \texttt{G}, \texttt{T} are used to represent the four nucleotide bases of the DNA molecule \cite{NucleotidesSymbol:1985}. The list of restriction enzymes used in the experiments and the number of restriction sites (including the two ends \texttt{5'} and \texttt{3'} as dummy restriction sites) are shown in Table \ref{tab:restriction_enzyme}. Note that the recognition sequence could also be in a reverse order depending on which way the nucleotide sequence is read. 

Since there are four nucleotide bases that cannot be further digested, the unlabeled pairwise distances are all integers in this case, and the DNA sequence has a total of $M=4,641,653$ equally spaced possible locations for the restriction sites. Note that the matrix $\mA_y$ has a simple structure and thus needs not be stored during computation. Using our distance distribution matching approach, we correctly recover all site locations (Table \ref{tab:restriction_enzyme}). The runtimes were tested on a Quad-core processor machine (Intel Xeon X5355) with 24 GB RAM. Each processor has 2.66 GHz speed with 8MB of cache, and all 4 cores were used. It took approximately 25 and 32 minutes to reconstruct the locations of SmaI and BamHI respectively.

\begin{table}[tbp]
\caption{Restriction enzymes used in the partial digestion.}
\label{tab:restriction_enzyme}
\centering
\begin{tabular}{llllll}
\toprule
Enzyme &Recognition sequence &$N$ \\ \midrule
SmaI &\texttt{5'---CCC \textcolor{magenta}{|} GGG---3'} &$495$\\
BamHI &\texttt{5'---G \textcolor{magenta}{|} GATCC---3'} & $512$\\ \bottomrule
\end{tabular}
\end{table}

\begin{figure*}[tbp]
\centering
\subfigure{
\label{fig:n10_compare}
\includegraphics[height=1.25in]{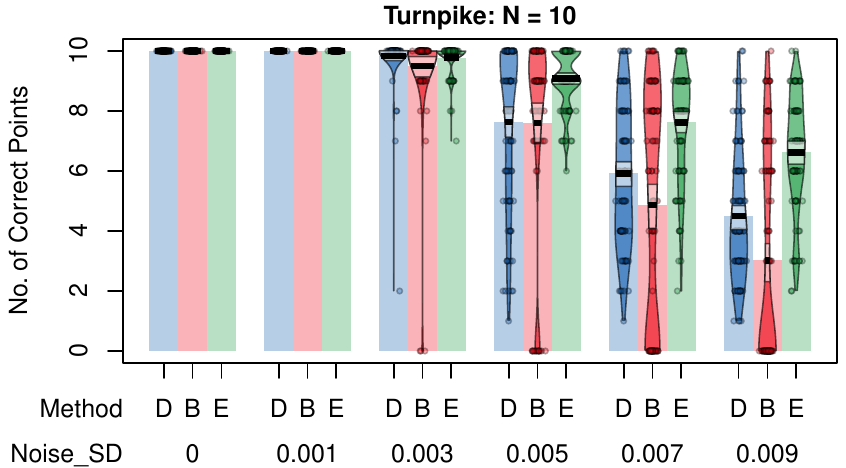}}
\subfigure{
\label{fig:n100_compare}
\includegraphics[height=1.25in]{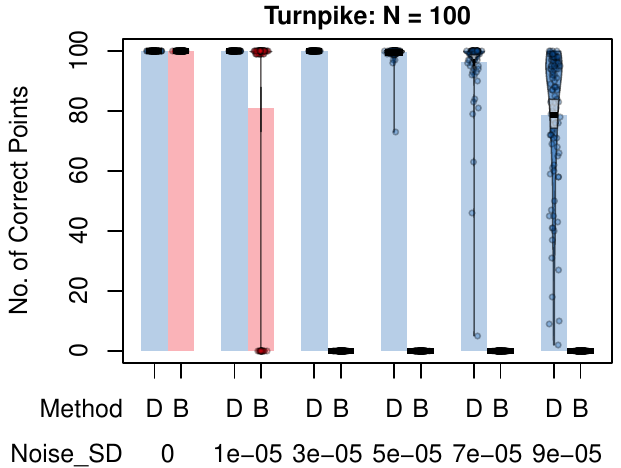}}
\caption{The distribution and the mean of the number of correctly recovered points across $100$ random trials in the \emph{``turnpike''} recovery experiments using the distance distribution matching approach ({\bfseries D}), the backtracking approach ({\bfseries B}) and the exhaustive search ({\bfseries E}). In each random trial, $N$ points are uniformly sampled from the interval $[0,1]$. When $N=10$, the smallest distance between two different points is set to $d_{\min}=1e^{-2}$. When $N=100$, we set $d_{\min}=1e^{-4}$. The distances are further corrupted with white Gaussian noise $w\sim\mathcal{N}(0,\xi^2)$, where we control $\xi<d_{\min}$.}
\label{fig:tp_compare}
\end{figure*}

\subsection{Turnpike Recovery on Simulated Data}
\label{sec:exp:turnpike}

In the turnpike recovery experiments where the points are located on a line, we compare the distance distribution matching approach and the state-of-the-art backtracking approach by \cite{Skiena:1994} through simulated noisy recovery experiments. We first uniformly sample $N=10$ points from the interval $[0,1]$ with the minimum pairwise distance between two different points set to $d_{\min}=1e^{-2}$ and the maximum pairwise distance set to $d_{\max}=1$. The length $L$ of the line $\boldsymbol l$ thus equals $d_{\max}$. The quantization step is set to $\lambda=1e^{-3}$ to balance the trade-off between reducing the quantization error and computational complexity, creating $M=\frac{L}{\lambda}=1e^3$ possible locations for the $10$ points. The distance measurement $d_k$ is corrupted with white Gaussian noise, $w\sim\mathcal{N}(0,\xi^2)$. We control the noise level by varying the standard deviation of the noise: $\xi\in\{0,1e^{-3}, 3e^{-3}, 5e^{-3}, 7e^{-3}, 9e^{-3}\}$. The results obtained when $\xi=0$ correspond to the case where there is only quantization error and no measurement noise. 

For the distance distribution matching approach, the unlabeled pairwise distance measurements are collected and extended to form the multiset $\mathcal{D}$. As discussed in Section \ref{subsec:ddm}, the parameter $\sigma$ in the approximated distribution $p(d)$ is unknown, and can be tuned in practice to obtain best performance. In the experiments, $\sigma$ is tuned in the interval $(0, d_{\min}=1e^{-2})$, producing multiple solutions corresponding to each $\sigma$. We shall choose the solution whose distance distribution is closest to the observed distance distribution in terms of the earth mover's distance \cite{EMD:2001}. The exact recovered point locations $\{\widehat{u}_1,\widehat{u}_2,\cdots,\widehat{u}_N\}$ are obtained using the aforementioned agglomerative clustering method in Section \ref{subsec:ddm}. For each noise level specified by $\xi$, $100$ random trials are performed and the number of correctly recovered points is recorded for each random trial. The runtime is about 1s for each noise level in a random trial.

For the backtracking approach, the search path for every distance $d_k$ is performed in an interval $[d_k-\delta_d,\ d_k+\delta_d]$. In order to make a fair comparison, we need to ensure that both approaches are evaluating the distance $d_k$ within roughly the same range. Here we choose $\delta_d=5\sigma_{\max}=5e^{-2}$, where $\sigma_{\max}$ is the largest $\sigma$ tuned by the distance distribution matching approach. The runtime is less than 0.01s for each noise level in a random trial. We should note that the best results are obtained by choosing $\delta_d=1$, i.e. the maximum pairwise distance. However, this essentially becomes performing an exhaustive search over all possible paths, the complexity grows exponentially. It is simply impractical when the number of points $N$ and the number of possible locations $M$ are large. Since there are only $10$ points to be recovered in this case, we also compute the solution obtained via the exhaustive search as a comparison, which corresponds to the best solution one can hope to achieve given noisy measurements. 

The recovered point locations can be matched to the true locations efficiently using the Hungarian algorithm \cite{Hungarian:1955}. If the distance between a recovered location $\hat{u}_n$ and the true location $u_n$ is less than $\frac{1}{2}d_{\min}$, the recovery of the $n$-th point is considered to be a success. The recovery results across $100$ random trials when $N=10$ are shown in Fig. \ref{fig:tp_compare}: we would like to show the average number of correct points through bar plots and the distribution of the number of correct points through violin plots. Every dot in the violin plot corresponds to the number of correct points in a random trial. The width of the violin plot corresponds to the density of the dots. The shape of the violin thus shows the distribution of the number of correct points across 100 random trials. We can see that our approach is significantly more robust to noise compared to the backtracking approach, and offers a competitive alternative to the exhaustive search approach.

In order to test how the two approaches are holding up against large-scale problems, we then uniformly sample $N=100$ points from the interval $[0,1]$ as before, with the minimum pairwise distance set to $d_{\min}=1e^{-4}$ and the maximum pairwise distance set to $d_{\max}=1$. The distance measurement $d_k$ is also corrupted with white Gaussian noise $w\sim\mathcal{N}(0,\xi^2)$, where $\xi\in\{0,1e^{-5}, 3e^{-5}, 5e^{-5}, 7e^{-5}, 9e^{-5}\}$. The quantization step is set to $\lambda=1e^{-5}$, creating $M=\frac{L}{\lambda}=1e^5$ possible locations for the $100$ points.

For the distance distribution matching approach, the standard deviation $\sigma$ in the noise model is tuned in the interval $\sigma\in[0,d_{\min}=1e^{-4}]$, the runtime is about 30 minutes for each noise level in a random trial. For the backtracking approach, the tolerance threshold $\tau_d$ is chosen to be $\tau_d=5\sigma_{\max}=5e^{-4}$, the runtime is about 0.2s for each noise level in a random trial. Since $N$ and especially $M$ are much larger in this case, we are not able to perform an exhaustive search for comparison here. The recovery results across $100$ random trials when $N=100$ are shown in Fig. \ref{fig:tp_compare}. We can see that the proposed approach is more robust and has greater advantage over the backtracking approach for large-scale problems. When the noise level is high, the backtracking approach is not able to produce solutions, the proposed approach does not break down completely and recovers some of the points correctly.

\subsection{Beltway Recovery on Simulated Data}
\label{sec:exp:beltway}

\begin{figure}[tbp]
\centering
\subfigure{
\label{fig:n10_compare_bw}
\includegraphics[width=1.6in]{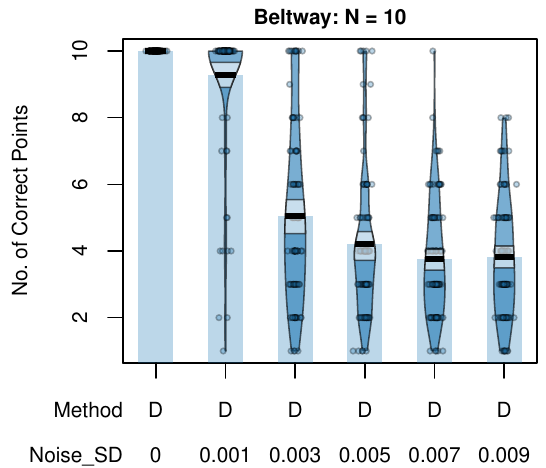}}
\subfigure{
\label{fig:n100_compare_bw}
\includegraphics[width=1.6in]{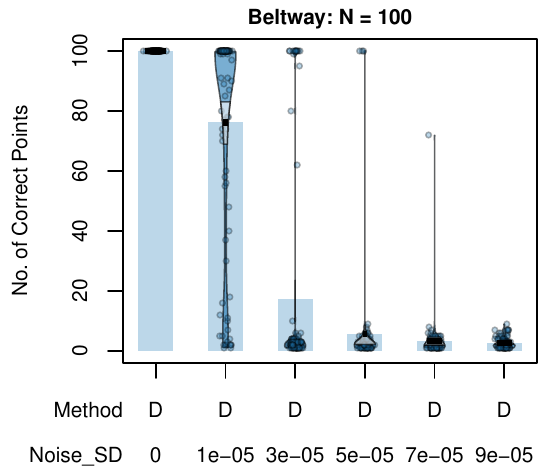}}
\caption{The distribution and the mean of the number of correctly recovered points across $100$ random trials in the \emph{``beltway''} recovery experiments using the distance distribution matching approach ({\bfseries D}). In each random trial, $N$ points are uniformly sampled from a loop of length $L=d_{\min}+d_{\max}$, where the largest pairwise distance $d_{\max}$ is set to $1$, the smallest distance $d_{\min}$ between two different points is set to $1e^{-2}$ when $N=10$ and $1e^{-4}$ when $N=100$. The distances are further corrupted with white Gaussian noise $w\sim\mathcal{N}(0,\xi^2)$, where we control $\xi<d_{\min}$.}
\label{fig:bw_compare}
\end{figure}

We next use the proposed distance distribution matching approach to perform the beltway recovery experiments where the points lie on a loop. To the best of our knowledge, our approach is the first practical approach that can solve the large-scale beltway problem efficiently. Note that the exhaustive search is impractical even when $N$ is small but $M$ is large \cite{Lemke2003}. Hence we only present the recovery results obtained using the proposed approach here. We uniformly sample $N$ points from a loop of length $L=d_{\min}+d_{\max}$, where $d_{\min}$ is the minimum distance between two different points and $d_{\max}$ is the maximum pairwise distance. When $N=10$, we set $d_{\min}=1e^{-2}$ and $d_{\max}=1$. The distance $d_k$ is also corrupted with a white Gaussian noise: $w_k\sim\mathcal{N}(0,\xi^2)$, where $\xi\in\set{0,1e^{-3}, 3e^{-3}, 5e^{-3},7e^{-3},9e^{-3}}$. The quantization step is set to $\lambda=1e^{-3}$, creating $M=\frac{L}{\lambda}=1.01e^3$ possible locations for the $10$-points case. The runtime is about 1s for each noise level in a random trial. When $N=100$, we set $d_{\min}=1e^{-4}$ and $d_{\max}=1$. The standard deviation of the white Gaussian noise is chosen from $\xi\in\set{0,1e^{-5},3e^{-5},5e^{-5},7e^{-5},9e^{-5}}$ as before, and the quantization step is set to $\lambda=1e^{-5}$, creating $M=\frac{L}{\lambda}=1.0001e^5$ possible locations for the $100$-points case. The runtime is about 1$\sim$1.5 hours for each noise level in a random trial. The recovery results across $100$ random trials are shown in Fig. \ref{fig:bw_compare}. We can see that the proposed approach is able to reconstruct all the point locations correctly when there is only quantization error and no measurement noise, i.e. $\xi=0$. When measurement noise is added, the proposed approach could still recover some of the points correctly.

\subsection{Comparison of Initialization Schemes}

\begin{figure*}[tbp]
\centering
\subfigure{
\label{fig:init_tp_n100_compare}
\includegraphics[width=.3\textwidth]{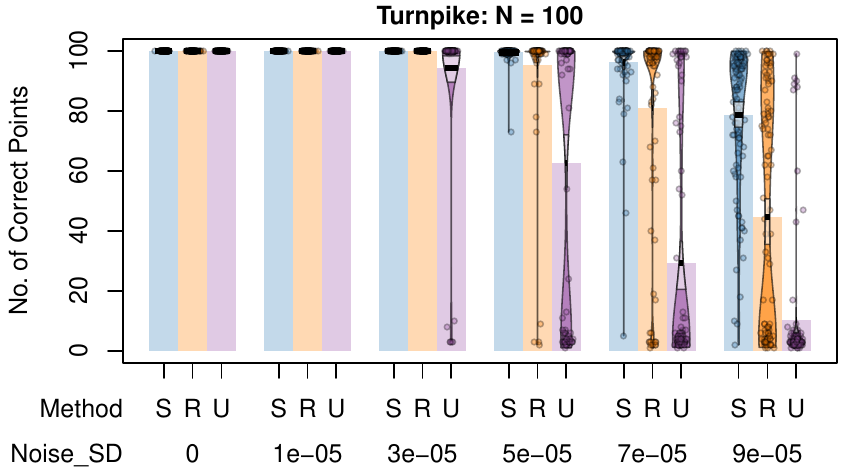}}
\subfigure{
\label{fig:init_bw_n100_compare}
\includegraphics[width=.3\textwidth]{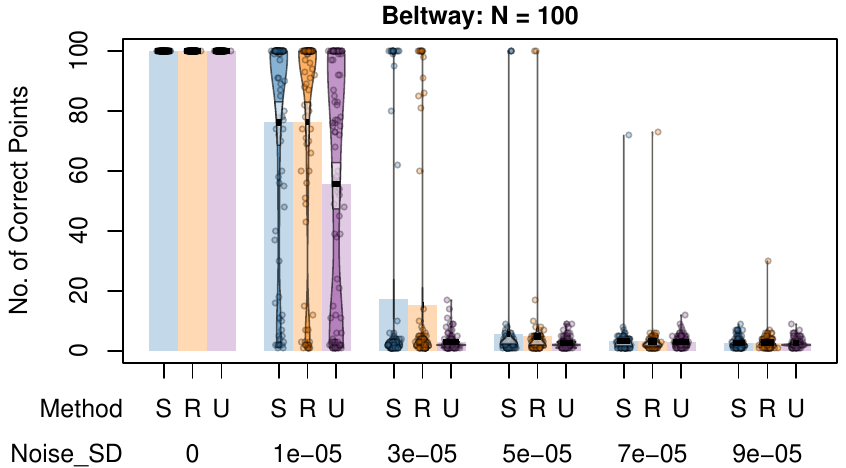}}
\caption{The distribution and the mean of the number of correctly recovered points across $100$ random trials in the turnpike and beltway recoveries comparing the \emph{``three initialization schemes''}: the spectral initialization ({\bfseries S}), the random initialization ({\bfseries R}), and the uniform initialization ({\bfseries U}). In each random trial, $N=100$ points are uniformly sampled from the interval $[0,1]$, with the smallest distance between two different points set to $d_{\min}=1e^{-4}$. The distances are further corrupted with white Gaussian noise $w\sim\mathcal{N}(0,\xi^2)$, where we control $\xi<d_{\min}$.}
\label{fig:init_compare}
\end{figure*}

A spectral initialization scheme is adopted in the distance distribution matching approach to solve the nonconvex turnpike and beltway recoveries. It is meant to provide a good initializer that highlights the possible point locations. Here we put it to test and compare it with the other two initialization schemes, i.e. the ``random'' initialization and the ``uniform'' initialization. In the random initialization scheme, the entries of the initializer $\vz_0$ are generated independently according to the white Gaussian distribution $\mathcal{N}(0,0.01)$. In the uniform initialization scheme, the entries of $\vz_0$ are set to all ones. We should note that the initializers from all three schemes are projected to the convex set $\mathcal{S}$ defined by the constraints in \eqref{eq:constrained_nonconvex} before they can be used with the projected gradient descent. Following the same settings when $N=100$ as in Sections \ref{sec:exp:turnpike} and \ref{sec:exp:beltway}, we perform simulated noisy turnpike and beltway recoveries using the three initialization schemes. The recovery results are shown in Fig. \ref{fig:init_compare}. For the turnpike recovery, the spectral initialization is more robust than the other two schemes. For the beltway recovery, the spectral and random initializations perform almost equally well, and they both perform better than the uniform initialization.

\section{Conclusion}
\label{sec:con}
We introduced a new method to solve two important unlabeled distance geometry problems in 1D: the turnpike and the beltway. Our aim was to find an approach that is computationally efficient and that can deal with imprecise, noisy data. While some earlier methods are efficient on typical runs with perfect data, they become inoperable or impractical when faced with noise. This is not surprising as these approaches are either based on factoring polynomials or on clever variants of exhaustive search. In the latter case, the extra branching due to noise quickly explodes, especially for large-scale problems.

We propose an alternative based on nonconvex programming. The key ingredient is a suitable global objective function which involves all the measured distances and all the unknown points, so that the method looks for all the points at once. By first modeling the distance distribution as a collection of quadratic functionals of the unknown point and then using recent ideas in non-convex optimization, the proposed distance distribution matching approach achieves both stated goals. Numerical experiments with real and synthetic data show that it significantly outperforms the state-of-the-art backtracking approach for the turnpike problem. To the best of our knowledge, it is also the first practical and computationally efficient method for the large-scale noisy beltway problem. 

One drawback comes from using a gradient-based optimization method: we lose the ability to list all solutions when uniqueness does not hold, unlike some of the search-based methods which naturally produce the desired list \cite{Lemke2003,Abbas2016}. We were also not able to provide theoretical guarantees that the introduced spectral initializer converges to a global optimum. Due to the hardness of the noisy problem, we expect this to hold with high probability over probabilistic point set models that contain mostly distinct distances; empirically, this is indeed the case. Another drawback comes from the discretization of the domain space: the fine discretization needed for the recovery would increase the problem size inevitably. We tried to bypass the discretization by directly optimizing with respect to the point locations. However, the optimization easily got stuck in some local optimum. It would only work if we had an initialization that was very close to the true solution. Currently we prune the domain space to reduce the problem size by removing locations that are not consistent with distance measurements. This allows us to extend the proposed approach to solve a related 3D unknown view tomography problem \cite{Zehni:3D:2020}. Notwithstanding these drawbacks, our method can be used to solve large-scale unassigned problems with noise. It thus opens up avenues for new biological applications similar to the recent de novo cyclic peptide sequencing via mass spectrometry \cite{CyclicPeptide:2011,Kavan2013}.

\begin{appendices}
\section{Proofs for Projected Gradient Descent}

\subsection{Proof of Lemma \ref{lemma:upper_bound}}
\label{proof:lemma:upper_bound}
\begin{proof}
Suppose that $s_i<1$. We construct a vector $\widetilde{\vs}\in\mathbb{R}^M$ out of $\vs$ by swapping the positions of $s_i$ and $s_j$ in $\vs$, i.e. $\widetilde{s}_i=s_j$ and $\widetilde{s}_j=s_i$. $\widetilde{\vs}$ also satisfies the constraints in (\ref{eq:projection_box_constraints}). Since $\overline{z}_i>\overline{z}_j$ and $s_j=1$, we then have:
\begin{align}
\begin{split}
\|\vs-\overline{\vz}\|_2^2-\|\tilde{\vs}-\overline{\vz}\|_2^2=2(1-s_i)(\overline{z}_i-\overline{z}_j)> 0\,.
\end{split}
\end{align}
This is in contradiction with the fact that $\vs$ is the minimizer of (\ref{eq:projection_box_constraints}). Hence $s_i$ must be $1$.
\end{proof}

\subsection{Proof of Lemma \ref{LEMMA:UNIQUE_R}}
\label{proof:lemma:unique_r}
\begin{proof}
Let $\mathcal{S}$ denote the convex set defined by the constraints $0\leq s_m\leq 1$, $\forall\ 1\leq m\leq M$ and $\sum_{m=1}^Ms_m=N$. Note that the entries of $\overline{\vz}$ and $\vx$ are in a non-increasing order. We will proceed in the following two steps:

\emph{\bfseries Step 1)} Since $\mathcal{S}$ is non-empty, the projection onto it exists, i.e. there is one $r\in\set{1,\cdots,N}$ that produces the $(\rho,\kappa)$ that satisfy $1\leq \overline{z}_{r-1}-\kappa$ if $2\leq r\leq N<\rho$ and $0<\overline{z}_r-\kappa<1$. In fact, $\mathcal{S}$ is a closed convex set, the projection is also unique.

\emph{\bfseries Step 2)} Suppose that there are two different $r_1<r_2\in\set{1,\cdots,N}$ that produce the two pairs $(\rho_1,\kappa_1|r_1)$ and $(\rho_2,\kappa_2|r_2)$ that satisfy the constraints (\ref{eq:cst_r}) and (\ref{eq:cst_rm1}). As detailed in the Supplementary Material, we can prove that
\[r_1-1+\sum_{m=r_1}^{\rho_1}(\overline{z}_m-\kappa_1)\,<\,r_2-1+\sum_{m=r_2}^{\rho_2}(\overline{z}_m-\kappa_2)\,.\]

Let $\vs_1,\vs_2$ denote the solutions of \eqref{eq:projection_box_constraints} produced by $r_1,r_2$ respectively. The above inequality shows that $\sum_{m=1}^Ms_{1m}<\sum_{m=1}^Ms_{2m}$. This is in contradiction with the assumption that $\sum_{m=1}^Ms_{1m}=\sum_{m=1}^Ms_{2m}=N$. Hence $r_1=r_2$, there is only one $r\in\set{1,\cdots,N}$ that produces the $(\rho,\kappa)$ that satisfy the constraints (\ref{eq:cst_r}) and (\ref{eq:cst_rm1}).
\end{proof}

\section{Proofs for Convergence Analysis}

\subsection{Lemma ~\ref{LEMMA:E_MIN}}
\label{proof:lemma:e_min}

\begin{lemma}
\label{LEMMA:E_MIN}
Let $\mB_y=\mA_y+\mA_y^T$, $\mE=\sum_{y=0}^{M-1}\mB_y\vx{\vx}^T\mB_y^T$, $\mathcal{S}$ be the convex set defined in \eqref{eq:convex_set}. The following problem is convex $\forall\ \vz\in\mathcal{S}$, $\vz\neq\vx$:
\begin{align}
\label{eq:lambda_min_val}
\begin{split}
\mu_\mE = \min_{\vz\in\mathcal{S}, \vz\neq\vx}\,\frac{(\vz-\vx)^T\mE(\vz-\vx)}{\|\vz-\vx\|_1^2}=\min_{\overline{\vh}\in\mathcal{G}}\,\overline{\vh}^T\mE\overline{\vh}\,,
\end{split}
\end{align}
where $\mu_\mE>0$ and $\overline{\vh}=\frac{\vz-\vx}{\|\vz-\vx\|_1}$, $\mathcal{G}$ is a convex set defined by the following constraints:
\begin{align}
\label{eq:hh_con_2}
&\sum_{i=1}^M\overline{h}_i=0\\
\label{eq:hh_con_3}
&\overline{h}_i\in[0,\,0.5]\quad\textnormal{if $x_i=0$}\\
\label{eq:hh_con_4}
&\overline{h}_i\in[-0.5,\,0]\quad\textnormal{if $x_i=1$}\\
\label{eq:hh_con_1}
&\|\overline{\vh}\|_1=\vr^T\overline{\vh}=1\,,
\end{align}
where $\vr\in\{-1,1\}^M$ depends on $\vx$ and is defined as follows:
\begin{align}
    r_i=1\quad\textnormal{if $x_i=0$};\quad r_i=-1\quad\textnormal{if $x_i=1$}\,.
\end{align}
\end{lemma}
\begin{proof}
The proof is given in the Supplementary Material.
\end{proof}

\subsection{Proof of Theorem ~\ref{THM:CVG_SED}}
\label{proof:thm:cvg_sed}
\begin{proof}
Let $\mB_y=\mA_y+\mA_y^T$. The gradient $\nabla f(\vz)$ then becomes
\begin{align}
\begin{split}
    \nabla f(\vz)= \frac{1}{MK^2}\sum_{y=0}^{M-1}\mB_y\vz\cdot(\vz-\vx)^T\mB_y(\vz+\vx)\,.
\end{split}
\end{align}
We first establish the regularity condition $RC(\alpha,\beta,\tau)$ for the gradient descent update, and then use it to prove the linear convergence of the projected gradient descent update.

\emph{\bfseries Step 1)} Our goal is then to find the radius $\tau$ so that $RC(\alpha,\beta,\tau)$ holds. 

We first try to find an upper bound on $\|\nabla f(\vz)\|_2^2$. Using $\sigma_{\max}^2\left(\mB_y\right)\leq 4$ and $\|\vz\|_2^2\leq\sum_mz_m= N$, we then have:
\begin{align}
\label{eq:first_term_lb}
\begin{split}
 \|\nabla f(\vz) \|_2^2& \leq \frac{16N}{K^2}\cdot f(\vz)\,.
\end{split}
\end{align}

We then try to find a lower bound on $\langle\vz-\vx,\, \nabla f(\vz)\rangle$. Using the Cauchy-Schwarz inequality, we also have that
\begin{align}
\label{eq:second_term}
\begin{split}
&\langle\vz-\vx,\, \nabla f(\vz)\rangle\\
&\geq 4f(\vz)-\sqrt{4f(\vz)}\sqrt{\frac{1}{MK^2}\sum_y\left((\vz-\vx)^\textrm{T}\mB_y\vx\right)^2}\,.
\end{split}
\end{align}
We proceed by further lower-bounding the above \eqref{eq:second_term}. Let $\vh=\vz-\vx$. For some $\theta\in(\frac{1}{2},1)$, we have
\begin{align}
\label{eq:second_term_lb1}
\begin{split}
&\theta^2\sum_y\left((\vz-\vx)^\textrm{T}\mB_y(\vz+\vx)\right)^2-\sum_y\left((\vz-\vx)^\textrm{T}\mB_y\vx\right)^2\\
&\geq \left(\theta\sqrt{\sum_y\left(\vh^\textrm{T}\mB_y\vh\right)^2}-(2\theta-1)\sqrt{\sum_y\left(\vh^T\mB_y\vx\right)^2}\right)\\
&\quad\quad\times\left(\theta\sqrt{\sum_y\left(\vh^\textrm{T}\mB_y\vh\right)^2}-(2\theta+1)\sqrt{\sum_y\left(\vh^T\mB_y\vx\right)^2}\right).
\end{split}
\end{align}
To make the left-hand side of \eqref{eq:second_term_lb1} greater than $0$, one of the conditions is that the following inequality should hold:
\begin{align}
\label{eq:second_term_lb1_two}
\sqrt{\sum_y\left(\vh^\textrm{T}\mB_y\vh\right)^2}&<(2-\frac{1}{\theta})\sqrt{\sum_y\left(\vh^T\mB_y\vx\right)^2}\,.
\end{align}

We can obtain an upper bound on $\|\vh\|_2$ to make \eqref{eq:second_term_lb1_two} hold. The left-hand side of \eqref{eq:second_term_lb1_two} can be upper bounded via:
\begin{align}
\label{eq:ub_hDh}
\begin{split}
\sum_y\left(\vh^\textrm{T}\mB_y\vh\right)^2\leq4\|\vh\|_2^2\cdot\|\vh\|_1^2\,.
\end{split}
\end{align}

The right-hand side of \eqref{eq:second_term_lb1_two} can be low-bounded as:
\begin{align}
\label{eq:lb_hdx}
\begin{split}
\sum_y\left(\vh^T\mB_y\vx\right)^2=\|\vh\|_1^2\cdot\overline{\vh}^\textrm{T}\mE\overline{\vh}\geq\|\vh\|_1^2\cdot\mu_\mE\,,
\end{split}
\end{align}
where $\overline{\vh}=\vh/\|\vh\|_1$, $\mE=\sum_{y=0}^{M-1}\mB_y\vx{\vx}^T\mB_y^T$ and $\mu_\mE>0$ can be computed using Lemma \ref{LEMMA:E_MIN}. Combining \eqref{eq:second_term_lb1_two}, \eqref{eq:ub_hDh} and \eqref{eq:lb_hdx}, we get that as long as \eqref{eq:h_bd} holds, \eqref{eq:second_term_lb1_two} will also hold.
\begin{align}
\label{eq:h_bd}
\|\vh\|_2<\tau=\left(2-\frac{1}{\theta}\right)\cdot\sqrt{\frac{\mu_\mE}{4}}\,.
\end{align}
Combining \eqref{eq:second_term_lb1},\eqref{eq:second_term_lb1_two}, we have
\begin{align}
\label{eq:lb_gradient_align}
-\sqrt{\frac{1}{MK^2}\sum_y\left((\vz-\vx)^\textrm{T}\mB_y\vx\right)^2}>-\theta\sqrt{4f(\vz)}\,.
\end{align}
Plug the above \eqref{eq:lb_gradient_align} into \eqref{eq:second_term}. We have:
\begin{align}
\label{eq:second_term_lb}
\langle\vz-\vx,\, \nabla f(\vz)\rangle > 4(1-\theta)f(\vz)\,.
\end{align}

We finally show that there exist some $\{\alpha,\beta\}$ to make the regularity condition $RC(\alpha,\beta,\tau)$ hold. Plugging \eqref{eq:second_term_lb}, \eqref{eq:first_term_lb} into \eqref{eq:rc_condition}, we need the following inequality to hold:
\begin{align}
\label{eq:rc_condtion_c1}
    4(1-\theta)f(\vz)\geq\frac{1}{\alpha}\|\vh\|_2^2+\frac{1}{\beta}\frac{16N}{K^2}f(\vz)\,.
\end{align}
Combining \eqref{eq:lb_gradient_align} and \eqref{eq:lb_hdx}, we further have
\begin{align}
\label{eq:lb_obj_fun}
f(\vz)>\frac{1}{4\theta^2}\frac{1}{MK^2}\sum_y\left(\vh^\textrm{T}\mB_y\vx\right)^2\geq\frac{1}{4\theta^2}\frac{1}{MK^2}\|\vh\|_2^2\mu_\mE\,.
\end{align}
Plugging \eqref{eq:lb_obj_fun} into \eqref{eq:rc_condtion_c1}, we need the following to hold:
\begin{align}
\label{eq:rc_condition_c2}
    \left((1-\theta)-\frac{1}{\beta}\frac{4N}{K^2}\right)\frac{1}{\theta^2}\frac{1}{MK^2}\mu_\mE\geq\frac{1}{\alpha}\,.
\end{align}
The constants $\theta$ and $\{\alpha,\beta\}$ that satisfy \eqref{eq:rc_condition_c2} can be chosen in the following order:
\begin{enumerate}
    \item Choose some $\theta\in(\frac{1}{2}, 1)$.
    \item Fix $\theta$, choose some $\beta>\frac{4N}{(1-\theta)K^2}$.
    \item Fix $\theta,\beta$, choose some $\alpha\geq\left((1-\theta)-\frac{1}{\beta}\frac{4N}{K^2}\right)^{-1}\frac{\theta^2MK^2}{\mu_\mE}$.
\end{enumerate}
We can get that the regularity condition $RC(\alpha,\beta,\tau)$ holds for some $\{\alpha>0,\beta>0\}$ and $\tau=(2-\frac{1}{\theta})\sqrt{\frac{\mu_\mE}{4}}$.

\emph{\bfseries Step 2)} We use $\overline{\vz}_{+1}=\vz-\eta\nabla f(\vz)$ to denote one gradient descent update, and $\vz_{+1}=\mathscr{P}_\mathcal{S}(\overline{\vz}_{+1})\in\mathcal{S}$ to denote one projected gradient descent update. As detailed in the Supplementary Material, we can prove that $\|\vz_{+1}-\vx\|_2^2\leq\|\overline{\vz}_{+1}-\vx\|_2^2$.

We can use the regularity condition $RC(\alpha,\beta,\tau)$ and get that $\|\vz_{t+k}-\vx\|_2^2<(1-\frac{2\eta}{\alpha})^k\cdot\|\vz_t-\vx\|_2^2$.
\end{proof}

\subsection{Monte Carlo Simulations}
\label{app:monte_carlo}
In order to find the distributions of $\sum_{y=0}^{M-1}\big(\widehat{\vh}^T\mB_y\widehat{\vh}\big)^2$ and $\widehat{\vh}^T\mE\widehat{\vh}$, we need to sample uniformly with respect to $\widehat{\vh}\in\widehat{\mathcal{H}}$:
\begin{align}
\label{eq:H_hat_set_def}
\widehat{\mathcal{H}}=\left\{\widehat{\vh}\,\left|\,\widehat{\vh}=\frac{\vh}{\|\vh\|_2}=\frac{\vz-\vx}{\|\vz-\vx\|_2},\quad  \vz\in\mathcal{S}\,,\vz\neq\vx\right.\right\}\,.
\end{align}
The nonconvex set $\widehat{\mathcal{H}}$ is a constrained region on the unit sphere. We can verify that $\widehat{\mathcal{H}}$ is the same as the new set $\mathcal{F}$ defined by the following constraints:
\begin{align}
    \label{eq:F_set_1}
    &\sum_{i=1}^M\hat{h}_i=0\\
    \label{eq:F_set_2}
    &\hat{h}_i\geq 0\quad\textnormal{if }x_i=0\\
    \label{eq:F_set_3}
    &\hat{h}_i\leq 0\quad\textnormal{if }x_i=1\\
    \label{eq:F_set_4}
    &\|\widehat{\vh}\|_2=1\,.
\end{align}
\begin{itemize}
\item If $\widehat{\vh}\in\widehat{\mathcal{H}}$, it is easy to see that $\widehat{\vh}\in\mathcal{F}$.
\item Since $\|\widehat{\vh}\|_2=1$, we can get that $|\hat{h}_i|<1$. If $\widehat{\vh}\in\mathcal{F}$, we can construct such a $\widetilde{\vz}=\vx+\widehat{\vh}$. It is easy to verify that $\widetilde{\vz}\in\mathcal{S}$ and $\|\widetilde{\vz}-\vx\|_2=\|\widehat{\vh}\|_2=1$. Hence $\widehat{\vh}=\frac{\widetilde{\vz}-\vx}{\|\widetilde{\vz}-\vx\|_2}\in\widehat{\mathcal{H}}$. 
\end{itemize}

Since directly sampling from the nonconvex set $\mathcal{F}$ is difficult, we do it indirectly. Let $\mathcal{J}$ denote the convex set defined by the constraints \eqref{eq:F_set_1}-\eqref{eq:F_set_3}. We first perform Gibbs sampling \cite{Geman:Gibbs:1984} from $\mathcal{J}$ according to a constrained standard multivariate Gaussian \cite{Schmidt:2009,Burkardt:Trunc:2014}, and then project the samples onto the unit sphere defined by \eqref{eq:F_set_4}. This way we can generate samples from $\mathcal{F}$ uniformly. Based on these samples, we can finally estimate the empirical upper and lower bounds $\nu_1>0,\nu_2>0$ in \eqref{eq:upper_bd_hbh} and \eqref{eq:lower_bd_heh}.

\end{appendices}

\bibliographystyle{IEEEbib}
\bibliography{refs}

\begin{thebibliography}{10}

\bibitem{Torgerson1952}
W.~S. Torgerson,
\newblock ``Multidimensional scaling: I. {T}heory and method,''
\newblock {\em Psychometrika}, vol. 17, no. 4, pp. 401--419, Dec 1952.

\bibitem{Duxbury:2016}
P.~Duxbury, L.~Granlund, S.~Gujarathi, P.~Juhas, and S.~Billinge,
\newblock ``The unassigned distance geometry problem,''
\newblock {\em Discrete Appl. Math.}, vol. 204, no. C, pp. 117--132, May 2016.

\bibitem{Shamos:1978:CG:908431}
M.~I. Shamos,
\newblock {\em Computational Geometry.},
\newblock Ph.D. thesis, Yale University, New Haven, CT, USA, 1978.

\bibitem{Skiena:1990}
S.~S. Skiena, W.~D. Smith, and P.~Lemke,
\newblock ``Reconstructing sets from interpoint distances (extended
  abstract),''
\newblock in {\em Proceedings of the 6th SoCG}, New York, NY, USA, 1990, pp.
  332--339, ACM.

\bibitem{Dakic:2000}
T.~Dakic,
\newblock {\em On the Turnpike Problem},
\newblock Ph.D. thesis, Simon Fraser University, 2000.

\bibitem{Lemke2003}
P.~Lemke, S.~S. Skiena, and W.~D. Smith,
\newblock {\em Reconstructing Sets From Interpoint Distances}, pp. 597--631,
\newblock Springer Berlin Heidelberg, 2003.

\bibitem{EDG14}
L.~Liberti, C.~Lavor, N.~Maculan, and A.~Mucherino,
\newblock ``Euclidean distance geometry and applications,''
\newblock {\em SIAM Review}, vol. 56, no. 1, pp. 3--69, 2014.

\bibitem{PD:1976}
H.~Smith and M.~L. Birnstiel,
\newblock ``A simple method for {DNA} restriction site mapping,''
\newblock {\em Nucleic Acids Res.}, vol. 3, no. 9, pp. 2387--2389, Sept. 1976.

\bibitem{MOREY20133}
M.~Morey, A.~Fern{\'a}ndez-Marmiesse, D.~Casti{\~n}eiras, J.~M. Fraga, M.~L.
  Couce, and J.~A. Cocho,
\newblock ``A glimpse into past, present, and future {DNA} sequencing,''
\newblock {\em Molecular Genetics and Metabolism}, vol. 110, no. 1, pp. 3 --
  24, 2013,
\newblock Special Issue: Diagnosis.

\bibitem{REUTER2015586}
J.~Reuter, D.~V. Spacek, and M.~Snyder,
\newblock ``High-throughput sequencing technologies,''
\newblock {\em Molecular Cell}, vol. 58, no. 4, pp. 586 -- 597, 2015.

\bibitem{NGS_review:2016}
S.~Goodwin, J.~D. McPherson, and W.~R. McCombie,
\newblock ``Coming of age: ten years of next-generation sequencing
  technologies,''
\newblock {\em Nature reviews. Genetics}, vol. 17, no. 6, pp. 333—351, May
  2016.

\bibitem{Waterman:1995}
M.~S. Waterman,
\newblock {\em Introduction to computational biology: maps, sequences and
  genomes.},
\newblock Chapman \& Hall Ltd, London, UK, 1995.

\bibitem{Skiena:1994}
S.~S. Skiena and G.~Sundaram,
\newblock ``A partial digest approach to restriction site mapping,''
\newblock {\em Bulletin of Mathematical Biology}, vol. 56, no. 2, pp. 275--294,
  Mar 1994.

\bibitem{TMS:1986}
D.~F. Hunt, J.~R. Yates, J.~Shabanowitz, S.~Winston, and C.~R. Hauer,
\newblock ``Protein sequencing by tandem mass spectrometry,''
\newblock {\em PNAS}, vol. 83, no. 17, pp. 6233--6237, September 1986.

\bibitem{TMS:2003}
A.~I. Nesvizhskii, A.~Keller, E.~Kolker, and R.~Aebersold,
\newblock ``A statistical model for identifying proteins by tandem mass
  spectrometry,''
\newblock {\em Analytical Chemistry}, vol. 75, no. 17, pp. 4646--4658, 2003.

\bibitem{DatabasePeptide:1997}
J.~A. Taylor and R.~S. Johnson,
\newblock ``Sequence database searches via de novo peptide sequencing by tandem
  mass spectrometry,''
\newblock {\em Rapid Commun. Mass Spectrom.}, vol. 11, no. 9, pp. 1067--1075,
  1997.

\bibitem{DenovoPeptide:1999}
V.~Dancík, T.~Addona, K.~Clauser, J.~Vath, and P.~Pevzner,
\newblock ``De novo peptide sequencing via tandem mass spectrometry,''
\newblock {\em J. Comput. Biol.}, vol. 6, no. 3-4, pp. 327--342, 1999.

\bibitem{CyclicPeptide:2011}
H.~Mohimani, W.-T. Liu, Y.-L. Yang, S.~P. Gaudencio, W.~Fenical, P.~C.
  Dorrestein, and P.~A. Pevzner,
\newblock ``Multiplex de novo sequencing of peptide antibiotics,''
\newblock {\em J. Comput. Biol.}, vol. 18, no. 11, pp. 1371–1381, 2011.

\bibitem{Fomin:2015}
E.~Fomin,
\newblock ``Reconstruction of sequence from its circular partial sums for
  cyclopeptide sequencing problem,''
\newblock {\em J. Bioinf. Comput. Biol.}, vol. 13, no. 01, pp. 1540008, 2015.

\bibitem{Acharya2015StringRF}
J.~Acharya, H.~Das, O.~Milenkovic, A.~Orlitsky, and S.~Pan,
\newblock ``String reconstruction from substring compositions,''
\newblock {\em SIAM Journal on Discrete Math.}, vol. 29, pp. 1340--1371, 2015.

\bibitem{Bulteau:2014}
L.~Bulteau, F.~H{\"u}ffner, C.~Komusiewicz, and R.~Niedermeier,
\newblock ``{Multivariate Algorithmics for NP-Hard String Problems},''
\newblock {\em {Bulletin- European Association for Theoretical Computer
  Science}}, vol. 114, pp. 1--43, 2014.

\bibitem{LEE2013}
T.~Lee, J.~C. Na, H.~Park, K.~Park, and J.~S. Sim,
\newblock ``Finding consensus and optimal alignment of circular strings,''
\newblock {\em Theoretical Computer Science}, vol. 468, pp. 92 -- 101, 2013.

\bibitem{SpecEst:2016}
I.~Zintchenko and N.~Wiebe,
\newblock ``Randomized gap and amplitude estimation,''
\newblock {\em Phys. Rev. A}, vol. 93, pp. 062306, Jun 2016.

\bibitem{Sidon1932}
S.~Sidon,
\newblock ``Ein satz {\"u}ber trigonometrische polynome und seine anwendung in
  der theorie der fourier-reihen,''
\newblock {\em Mathematische Annalen}, vol. 106, no. 1, pp. 536--539, Dec 1932.

\bibitem{Babcock:1953}
W.~C. {Babcock},
\newblock ``Intermodulation interference in radio systems frequency of
  occurrence and control by channel selection,''
\newblock {\em The Bell System Technical Journal}, vol. 32, no. 1, pp. 63--73,
  Jan 1953.

\bibitem{Golomb:2007}
A.~{Bekir} and S.~W. {Golomb},
\newblock ``There are no further counterexamples to {S}. {P}iccard's theorem,''
\newblock {\em IEEE Trans. Inf. Theory}, vol. 53, no. 8, pp. 2864--2867, Aug
  2007.

\bibitem{Lemke:1988}
P.~Lemke and M.~Werman,
\newblock ``On the complexity of inverting the autocorrelation function of a
  finite integer sequence, and the problem of locating n points on a line,
  given the $\binom{n}{2}$ unlabelled distances between them,''
\newblock in {\em IMA Preprints Series}, 1988, number 453, pp. 1--12.

\bibitem{Lenstra1982}
A.~K. Lenstra, H.~W. Lenstra, and L.~Lov{\'a}sz,
\newblock ``Factoring polynomials with rational coefficients,''
\newblock {\em Mathematische Annalen}, vol. 261, no. 4, pp. 515--534, Dec 1982.

\bibitem{ZhangExp:1994}
Z.~Zhang,
\newblock ``An exponential example for a partial digest mapping algorithm,''
\newblock {\em J. Comput. Biol.}, vol. 1, no. 3, pp. 235--239, 1994.

\bibitem{Abbas2016}
M.~M. Abbas and H.~M. Bahig,
\newblock ``A fast exact sequential algorithm for the partial digest problem,''
\newblock {\em BMC Bioinformatics}, vol. 17, no. 19, pp. 510, Dec 2016.

\bibitem{Miller:1960}
C.~E. Miller, A.~W. Tucker, and R.~A. Zemlin,
\newblock ``Integer programming formulation of traveling salesman problems,''
\newblock {\em J. ACM}, vol. 7, no. 4, pp. 326--329, Oct. 1960.

\bibitem{IBARAKI197639}
T.~Ibaraki,
\newblock ``Integer programming formulation of combinatorial optimization
  problems,''
\newblock {\em Discrete Math.}, vol. 16, no. 1, pp. 39 -- 52, 1976.

\bibitem{Papadimitriou:1982}
C.~H. Papadimitriou and K.~Steiglitz,
\newblock {\em Combinatorial optimization: algorithms and complexity},
\newblock Prentice-Hall, Inc., Upper Saddle River, NJ, USA, 1982.

\bibitem{Cieliebak:2004}
M.~Cieliebak and S.~Eidenbenz,
\newblock ``Measurement errors make the partial digest problem {NP}-hard,''
\newblock in {\em LATIN 2004: Theoretical Informatics}. 2004, pp. 379--390,
  Springer Berlin Heidelberg.

\bibitem{Assignment:1957}
J.~Munkres,
\newblock ``Algorithms for the assignment and transportation problems,''
\newblock {\em J. Soc. Ind. Appl. Math.}, vol. 5, no. 1, pp. 32--38, 1957.

\bibitem{Burkard:2009}
R.~Burkard, M.~Dell'Amico, and S.~Martello,
\newblock {\em Assignment Problems},
\newblock SIAM, Philadelphia, PA, USA, 2009.

\bibitem{Fontoura:superset:2018}
L.~Fontoura, R.~Martinelli, M.~Poggi, and T.~Vidal,
\newblock ``{The minimum distance superset problem: formulations and
  algorithms},''
\newblock {\em J. Glob. Optim.}, vol. 72, no. 1, pp. 27--53, September 2018.

\bibitem{Fomin:2016:1}
E.~Fomin,
\newblock ``A simple approach to the reconstruction of a set of points from the
  multiset of $n^2$ pairwise distances in $n^2$ steps for the sequencing
  problem: {I}. theory,''
\newblock {\em J. Comput. Biol.}, vol. 23, no. 9, pp. 769--775, 2016.

\bibitem{Fomin:2016:2}
E.~Fomin,
\newblock ``A simple approach to the reconstruction of a set of points from the
  multiset of $n^2$ pairwise distances in $n^2$ steps for the sequencing
  problem: {II}. algorithm,''
\newblock {\em J. Comput. Biol.}, vol. 23, no. 12, pp. 934--942, 2016.

\bibitem{Fomin:2019:3}
E.~Fomin,
\newblock ``A simple approach to the reconstruction of a set of points from the
  multiset of pairwise distances in n2 steps for the sequencing problem: {III}.
  noise inputs for the beltway case,''
\newblock {\em J. Comput. Biol.}, vol. 26, no. 1, pp. 68--75, 2019.

\bibitem{BOUTIN2004709}
M.~Boutin and G.~Kemper,
\newblock ``On reconstructing n-point configurations from the distribution of
  distances or areas,''
\newblock {\em Advances in Applied Mathematics}, vol. 32, no. 4, pp. 709 --
  735, 2004.

\bibitem{GUGR2018}
S.~J. Gortler, L.~Theran, and D.~P. Thurston,
\newblock ``Generic unlabeled global rigidity,''
\newblock {\em Forum of Mathematics, Sigma}, vol. 7, no. e21, pp. 1--32, 2019.

\bibitem{CGGR2010}
S.~J. Gortler, A.~D. Healy, and D.~P. Thurston,
\newblock ``Characterizing generic global rigidity,''
\newblock {\em Am. J. Math.}, vol. 132, no. 4, pp. 897--939, 2010.

\bibitem{Netrapalli2015:RPAM}
P.~Netrapalli, P.~Jain, and S.~Sanghavi,
\newblock ``Phase retrieval using alternating minimization,''
\newblock {\em IEEE Trans. Signal Process.}, vol. 63, no. 18, pp. 4814--4826,
  September 2015.

\bibitem{WF:2015}
E.~J. {Cand{\`e}s}, X.~{Li}, and M.~{Soltanolkotabi},
\newblock ``Phase retrieval via {W}irtinger flow: theory and algorithms,''
\newblock {\em IEEE Trans. Inf. Theory}, vol. 61, no. 4, pp. 1985--2007, April
  2015.

\bibitem{Oppenheim:2009}
A.~V. Oppenheim and R.~W. Schafer,
\newblock {\em Discrete-Time Signal Processing},
\newblock Prentice Hall Press, Upper Saddle River, NJ, USA, 3rd edition, 2009.

\bibitem{Vaseghi:2006}
S.~V. Vaseghi,
\newblock {\em Advanced Digital Signal Processing and Noise Reduction},
\newblock John Wiley \& Sons, Inc., USA, 2006.

\bibitem{Rokach2005}
L.~Rokach and O.~Maimon,
\newblock {\em Clustering Methods}, pp. 321--352,
\newblock Springer US, Boston, MA, 2005.

\bibitem{Gerchberg:72}
R.~W. Gerchberg and W.~O. Saxton,
\newblock ``A practical algorithm for the determination of phase from image and
  diffraction plane pictures,''
\newblock {\em Optik}, vol. 35, no. 2, pp. 237–246, Aug 1972.

\bibitem{Fienup:82}
J.~R. Fienup,
\newblock ``Phase retrieval algorithms: a comparison,''
\newblock {\em Appl. Opt.}, vol. 21, no. 15, pp. 2758--2769, Aug 1982.

\bibitem{QuadFR:2019}
S.~{Huang}, S.~{Gupta}, and I.~{Dokmanić},
\newblock ``Solving complex quadratic systems with full-rank random matrices,''
\newblock {\em IEEE Trans. Signal Process.}, vol. 68, pp. 4782--4796, 2020.

\bibitem{projection06}
S.~Shalev-Shwartz and Y.~Singer,
\newblock ``Efficient learning of label ranking by soft projections onto
  polyhedra,''
\newblock {\em J. Mach. Learn. Res.}, vol. 7, pp. 1567--1599, Dec. 2006.

\bibitem{Duchi:2008}
J.~Duchi, S.~Shalev-Shwartz, Y.~Singer, and T.~Chandra,
\newblock ``Efficient projections onto the l1-ball for learning in high
  dimensions,''
\newblock in {\em Proceedings of the 25th ICML}, 2008, pp. 272--279.

\bibitem{l1_box10}
M.~D. Gupta, S.~Kumar, and J.~Xiao,
\newblock ``L1 projections with box constraints,''
\newblock {\em arXiv preprint arXiv:1010.0141}, 2010.

\bibitem{l1_box12}
N.~K. Batmanghelich, B.~Taskar, and C.~Davatzikos,
\newblock ``Generative-discriminative basis learning for medical imaging,''
\newblock {\em IEEE Transactions on Medical Imaging}, vol. 31, no. 1, pp.
  51--69, Jan 2012.

\bibitem{E:Coli:K12:MG1655}
U.~Wisconsin,
\newblock ``Escherichia coli str. {K}-12 substr. {MG}1655 ({E}. coli),''
  \url{https://www.ncbi.nlm.nih.gov/assembly/GCF_000005845.2#/def},
\newblock {PRJNA}225 - SAMN02604091.

\bibitem{NucleotidesSymbol:1985}
A.~Cornish-Bowden,
\newblock ``Nomenclature for incompletely specified bases in nucleic acid
  sequences: recommendations 1984,''
\newblock {\em Nucleic Acids Res}, vol. 13, no. 9, pp. 3021--3030, May 1985.

\bibitem{EMD:2001}
E.~Levina and P.~Bickel,
\newblock ``The {E}arth mover's distance is the {M}allows distance: some
  insights from statistics,''
\newblock in {\em Proceedings of the 8th IEEE ICCV}, 2001, vol.~2, pp. 251--256
  vol.2.

\bibitem{Hungarian:1955}
H.~W. Kuhn,
\newblock ``The {H}ungarian method for the assignment problem,''
\newblock {\em Naval Research Logistics Quarterly}, vol. 2, no. 1‐2, pp.
  83--97, 1955.

\bibitem{Zehni:3D:2020}
M.~{Zehni}, S.~{Huang}, I.~{Dokmanić}, and Z.~{Zhao},
\newblock ``3d unknown view tomography via rotation invariants,''
\newblock in {\em Proceedings of the 45th ICASSP}, 2020, pp. 1449--1453.

\bibitem{Kavan2013}
D.~Kavan, M.~Kuzma, K.~Lemr, K.~A. Schug, and V.~Havlicek,
\newblock ``Cyclone---a utility for de novo sequencing of microbial cyclic
  peptides,''
\newblock {\em J. Am. Soc. Mass Spectrom.}, vol. 24, no. 8, pp. 1177--1184, Aug
  2013.

\bibitem{Geman:Gibbs:1984}
S.~{Geman} and D.~{Geman},
\newblock ``Stochastic relaxation, gibbs distributions, and the bayesian
  restoration of images,''
\newblock {\em IEEE Trans. Pattern Anal. Mach. Intell.}, vol. PAMI-6, no. 6,
  pp. 721--741, 1984.

\bibitem{Schmidt:2009}
M.~Schmidt,
\newblock ``Linearly constrained bayesian matrix factorization for blind source
  separation,''
\newblock in {\em Advances in Neural Information Processing Systems}, 2009,
  vol.~22, pp. 1624--1632.

\bibitem{Burkardt:Trunc:2014}
J.~Burkardt,
\newblock ``The truncated normal distribution,''
\newblock Tech. {R}ep., Department of Scientific Computing, Florida State
  University, 2014.

\bibitem{Schur1911}
J.~Schur,
\newblock ``Bemerkungen zur theorie der beschr{\"a}nkten bilinearformen mit
  unendlich vielen ver{\"a}nderlichen.,''
\newblock {\em Journal f{\"u}r die reine und angewandte Mathematik}, vol. 140,
  pp. 1--28, 1911.

\end{thebibliography}

\newpage
\onecolumn

\setcounter{section}{0}
\begin{center}
{\LARGE Supplementary Material for \\``Reconstructing Point Sets from Distance Distributions''}
\end{center}

\vspace{5em}

The supplementary material contains
\begin{itemize}
    \item Figure illustration and pseudo code for the agglomerative clustering algorithm introduced in Section II.B to extract point locations from the recovered solution $\vz$.
    \item Detailed step by step derivations of the proofs presented in the Appendix.
    \item Complementary analysis on convergence and difficulty of recovery for the beltway problem.
\end{itemize}

\newpage

\setcounter{lemma}{0}
\setcounter{theorem}{0}

\section{Agglomerative Clustering Algorithm and Illustration}
\label{supp:sec:intro}

\begin{figure*}[tbp]
\centering
\subfigure{
\label{supp:fig:z_vec}
\includegraphics[width=\textwidth]{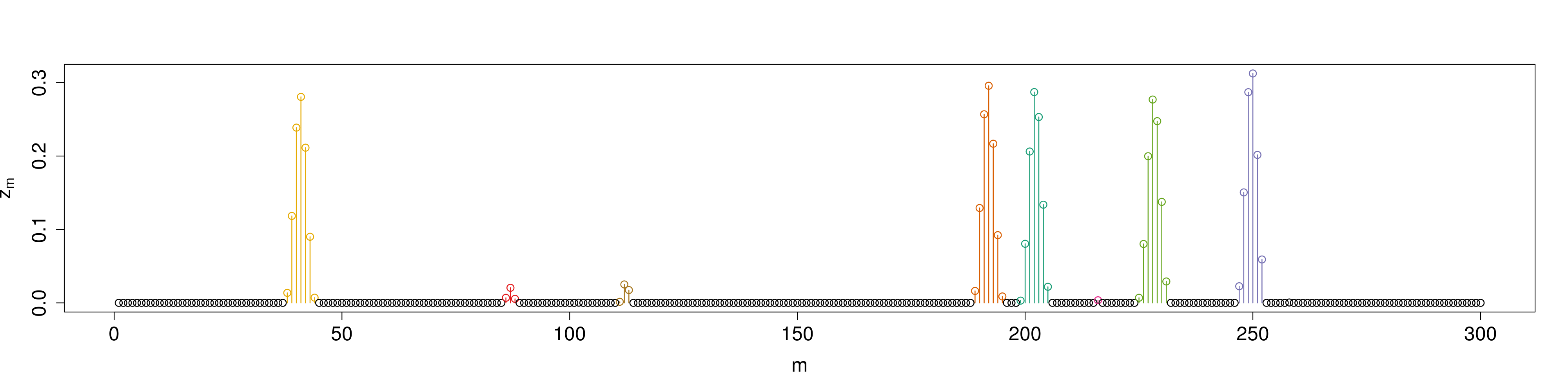}}\\
\subfigure{
\label{supp:fig:h_clustering}
\includegraphics[width=\textwidth]{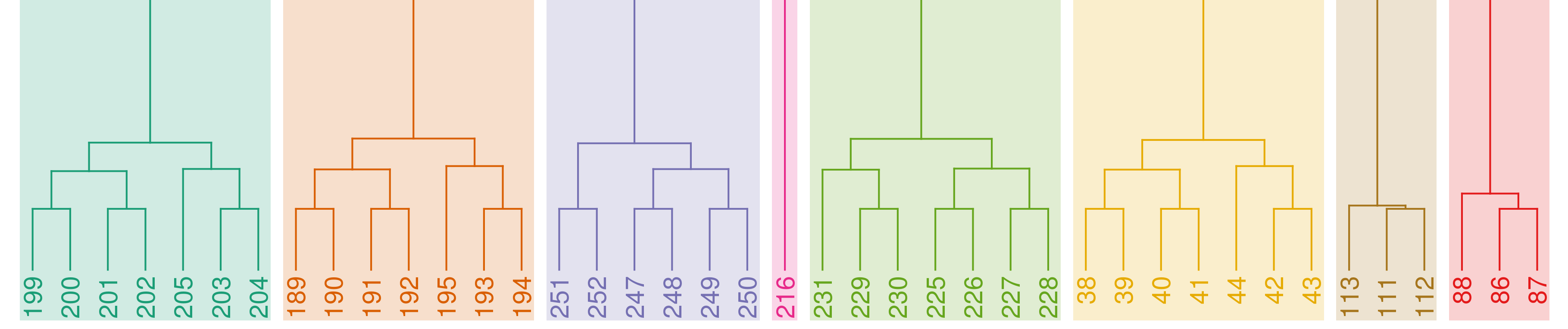}}
\caption{Illustration of agglomerative clustering for $N=5$. The agglomerative clustering produces $8$ clusters, only the centroids of the $5$ clusters with the highest weights are taken as the point locations.}
\label{supp:fig:hierarchical_clustering}
\end{figure*}

In the noisy case we have $\vx\in[0,1]^M$. The $m$-th entry $z_m$ of $\vz$ is the \emph{estimated} probability that a point is located at the $m$-th segment $l_m$. 
Extracting $N$ point locations from $\vz$ can be posed as a clustering problem. As illustrated in Fig. \ref{supp:fig:hierarchical_clustering}, each $l_m$ is viewed as a cluster with the weight $z_m$. We can cluster the $M$ segments using the agglomerative clustering approach \cite{Rokach2005} summarized in Algorithm \ref{supp:alg:agg}. The centroids of the $N$ clusters with the largest weights are taken as the estimated point locations.

\begin{algorithm}[htbp]
\caption{Extracting the point locations via agglomerative clustering }
\label{supp:alg:agg}
\begin{algorithmic}[1]
\REQUIRE The solution $\vz$, the smallest distance between two different points $d_{\min}$.
\STATE Treat each segment $l_m$ with a nonzero weight $\omega_m=z_m$ as one cluster $C_m=\set{l_m}$
\STATE Compute the centroid $c_m$ of every cluster $C_m\in\mathcal{C}=\set{C_1,C_2,\cdots}$
\WHILE{$|\mathcal{C}|>N$}
\STATE Merge the two closest clusters\footnotemark $\set{C_i,C_j}$ with weights $\set{w_i<1,w_j<1}$ and centroids $\|c_i-c_j\|<d_{\min}$ into one cluster $C_i$
\STATE Update the weight $w_i$ and the centroid $c_i$ of the new cluster $C_i$
\IF{the clusters cannot be merged further}
\STATE \textbf{break}
\ENDIF
\ENDWHILE
\STATE {\bfseries Return} the set of centroids $\set{c_1,c_2,\cdots}$
\end{algorithmic}
\end{algorithm}
\footnotetext{Randomly pick a pair of clusters in case of a draw.}

\section{Detailed Proofs of the Lemmas and the Theorem}
We recover the solution $\vz$ by solving the following constrained nonconvex optimization problem:
\begin{align}
\label{supp:eq:constrained_nonconvex}
\begin{split}
\min_{\vz}\quad &f(\vz)=\frac{1}{M}\sum_{y=0}^{M-1}\big(q_{\vz}(y)-p(y)\big)^2
\end{split}\\
\label{supp:eq:relaxed_constraint}
\textnormal{subject to}\quad &0\leq z_m\leq 1,\ \forall\ m\in\set{1,\cdots,M}\\
\label{supp:eq:l1_constraint}
&\sum_{m=1}^Mz_m = N\,.
\end{align}
Let $\mathcal{S}$ denote the convex set defined by the constraints \eqref{supp:eq:relaxed_constraint},\eqref{supp:eq:l1_constraint}. Given a proper initialization $\vz_0$, we propose to solve \eqref{supp:eq:constrained_nonconvex} via the projected gradient descent method:
\begin{align}
\label{supp:eq:pgd_update}
\vz_{t+1} = \mathscr{P}_\mathcal{S}\big(\vz_t-\eta\cdot\nabla f(\vz_t)\big)\,,
\end{align}
where $\eta>0$ is the step size, $\mathscr{P}_\mathcal{S}(\cdot)$ is the projection of the gradient descent update onto $\mathcal{S}$. Let $\overline{\vz}=\vz-\eta\cdot\nabla f(\vz)$ denote the gradient descent update. The projection is the solution to the following convex problem
\begin{align}
\label{supp:eq:projection_box_constraints}
\begin{split}
\min_\vs\quad&\frac{1}{2}\|\vs-\overline{\vz}\|_2^2\\
\textnormal{subject to}\quad&0\leq s_m\leq 1,\ \forall\ m\in\{1,\cdots,M\}\\
&\sum_{m=1}^Ms_m=N\,.
\end{split}
\end{align}

\begin{lemma}
\label{supp:lemma:lower_bound}[Lemma 2, \cite{projection06}]
Let $\vs$ be the optimal solution to the minimization problem in (\ref{supp:eq:projection_box_constraints}). Let $i$ and $j$ be two indices such that $\overline{z}_i>\overline{z}_j$. If $s_i=0$ then $s_j$ must be $0$ as well.
\end{lemma}

\subsection{Proof of Lemma \ref{supp:lemma:upper_bound}}

\begin{lemma}
\label{supp:lemma:upper_bound}
Let $\vs$ be the optimal solution to the minimization problem in (\ref{supp:eq:projection_box_constraints}). Let $i$ and $j$ be two indices such that $\overline{z}_i>\overline{z}_j$. If $s_j=1$ then $s_i$ must be $1$ as well.
\end{lemma}

\begin{proof}
Suppose that $s_i<1$. We construct a vector $\widetilde{\vs}\in\mathbb{R}^M$ out of $\vs$ by swapping the positions of $s_i$ and $s_j$ in $\vs$, i.e. $\widetilde{s}_i=s_j$ and $\widetilde{s}_j=s_i$. $\widetilde{\vs}$ also satisfies the constraints in (\ref{supp:eq:projection_box_constraints}). Since $\overline{z}_i>\overline{z}_j$ and $s_j=1$, we then have:
\begin{align}
\begin{split}
\|\vs-\overline{\vz}\|_2^2-\|\tilde{\vs}-\overline{\vz}\|_2^2&=(s_i-\overline{z}_i)^2+(s_j-\overline{z}_j)^2-(s_j-\overline{z}_i)^2-(s_i-\overline{z}_j)^2\\
&=2(1-s_i)(\overline{z}_i-\overline{z}_j)\\
&> 0\,.
\end{split}
\end{align}
This is in contradiction with the fact that $\vs$ is the minimizer of (\ref{supp:eq:projection_box_constraints}). Hence $s_i$ must be $1$.
\end{proof}

\subsection{Proof of Lemma \ref{supp:LEMMA:UNIQUE_R}}
\label{supp:proof:lemma:unique_r}

Since reordering of the entries of $\overline{\vz}$ does not change the value of (\ref{supp:eq:projection_box_constraints}), and adding some constant to $\overline{\vz}$ does not change the solution of (\ref{supp:eq:projection_box_constraints}), without loss of generality we can assume that the entries of $\overline{\vz}$ are all positive in a non-increasing order: $\overline{z}_1\geq \overline{z}_2\geq\cdots\geq \overline{z}_M\geq N$. Lemma \ref{supp:lemma:lower_bound} and \ref{supp:lemma:upper_bound} imply that for the optimal solution $\vs$:
\begin{itemize}
\item The entries of $\vs$ are in a non-increasing order.
\item The first $\rho$ entries of $\vs$ satisfy $0<s_m\leq 1$; the rest of the entries are $0$s.
\end{itemize}
Since $\exists\ s_m\in(0,1)$, we have $\rho>N$ and that at most $N-1$ entries of $\vs$ could equal $1$. Suppose the first $r-1$ entries of $\vs$ are all $1$s, the following must hold for $1\leq r\leq N<\rho$
\begin{align}
\label{supp:eq:cst_r}
&0<\overline{z}_r-\kappa< 1\\
\label{supp:eq:cst_rm1}
&1\leq \overline{z}_{r-1}-\kappa,\quad\textnormal{if }2\leq r\leq N<\rho\,.
\end{align}

\begin{lemma}
\label{supp:LEMMA:UNIQUE_R}
If the solution $\vs$ has at least one entry $s_m\in(0,1)$, there is one and only one $r\in\set{1,\ldots,N}$ that produces the $(\rho,\kappa)$ satisfying \eqref{supp:eq:cst_r} and \eqref{supp:eq:cst_rm1}.
\end{lemma}

\begin{proof}
Let $\mathcal{S}$ denote the convex set defined by the constraints $0\leq s_m\leq 1$, $\forall\ 1\leq m\leq M$ and $\|\vs\|_1=N$. Note that the entries of $\overline{\vz}$ and $\vx$ are in a non-increasing order. We will proceed in the following two steps:
\begin{enumerate}[label={\arabic*)}]
\item Since $\mathcal{S}$ is non-empty, the projection onto it exists, i.e. there is one $r\in\set{1,\cdots,N}$ that produces the $(\rho,\kappa)$ that satisfy $1\leq \overline{z}_{r-1}-\kappa$ if $2\leq r\leq N<\rho$ and $0<\overline{z}_r-\kappa<1$. In fact, since $\mathcal{S}$ is a closed convex set, the projection is also unique.
\item Without loss of generality, suppose that there are two different $r_1<r_2\in\set{1,\cdots,N}$ that produce the two pairs $(\rho_1,\kappa_1|r_1)$ and $(\rho_2,\kappa_2|r_2)$ that satisfy the constraints (\ref{supp:eq:cst_r}) and (\ref{supp:eq:cst_rm1}). We have:
\begin{gather*}
r_1<r_2 \,\Rightarrow\, r_1\leq r_2-1 \,\Rightarrow\, \overline{z}_{r_2-1}-\kappa_1\leq \overline{z}_{r_1}-\kappa_1< 1 \,\Rightarrow\, \overline{z}_{r_2-1}-1< \kappa_1\\
1< \overline{z}_{r_2-1}-\kappa_2 \,\Rightarrow\, \kappa_2< \overline{z}_{r_2-1}-1\,.
\end{gather*}
Hence $\kappa_2<\kappa_1$. We further have:
\begin{gather*}
\overline{z}_{\rho_1}-\kappa_1>0 \,\Rightarrow\, \overline{z}_{\rho_1}>\kappa_1\\
\overline{z}_{\rho_2+1}-\kappa_2\leq 0 \,\Rightarrow\, \overline{z}_{\rho_2+1}\leq\kappa_2\\
\overline{z}_{\rho_2+1}\leq\kappa_2<\kappa_1< \overline{z}_{\rho_1} \,\Rightarrow\, \overline{z}_{\rho_2+1} < \overline{z}_{\rho_1}\,.
\end{gather*}
Hence $\rho_2+1>\rho_1\,\Rightarrow\,\rho_2\geq\rho_1$.
\begin{enumerate}
\item If $r_2\leq\rho_1$, we can find the upper bound for the sum of the first $\rho_1$ entries of $\vs_1$:
\begin{align}
\label{supp:eq:r_rho_neq_1}
\begin{split}
r_1-1+\sum_{m=r_1}^{\rho_1}(\overline{z}_m-\kappa_1)=\,&r_1-1+\sum_{m=r_1}^{r_2-1}(\overline{z}_m-\kappa_1)+\sum_{m=r_2}^{\rho_1}(\overline{z}_m-\kappa_1)\\
<\,&r_1-1+\sum_{m=r_1}^{r_2-1}1+\sum_{m=r_2}^{\rho_1}(\overline{z}_m-\kappa_1)\\
<\,&r_2-1+\sum_{m=r_2}^{\rho_1}(\overline{z}_m-\kappa_2)\\
\leq\,&r_2-1+\sum_{m=r_2}^{\rho_2}(\overline{z}_m-\kappa_2)\,.
\end{split}
\end{align}

\item If $r_2>\rho_1$, we can compute:
\begin{align}
\label{supp:eq:r_rho_neq_2}
\begin{split}
r_1-1+\sum_{m=r_1}^{\rho_1}(\overline{z}_m-\kappa_1)\,\leq\, &r_1-1+\sum_{m=r_1}^{\rho_1}1\\
=\,&\rho_1\\
\leq\,&r_2-1\\
<\,&r_2-1+\sum_{m=r_2}^{\rho_2}(\overline{z}_m-\kappa_2)\,.
\end{split}
\end{align}
\end{enumerate}

Let $\vs_1,\vs_2$ denote the solutions of \eqref{supp:eq:projection_box_constraints} produced by $r_1,r_2$ respectively. Both (\ref{supp:eq:r_rho_neq_1}) and (\ref{supp:eq:r_rho_neq_2}) show that $\sum_{m=1}^Ms_{1m}<\sum_{m=1}^Ms_{2m}$. This is in contradiction with the assumption that $\sum_{m=1}^Ms_{1m}=\sum_{m=1}^Ms_{2m}=N$. Hence $r_1=r_2$, there is only one $r\in\set{1,\cdots,N}$ that produces the $(\rho,\kappa)$ that satisfy the constraints (\ref{supp:eq:cst_r}) and (\ref{supp:eq:cst_rm1}).

\end{enumerate}
\end{proof}

\subsection{Proof of Lemma \ref{supp:LEMMA:E_MIN}}
\label{supp:proof:lemma:e_min}

\begin{lemma}
\label{supp:LEMMA:E_MIN}
Let $\mB_y=\mA_y+\mA_y^T$, $\mE=\sum_{y=0}^{M-1}\mB_y\vx{\vx}^T\mB_y^T$, $\mathcal{S}$ be the convex set defined by the constraints \eqref{supp:eq:relaxed_constraint},\eqref{supp:eq:l1_constraint}. The following problem is convex $\forall\ \vz\in\mathcal{S}$, $\vz\neq\vx$:
\begin{align}
\label{supp:eq:lambda_min_val}
\begin{split}
\mu_\mE &= \min_{\vz\in\mathcal{S}, \vz\neq\vx}\,\frac{1}{\|\vz-\vx\|_1^2}(\vz-\vx)^T\mE(\vz-\vx)\\
&=\min_{\overline{\vh}\in\mathcal{G}}\,\overline{\vh}^T\mE\overline{\vh}\,,
\end{split}
\end{align}
where $\mu_\mE>0$ and $\overline{\vh}=\frac{\vz-\vx}{\|\vz-\vx\|_1}$, $\mathcal{G}$ is a convex set defined by the following constraints:
\begin{align}
\label{supp:eq:hh_con_2}
&\sum_{i=1}^M\overline{h}_i=0\\
\label{supp:eq:hh_con_3}
&\overline{h}_i\in[0,\,0.5]\quad\textnormal{if $x_i=0$}\\
\label{supp:eq:hh_con_4}
&\overline{h}_i\in[-0.5,\,0]\quad\textnormal{if $x_i=1$}\\
\label{supp:eq:hh_con_1}
&\|\overline{\vh}\|_1=\vr^T\overline{\vh}=1\,,
\end{align}
where $\vr\in\{-1,1\}^M$ depends on $\vx$ and is defined as follows:
\begin{align}
r_i=\left\{
\begin{array}{l}
1\\
-1
\end{array}
\quad
\begin{array}{l}
\textnormal{if $x_i=0$}\\
\textnormal{if $x_i=1$}\,.
\end{array}
\right.
\end{align}
\end{lemma}

\begin{proof}
Since $\mE=\sum_{y=0}^{M-1}\mB_y\vx{\vx}^\textrm{T}\mB_y^\textrm{T}$, we can see that $\overline{\vh}^T\mE\overline{\vh}=\sum_y\left(\overline{\vh}^T\mB_y\vx\right)^2\geq 0$, $\forall\ \overline{\vh}\in\mathbb{R}^M$. Hence $\mE$ is positive-semidefinite. We define the following set $\overline{\mathcal{H}}$:
\begin{align}
\label{supp:eq:H_bar_set_def}
\overline{\mathcal{H}}=\left\{\overline{\vh}\,\left|\,\overline{\vh}=\frac{1}{\|\vz-\vx\|_1}(\vz-\vx),\quad\forall \vz\in\mathcal{S}\,,\vz\neq\vx\right.\right\}\,.
\end{align}
where $\mathcal{S}$ is the convex set defined by \eqref{supp:eq:relaxed_constraint} and \eqref{supp:eq:l1_constraint}. We then have
\begin{align}
\begin{split}
    \mu_\mE &= \min_{\vz\in\mathcal{S}, \vz\neq\vx}\,\frac{1}{\|\vz-\vx\|_1^2}(\vz-\vx)^T\mE(\vz-\vx)\\
    & = \min_{\overline{h}\in\overline{\mathcal{H}}}\,\overline{\vh}^T\mE\overline{\vh}\,.
\end{split}
\end{align}

\begin{enumerate}[label={\arabic*)}]
\item We first prove that $\overline{\mathcal{H}}$ is a convex set. Let $\overline{\vh}^{(1)},\overline{\vh}^{(2)}\in\overline{\mathcal{H}}$. We have $\textnormal{sign}(\overline{h}_i^{(1)})=\textnormal{sign}(\overline{h}_i^{(2)})$ if $\overline{h}_i^{(1)}\neq0,\ \overline{h}_i^{(2)}\neq0$. Let $\overline{\vh}^{(3)}=(1-\rho)\overline{\vh}^{(1)}+\rho\overline{\vh}^{(2)}$, where $\rho\in(0,1)$. We have:
\begin{align}
\label{supp:eq:l1_norm_z3}
\begin{split}
\|\overline{\vh}^{(3)}\|_1 & = \left\|(1-\rho)\overline{\vh}^{(1)}+\rho\overline{\vh}^{(2)}\right\|_1\\
&=\sum_{i=1}^M\left|(1-\rho)\overline{h}_i^{(1)}+\rho \overline{h}_i^{(2)}\right|\\
&=\sum_{i=1}^M\left|(1-\rho)\overline{h}_i^{(1)}\right|+\left|\rho \overline{h}_i^{(2)}\right|\\
&=(1-\rho)\left\|\overline{\vh}^{(1)}\right\|_1+\rho\left\|\overline{\vh}^{(2)}\right\|_1=1\,.
\end{split}
\end{align}
Let $\iota_1=\frac{1-\rho}{\left\|\vz^{(1)}-\vx\right\|_1}$, $\iota_2=\frac{\rho}{\left\|\vz^{(2)}-\vx\right\|_1}$. We have
\begin{align}
\begin{split}
\overline{\vh}^{(3)}&=(1-\rho)\overline{\vh}^{(1)}+\rho \overline{\vh}^{(2)}\\
&=\frac{1-\rho}{\left\|\vz^{(1)}-\vx\right\|_1}\left(\vz^{(1)}-\vx\right)+\frac{\rho}{\left\|\vz^{(2)}-\vx\right\|_1}\left(\vz^{(2)}-\vx\right)\\
&=\left(\iota_1+\iota_2\right)\left(\frac{\iota_1}{\iota_1+\iota_2}\vz^{(1)}+\frac{\iota_2}{\iota_1+\iota_2}\vz^{(2)}-\vx\right)\\
&=(\iota_1+\iota_2)(\vz^{(3)}-\vx)\,.
\end{split}
\end{align}
Using \eqref{supp:eq:l1_norm_z3}, we can see that $\iota_1+\iota_2=\frac{1}{\|\vz^{(3)}-\vx\|_1}$. Since $\vz^{(1)},\vz^{(2)}\in\mathcal{S}$, we have $\vz^{(3)}\in\mathcal{S}$. We have shown that $\overline{\vh}^{(3)}$ can be written in the same form given in \eqref{supp:eq:H_bar_set_def} and thus belongs to $\overline{\mathcal{H}}$.
\begin{align}
    \overline{\vh}^{(3)}=\frac{1}{\|\vz^{(3)}-\vx\|_1}(\vz^{(3)}-\vx)\,.
\end{align}
Hence $\overline{\vh}^{(3)}\in\overline{\mathcal{H}}$, and $\overline{\mathcal{H}}\subset\mathbb{R}^M$ is a convex set. Minimizing $\overline{\vh}^\textrm{T}\mE\overline{\vh}$ with respect to $\overline{\vh}\in\overline{\mathcal{H}}$ is a convex problem.

\item We next prove that $\mu_\mE$ in \eqref{supp:eq:lambda_min_val} is strictly positive. If $\overline{\vh}^T\mE\overline{\vh}=\sum_y\left(\overline{\vh}^T\mB_y\vx\right)^2=0$, we have $(\vz-\vx)^T\mB_y\vx=0$, $\forall\ y\in\set{0,\cdots,M-1}$. When $y=0$, $\mB_0=2\mI$, where $\mI$ is the identity matrix, we get $\vz^T\vx={\vx}^\textrm{T}\vx=N$. Since $\vz\in\mathcal{S}$ and $\vx\in\set{0,1}^M$ in the noiseless case, we have $\vz=\vx$. This is in contradiction with the assumption $\vz\neq\vx$, hence $\overline{\vh}^T\mE\overline{\vh}>0$, $\forall\ \overline{\vh}\in\overline{\mathcal{H}}$.

\item We finally prove that $\overline{\mathcal{H}}$ and a new set $\mathcal{G}$ defined by the following constraints are the same:
\begin{align}
\label{supp:eq:hh_con_2}
&\sum_{i=1}^M\overline{h}_i=0\\
\label{supp:eq:hh_con_3}
&\overline{h}_i\in[0,\,0.5]\quad\textnormal{if $x_i=0$}\\
\label{supp:eq:hh_con_4}
&\overline{h}_i\in[-0.5,\,0]\quad\textnormal{if $x_i=1$}\\
\label{supp:eq:hh_con_1}
&\|\overline{\vh}\|_1=\vr^T\overline{\vh}=1\,,
\end{align}
where $\vr\in\{-1,1\}^M$ is defined as follows:
\begin{align}
r_i=\left\{
\begin{array}{l}
1\\
-1
\end{array}
\quad
\begin{array}{l}
\textnormal{if $x_i=0$}\\
\textnormal{if $x_i=1$}\,.
\end{array}
\right.
\end{align}
\begin{itemize}
\item It is easy to verify that if $\overline{\vh}\in\overline{\mathcal{H}}$, \eqref{supp:eq:hh_con_2} and \eqref{supp:eq:hh_con_1} hold. Since $z_i\in[0,1]$ and $x_i\in\{0,1\}$, if $x_i=0$, $\overline{h}_i\geq 0$; if $x_i=1$, $\overline{h}_i\leq 0$. On the other hand, if $|\overline{h}_i|>0.5$, from \eqref{supp:eq:hh_con_2} we have $\sum_{j\neq i}|\overline{h}_j|\geq|\sum_{j\neq i}\overline{h}_j|=|-\overline{h}_i|>0.5$. This means that $\|\overline{\vh}\|_1=|\overline{h}_i|+\sum_{j\neq i}|\overline{h}_j|>1$, which contradicts \eqref{supp:eq:hh_con_1}. Hence $|\overline{h}_i|\leq 0.5$, \eqref{supp:eq:hh_con_3} and \eqref{supp:eq:hh_con_4} hold. This proves that $\overline{\vh}\in\mathcal{G}$. 
\item If $\overline{\vh}\in\mathcal{G}$, we can construct such a $\widetilde{\vz}=\vx+\overline{\vh}$. It is easy to verify that $\widetilde{\vz}\in\mathcal{S}$ and $\|\widetilde{\vz}-\vx\|_1=\|\overline{\vh}\|_1=1$. Hence $\overline{\vh} = \frac{1}{\|\widetilde{\vz}-\vx\|_1}(\widetilde{\vz}-\vx)\in\overline{\mathcal{H}}$.
\end{itemize}
\end{enumerate}
Computing $\mu_\mE=\min_{\overline{\vh}\in\mathcal{G}}\overline{\vh}^T\mE\overline{\vh}\,>0$ is thus a convex problem, and can be efficiently solved via quadratic programming.
\end{proof}

\subsection{Proof of Theorem ~\ref{supp:THM:CVG_SED}}
\label{supp:proof:thm:cvg_sed}

When the distance between the solution $\vz_t$ and a global optimum $\vx$ is less than some $\tau>0$, i.e. $\|\vz_t-\vx\|_2<\tau$, we would like to show that the projected gradient descent update in (15) converges linearly to a global optimizer $\vx$. The convergence neighbourhood $\mathcal{E}(\tau)$ is characterized by the regularity condition $RC(\alpha,\beta,\tau)$ of the objective function $f(\vz)$ \cite{WF:2015}: For all $\vz\in\mathcal{E}(\tau)$,
\begin{align}
\label{supp:eq:rc_condition}
    \langle\nabla f(\vz), \vz-\vx\rangle\geq\frac{1}{\alpha}\|\vz-\vx\|_2^2+\frac{1}{\beta}\|\nabla f(\vz)\|_2^2\,,
\end{align}
where $\alpha>0,\beta>0$ are some chosen constants. Let $\overline{\vz}_{t+k}$ denote the gradient descent update. The $RC(\alpha,\beta,\tau)$ in \eqref{supp:eq:rc_condition} ensures $\overline{\vz}_{t+k}$ with a step size $\eta\in(0,\frac{2}{\beta}]$ converges linearly to $\vx$ once $\vz_t$ reaches $\mathcal{E}(\tau)$ \cite[Lemma 7.10]{WF:2015}:
\begin{align}
\label{supp:eq:gd_closer}
    \left\|\overline{\vz}_{t+k}-\vx \right\|^2_2\leq(1-\frac{2\eta}{\alpha})^k\cdot\|\vz_t-\vx\|^2_2.
\end{align}
We shall further extend the above \eqref{supp:eq:gd_closer} to the projected gradient descent update $\vz_{t+k}$. For the turnpike problem, we make use of the following theorem:
\begin{theorem}
\label{supp:THM:CVG_SED}
In the noiseless case, let $\vh=\vz-\vx$ and $\mB_y=\mA_y+\mA_y^T$. If $\vz$ satisfies
\begin{align}
\label{supp:eq:absolute_radius}
\|\vh\|_2=\|\vz-\vx\|_2<\tau=(2-\frac{1}{\theta})\cdot\sqrt{\frac{\mu_\mE}{4}}\,,
\end{align}
where $\theta\in\big(\frac{1}{2},1\big)$ is some fixed constant and $\mu_\mE>0$ depends on the matrix $\mE=\sum_{y=0}^{M-1}\mB_y\vx\vx^T\mB_y^T$, 
\begin{enumerate}
\item There exists a choice of $\{\alpha>0,\beta>0\}$ such that the regularity condition $RC(\alpha,\beta,\tau)$ holds.
\item Under this choice of parameters $\{\alpha,\beta,\tau\}$, if $\|\vh_t\|_2=\|\vz_t-\vx\|_2<\tau$ and the step size $\eta\in(0,\frac{2}{\beta}]$, the projected gradient descent update in \eqref{supp:eq:pgd_update} converges linearly to $\vx$:
\begin{equation}
\label{supp:eq:converge}
\|\vz_{t+k}-\vx\|_2^2<(1-\frac{2\eta}{\alpha})^k\cdot\|\vz_t-\vx\|_2^2\,.
\end{equation}
\end{enumerate}
\end{theorem}

\begin{proof}
Let $\mB_y=\mA_y+\mA_y^T$. The objective function $f(\vz)$ in \eqref{supp:eq:constrained_nonconvex} can be written as
\begin{align}
    f(\vz) = \frac{1}{4MK^2}\sum_{y=0}^{M-1}\left(\vz^T\mB_y\vz-\vx^T\mB_y\vx\right)^2\,.
\end{align}
The gradient $\nabla f(\vz)$ is 
\begin{align}
\begin{split}
    \nabla f(\vz) &= \frac{1}{MK^2}\sum_{y=0}^{M-1}\mB_y\vz\cdot\left(\vz^T\mB_y\vz-\vx^T\mB_y\vx\right)\\
    &= \frac{1}{MK^2}\sum_{y=0}^{M-1}\mB_y\vz\cdot(\vz-\vx)^T\mB_y(\vz+\vx)\,.
\end{split}
\end{align}
In the following we first establish the regularity condition $RC(\alpha,\beta,\tau)$ for the gradient descent update, and then use it to prove the linear convergence of the projected gradient descent update.

\paragraph{Step 1:} 
Our goal is then to find the radius $\tau$ so that $RC(\alpha,\beta,\tau)$ holds.
\begin{itemize}
\item We first try to find an upper bound on $\|\nabla f(\vz)\|_2^2$:
\begin{align}
\label{supp:eq:first_term_lb}
\begin{split}
\|\nabla f(\vz) \|_2^2& = \frac{1}{K^4}\left\|\frac{1}{M}\sum_y\mB_y\vz\cdot(\vz-\vx)^\textrm{T}\mB_y(\vz+\vx)\right\|_2^2\\
&\leq \frac{1}{MK^4}\sum_y\left\|\mB_y\vz\cdot(\vz-\vx)^\textrm{T}\mB_y(\vz+\vx)\right\|_2^2\\
&= \frac{1}{MK^4}\sum_y\left\|\mB_y\vz\right\|_2^2\cdot\left((\vz-\vx)^\textrm{T}\mB_y(\vz+\vx)\right)^2\\
&\leq \frac{1}{MK^4}\sum_y\sigma_{\max}^2\left(\mB_y\right)\|\vz\|_2^2\cdot\left((\vz-\vx)^\textrm{T}\mB_y(\vz+\vx)\right)^2\\
&\leq \frac{4}{MK^4}\|\vz\|_2^2\sum_y\left((\vz-\vx)^\textrm{T}\mB_y(\vz+\vx)\right)^2\\
&= \frac{16}{K^2}\|\vz\|_2^2\cdot f(\vz)\\
&\leq \frac{16N}{K^2}f(\vz)\,,
\end{split}
\end{align}
where $\sigma_{\max}^2\left(\mB_y\right)\leq 4$, $\forall\ y=\{0,1,\cdots,M-1\}$ according to the Schur's bound \cite{Schur1911}, and $\|\vz\|_2^2\leq\sum_{m=1}^Mz_m =N$ is due to the constraints \eqref{supp:eq:relaxed_constraint},\eqref{supp:eq:l1_constraint}.

\item We then try to find a lower bound on $\langle\vz-\vx,\, \nabla f(\vz)\rangle$. Using the Cauchy-Schwarz inequality, we also have that
\begin{align}
\label{supp:eq:second_term}
\begin{split}
\langle\vz-\vx,\, \nabla f(\vz)\rangle&=\frac{1}{MK^2}\sum_y(\vz-\vx)^\textrm{T}\mB_y\vz\cdot(\vz-\vx)^\textrm{T}\mB_y(\vz+\vx)\\
&=4f(\vz)-\frac{1}{MK^2}\sum_y(\vz-\vx)^\textrm{T}\mB_y(\vz+\vx)\cdot(\vz-\vx)^\textrm{T}\mB_y\vx\\
&\geq 4f(\vz)-\sqrt{4f(\vz)}\sqrt{\frac{1}{MK^2}\sum_y\left((\vz-\vx)^\textrm{T}\mB_y\vx\right)^2}\,.
\end{split}
\end{align}
We proceed by further lower-bounding the above \eqref{supp:eq:second_term}. Let $\vh=\vz-\vx$. For some $\theta\in(\frac{1}{2},1)$, we have
\begin{align}
\label{supp:eq:second_term_lb1}
\begin{split}
&\theta^2\sum_y\left((\vz-\vx)^\textrm{T}\mB_y(\vz+\vx)\right)^2-\sum_y\left((\vz-\vx)^\textrm{T}\mB_y\vx\right)^2\\
=\ &\theta^2\sum_y\left(\vh^\textrm{T}\mB_y(\vh+2\vx)\right)^2-\sum_y\left(\vh^\textrm{T}\mB_y\vx\right)^2\\
=\ &\theta^2\sum_y\left(\vh^\textrm{T}\mB_y\vh\right)^2+4\theta^2\sum_y\vh^\textrm{T}\mB_y\vh\cdot\vh^\textrm{T}\mB_y\vx+(4\theta^2-1)\sum_y\left(\vh^\textrm{T}\mB_y\vx\right)^2\\
\geq \ &\theta^2\sum_y\left(\vh^\textrm{T}\mB_y\vh\right)^2-4\theta^2\sqrt{\sum_y\left(\vh^\textrm{T}\mB_y\vh\right)^2}\sqrt{\sum_y\left(\vh^T\mB_y\vx\right)^2}+(4\theta^2-1)\sum_y\left(\vh^\textrm{T}\mB_y\vx\right)^2\\
=\ &\left(\theta\sqrt{\sum_y\left(\vh^\textrm{T}\mB_y\vh\right)^2}-2q\sqrt{\sum_y\left(\vh^T\mB_y\vx\right)^2}\right)^2-\left(\sqrt{\sum_y\left(\vh^\textrm{T}\mB_y\vx\right)^2}\right)^2\\
=\ &\left(\theta\sqrt{\sum_y\left(\vh^\textrm{T}\mB_y\vh\right)^2}-(2\theta-1)\sqrt{\sum_y\left(\vh^T\mB_y\vx\right)^2}\right)\left(\theta\sqrt{\sum_y\left(\vh^\textrm{T}\mB_y\vh\right)^2}-(2\theta+1)\sqrt{\sum_y\left(\vh^T\mB_y\vx\right)^2}\right)\,.
\end{split}
\end{align}
To make \eqref{supp:eq:second_term_lb1} greater than $0$, either of the following two inequalities should hold:
\begin{align}
\label{supp:eq:second_term_lb1_one}
\sqrt{\sum_y\left(\vh^\textrm{T}\mB_y\vh\right)^2}&>(2+\frac{1}{\theta})\sqrt{\sum_y\left(\vh^T\mB_y\vx\right)^2}\\
\label{supp:eq:second_term_lb1_two}
\sqrt{\sum_y\left(\vh^\textrm{T}\mB_y\vh\right)^2}&<(2-\frac{1}{\theta})\sqrt{\sum_y\left(\vh^T\mB_y\vx\right)^2}\,.
\end{align}

We can obtain an upper bound on $\|\vh\|_2$ to make \eqref{supp:eq:second_term_lb1_two} hold. Specifically, the left-hand side of \eqref{supp:eq:second_term_lb1_two} can be upper bounded via:
\begin{align}
\label{supp:eq:ub_hDh}
\begin{split}
\sum_y\left(\vh^\textrm{T}\mB_y\vh\right)^2&=\|\vh\|_2^4\cdot\sum_y\left(\widehat{\vh}^\textrm{T}\mB_y\widehat{\vh}\right)^2\\
&=\|\vh\|_2^4\cdot\sum_y\|\mB_y\|_{op}^2\cdot\left(\frac{|\widehat{\vh}^T\mB_y\widehat{\vh}|}{\|\mB_y\|_{op}}\right)^2\\
&\leq \|\vh\|_2^4\cdot\sum_y\|\mB_y\|_{op}^2\cdot\left(\frac{|\widehat{\vh}^T\mB_y\widehat{\vh}|}{\|\mB_y\|_{op}}\right)\\
&=\|\vh\|_2^4\cdot\sum_y\|\mB_y\|_{op}\cdot|\widehat{\vh}^T\mB_y\widehat{\vh}|\\
&\leq \|\vh\|_2^4\cdot\sum_y\|\mB_y\|_{op}\cdot |\widehat{\vh}|^T\mB_y|\widehat{\vh}|\\
&=\|\vh\|_2^4\cdot\sum_y\sigma_{\max}(\mB_y)\cdot |\widehat{\vh}|^T\mB_y|\widehat{\vh}|\\
&\leq 2\|\vh\|_2^4\cdot\sum_y|\widehat{\vh}|^T\mB_y|\widehat{\vh}|\\
&=2\|\vh\|_2^2\cdot|\vh|^T\sum_y\mB_y|\vh|\\
&=2\|\vh\|_2^2\cdot|\vh|^T(\vone_\textnormal{mat}+\mI)|\vh|\\
&=2\|\vh\|_2^2\cdot(\|\vh\|_1^2+\|\vh\|_2^2)\\
&\leq 4\|\vh\|_2^2\cdot\|\vh\|_1^2\,,
\end{split}
\end{align}
where $\widehat{\vh}=\frac{1}{\|\vh\|_2}\vh$, $\vone_\textnormal{mat}$ is a matrix of all $1$s and $\mI$ is the identity matrix. The first inequality in \eqref{supp:eq:ub_hDh} is obtained by $|\widehat{\vh}^T\mB_y\widehat{\vh}|=|\langle\widehat{\vh},\mB_y\widehat{\vh}\rangle|\leq\|\widehat{\vh}\|_2\|\mB_y\widehat{\vh}\|_2\leq\|\mB_y\|_{op}$ and hence $\frac{|\widehat{\vh}^T\mB_y\widehat{\vh}|}{\|\mB_y\|_{op}}\leq 1$; the second inequality is obtained by $|\widehat{\vh}^\textrm{T}\mB_y\widehat{\vh}|=|\sum_{ij}A_y(i,j)\hat{h}_i\hat{h}_j|\leq\sum_{ij}A_y(i,j)|\hat{h}_i||\hat{h}_j|=|\widehat{\vh}|^\textrm{T}\mB_y|\widehat{\vh}|$. If we choose the operator norm $\|\cdot\|_{op}$ to be the Euclidean norm, then $\|\mB_y\|_{op}=\sigma_{\max}(\mB_y)\leq 2$; the last inequality is obtained via $\|\vh\|_2\leq\|\vh\|_1$.

The right-hand side of \eqref{supp:eq:second_term_lb1_two} can be low-bounded as:
\begin{align}
\label{supp:eq:lb_hdx}
\begin{split}
\sum_y\left(\vh^T\mB_y\vx\right)^2&=\|\vh\|_1^2\cdot\overline{\vh}^\textrm{T}\left(\sum_y\mB_y\vx{\vx}^\textrm{T}\mB_y^\textrm{T}\right)\overline{\vh}=\|\vh\|_1^2\cdot\overline{\vh}^\textrm{T}\mE\overline{\vh}\\
&\geq\|\vh\|_1^2\cdot\mu_\mE\,,
\end{split}
\end{align}
where $\overline{\vh}=\frac{1}{\|\vh\|_1}\vh$, $\mE=\sum_{y=0}^{M-1}\mB_y\vx{\vx}^T\mB_y^T$ and $\mu_\mE>0$ can be computed using Lemma \ref{supp:LEMMA:E_MIN}. Combining \eqref{supp:eq:second_term_lb1_two}, \eqref{supp:eq:ub_hDh} and \eqref{supp:eq:lb_hdx}, we can see that as long as the following \eqref{supp:eq:h_bd} holds, \eqref{supp:eq:second_term_lb1_two} will also hold.
\begin{align}
\label{supp:eq:h_bd}
\|\vh\|_2<\tau=\left(2-\frac{1}{\theta}\right)\cdot\sqrt{\frac{\mu_\mE}{4}}\,.
\end{align}
The above \eqref{supp:eq:h_bd} guarantees that \eqref{supp:eq:second_term_lb1} is always greater than $0$. 

Combining \eqref{supp:eq:second_term_lb1},\eqref{supp:eq:second_term_lb1_two}, we have
\begin{align}
\label{supp:eq:lb_gradient_align}
-\sqrt{\frac{1}{MK^2}\sum_y\left((\vz-\vx)^\textrm{T}\mB_y\vx\right)^2}>-\theta\sqrt{4f(\vz)}\,.
\end{align}
Plug the above \eqref{supp:eq:lb_gradient_align} into \eqref{supp:eq:second_term}. We have:
\begin{align}
\label{supp:eq:second_term_lb}
\langle\vz-\vx,\, \nabla f(\vz)\rangle > 4(1-\theta)f(\vz)\,.
\end{align}

\item We finally show that there exist some $\{\alpha,\beta\}$ to make the regularity condition $RC(\alpha,\beta,\tau)$ hold. 

Plugging \eqref{supp:eq:second_term_lb}, \eqref{supp:eq:first_term_lb} into \eqref{supp:eq:rc_condition}, we need the following inequality to hold:
\begin{align}
\label{supp:eq:rc_condtion_c1}
    4(1-\theta)f(\vz)\geq\frac{1}{\alpha}\|\vh\|_2^2+\frac{1}{\beta}\frac{16N}{K^2}f(\vz)\,.
\end{align}
Combining \eqref{supp:eq:lb_gradient_align} and \eqref{supp:eq:lb_hdx}, we further have
\begin{align}
\label{supp:eq:lb_obj_fun}
f(\vz)>\frac{1}{4\theta^2}\frac{1}{MK^2}\sum_y\left(\vh^\textrm{T}\mB_y\vx\right)^2\geq\frac{1}{4\theta^2}\frac{1}{MK^2}\|\vh\|_1^2\mu_\mE\geq\frac{1}{4\theta^2}\frac{1}{MK^2}\|\vh\|_2^2\mu_\mE\,.
\end{align}
Plugging \eqref{supp:eq:lb_obj_fun} into \eqref{supp:eq:rc_condtion_c1}, we then need the following inequality to hold:
\begin{align}
\label{supp:eq:rc_condition_c2}
    \left((1-\theta)-\frac{1}{\beta}\frac{4N}{K^2}\right)\frac{1}{\theta^2}\frac{1}{MK^2}\mu_\mE\geq\frac{1}{\alpha}\,.
\end{align}
The constants $\theta$ and $\{\alpha,\beta\}$ that satisfy \eqref{supp:eq:rc_condition_c2} can be chosen in the following order:
\begin{enumerate}
    \item Choose some $\theta\in(\frac{1}{2}, 1)$.
    \item Fix $\theta$, choose some $\beta>\frac{4N}{(1-\theta)K^2}$.
    \item Fix $\theta,\beta$, choose some $\alpha\geq\left((1-\theta)-\frac{1}{\beta}\frac{4N}{K^2}\right)^{-1}\frac{\theta^2MK^2}{\mu_\mE}$.
\end{enumerate}

\end{itemize}
We can get that the regularity condition $RC(\alpha,\beta,\tau)$ holds and $\tau=(2-\frac{1}{\theta})\sqrt{\frac{\mu_\mE}{4}}$.

\paragraph{Step 2:} We use $\overline{\vz}_{+1}=\vz-\eta\nabla f(\vz)$ to denote one gradient descent update, and $\vz_{+1}=\mathscr{P}_\mathcal{S}(\overline{\vz}_{+1})\in\mathcal{S}$ to denote one projected gradient descent update.

Let $\vs$ be a linear combination of $\vz_{+1}$ and a global optimizer $\vx$ such that
\begin{align}
\label{supp:eq:linear_comb}
\vx-\vz_{+1} = a(\vz_{+1}-\vs)\,,
\end{align}
where $a\in\mathbb{R}$, $a\neq 0$ is some constant. We can always find an $\vs$ such that the following holds,
\begin{align}
\label{supp:eq:perpendicular}
(\vs-\vz)^T(\vs-\vz_{+1})=0\,.
\end{align}
\begin{enumerate}[label={\arabic*.}]
\item If $\overline{\vz}_{+1}=\vz_{+1}$, then $\vz\in\mathcal{S}$ and \eqref{supp:eq:converge} holds.
\item If $\vx=\vz_{+1}$, \eqref{supp:eq:converge} naturally holds.
\item Otherwise, we can choose $a=\frac{\|\vx-\vz_{+1}\|_2^2}{\left(\vz_{+1}-\overline{\vz}_{+1}\right)^T\left(\vx-\vz_{+1}\right)}$. From \eqref{supp:eq:perpendicular}, we can get:
\begin{align}
\label{supp:eq:sum_four}
\|\vs\|_2^2-\overline{\vz}_{+1}^T\vs-\vs^T\vz_{+1} = -\overline{\vz}_{+1}^T\vz_{+1}\,.
\end{align}
We also have
\begin{align}
\label{supp:eq:sum_one}
\|\overline{\vz}_{+1}-\vz_{+1}\|_2^2 &= \|\overline{\vz}_{+1}\|_2^2 + \|\vz_{+1}\|_2^2-2\overline{\vz}_{+1}^T\vz_{+1}\\
\label{supp:eq:sum_two}
\|\overline{\vz}_{+1}-\vs\|_2^2&=\|\overline{\vz}_{+1}\|_2^2+\|\vs\|_2^2-2\overline{\vz}_{+1}^T\vs\\
\label{supp:eq:sum_three}
\|\vs-\vz_{+1}\|_2^2 &= \|\vs\|_2^2 + \|\vz_{+1}\|_2^2 - 2\vs^T\vz_{+1}\,.
\end{align}
Combining \eqref{supp:eq:sum_four}-\eqref{supp:eq:sum_three}, we get that
\begin{align}
\label{supp:eq:sum_five}
\|\overline{\vz}_{+1}-\vz_{+1}\|_2^2=\|\overline{\vz}_{+1}-\vs\|_2^2+\|\vs-\vz_{+1}\|_2^2\,.
\end{align}
Using \eqref{supp:eq:linear_comb}, we have
\begin{align}
\label{supp:eq:sz_xs}
    \vs-\vz_{+1} = \frac{1}{a+1}(\vs-\vx)\,.
\end{align}
Plug \eqref{supp:eq:sz_xs} into \eqref{supp:eq:perpendicular}. We have
\begin{align}
    (\vs-\overline{\vz}_{+1})^T(\vs-\vx)=0\,.
\end{align}
Similarly, we can get that
\begin{align}
\label{supp:eq:sum_six}
\|\overline{\vz}_{+1}-\vx\|_2^2=\|\overline{\vz}_{+1}-\vs\|_2^2+\|\vs-\vx\|_2^2\,.
\end{align}

\begin{enumerate}[label*={\arabic*)}]
\item If $\vs\in\mathcal{S}$, since $\vz_{+1}$ is the projection of $\overline{\vz}_{+1}$ in $\mathcal{S}$, we have $\|\overline{\vz}_{+1}-\vz_{+1}\|_2^2\leq\|\overline{\vz}_{+1}-\vs\|_2^2$. Using \eqref{supp:eq:sum_five}, we have:
\begin{align}
\|\vs-\vz_{+1}\|_2^2=0\,.
\end{align}
Hence $\vs$ and $\vz_{+1}$ is the same point. From \eqref{supp:eq:sum_six}, we can get:
\begin{align}
\|\overline{\vz}_{+1}-\vx\|_2^2=\|\overline{\vz}_{+1}-\vz_{+1}\|_2^2+\|\vz_{+1}-\vx\|_2^2\,.
\end{align}
Since $\overline{\vz}_{+1}\notin\mathcal{S}$, we have $\|\overline{\vz}_{+1}-\vz_{+1}\|_2^2>0$. Hence $\|\overline{\vz}_{+1}-\vx\|_2^2>\|\vz_{+1}-\vx\|_2^2$.
\item If $\vs\notin\mathcal{S}$, we have:
\begin{align}
\label{supp:eq:sum_seven}
\begin{split}
\|\vs-\vx\|_2^2&=\|\vs-\vz_{+1}+\vz_{+1}-\vx\|_2^2\\
&=\|\vs-\vz_{+1}\|_2^2+\|\vz_{+1}-\vx\|_2^2+2\left(\vs-\vz_{+1}\right)^\textrm{T}\left(\vz_{+1}-\vx\right)\,.
\end{split}
\end{align}
\begin{itemize}
\item If $a\in[0, \infty)$, from \eqref{supp:eq:linear_comb}, we have $\left(\vs-\vz_{+1}\right)^\textrm{T}\left(\vz_{+1}-\vx\right)\geq 0$. From \eqref{supp:eq:sum_seven}, we have $\|\vs-\vx\|_2^2\geq\|\vz_{+1}-\vx\|_2^2$. Using \eqref{supp:eq:sum_six}, we have $\|\overline{\vz}_{+1}-\vx\|_2^2\geq\|\vz_{+1}-\vx\|_2^2$.
\item If $a\in(-1,0)$, from \eqref{supp:eq:linear_comb}, we have $\vz_{+1}-\vx=\frac{a}{-1-a}(\vx-\vs)$. Since $\frac{a}{-1-a}>0$, $(\vz_{+1}-\vx)^T(\vx-\vs)>0$. We then have:
\begin{align}
\begin{split}
\|\vz_{+1}-\vs\|_2^2&=\|\vz_{+1}-\vx+\vx-\vs\|_2^2\\
&=\|\vz_{+1}-\vx\|_2^2+\|\vx-\vs\|_2^2+2(\vz_{+1}-\vx)^T(\vx-\vs)\\
&>\|\vx-\vs\|_2^2\,.
\end{split}
\end{align}
Using \eqref{supp:eq:sum_five} and \eqref{supp:eq:sum_six}, we have:
\begin{align}
\|\overline{\vz}_{+1}-\vz_{+1}\|_2^2>\|\overline{\vz}_{+1}-\vx\|_2^2\,.
\end{align}
This is in contradiction with the assumption that $\vz_{+1}$ is the projection of $\vz$ in $\mathcal{S}$ so that $\vz_{+1}$ is closest point in $\mathcal{S}$ to $\vz$ in terms of $l_2$ norm: $\|\overline{\vz}_{+1}-\vz_{+1}\|_2^2\leq\|\overline{\vz}_{+1}-\vx\|_2^2$, hence $a\notin(-1,0)$.
\item If $a\in(-\infty, -1]$, from \eqref{supp:eq:linear_comb}, we have $\vs=-\frac{1}{a}\vx+(1+\frac{1}{a})\vz_{+1}$. Since $-\frac{1}{a}\in(0,1]$, $\vs\in\mathcal{S}$. This is in contradiction with the assumption $\vs\notin\mathcal{S}$, hence $a\notin(-\infty,1]$.
\end{itemize}
\end{enumerate}
In summary, we have
\begin{align}
\|\vz_{+1}-\vx\|_2^2\leq\|\overline{\vz}_{+1}-\vx\|_2^2\,.
\end{align}
We can use the regularity condition $RC(\alpha,\beta,\tau)$ and get the following 
\begin{align}
    \|\vz_{t+k}-\vx\|_2^2<(1-\frac{2\eta}{\alpha})^k\cdot\|\vz_t-\vx\|_2^2\,.
\end{align}
\end{enumerate}
\end{proof}

\section{Beltway: Convergence and Difficulty of Recovery}
\subsection{Convergence Theorem ~\ref{supp:THM:CVG_SED_BW}}
\label{supp:proof:thm:cvg_sed_bw}

For the beltway problem we have a similar theorem for the convergence analysis.
\begin{theorem}
\label{supp:THM:CVG_SED_BW}
In the noiseless case, let $\vh=\vz-\vx$ and $\mS_y=\mR_y+\mR_y^T$. If $\vz$ satisfies
\begin{align}
\label{supp:eq:absolute_radius}
\|\vh\|_2=\|\vz-\vx\|_2<\tau=(2-\frac{1}{\theta})\cdot\sqrt{\frac{\mu_\mF}{4}}\,,
\end{align}
where $\theta\in\big(\frac{1}{2},1\big)$ is some fixed constant and $\mu_\mF>0$ depends on the matrix $\mF=\sum_{y=0}^{M-1}\mS_y\vx\vx^T\mS_y^T$, 
\begin{enumerate}
\item There exists a choice of $\{\alpha>0,\beta>0\}$ such that the regularity condition $RC(\alpha,\beta,\tau)$ holds.
\item Under this choice of parameters $\{\alpha,\beta,\tau\}$, if $\|\vh_t\|_2=\|\vz_t-\vx\|_2<\tau$ and the step size $\eta\in(0,\frac{2}{\beta}]$, the projected gradient descent update in \eqref{supp:eq:pgd_update} converges linearly to $\vx$:
\begin{equation}
\label{supp:eq:converge}
\|\vz_{t+k}-\vx\|_2^2<(1-\frac{2\eta}{\alpha})^k\cdot\|\vz_t-\vx\|_2^2\,.
\end{equation}
\end{enumerate}
\end{theorem}
The proof can be derived in a similar fashion to the one in the turnpike case by simply replacing $\{\mA_y,\mB_y,\mE\}$ with $\{\mR_y,\mS_y,\mF\}$.

\subsection{Analysis on Difficulty of Recovery}
The mutual information $I(X;Y)$ between the point density $X$ and the distance $Y$ can be derived in a similar way. Note that the probability distribution of the distance $Y$ in the beltway problem is different from that of the turnpike problem. We define the following two random variables:
\begin{definition}
The point density $X\in\{1,\cdots,M\}$ is a random variable with distribution $P(X=m)=\frac{1}{N}x_m$, where $x_m$ is the $m$-th entry of the ground truth signal $\vx$, and $P(X=m)$ corresponds to the normalized point density at the $m$-th segment $l_m$ in the 1D discrete domain.
\end{definition}

\begin{definition}
The distance $Y\in\{0,\cdots,M-1\}$ is a random variable with conditional distribution
\begin{align}
\begin{split}
P(Y=y|X=m)=&\frac{1}{2}\sum_{k=1}^MP(X=k)\cdot\Big(\delta(y=k-m)+\delta(M-y=m-k)\Big)\\
&+\frac{1}{2}\sum_{k=1}^MP(X=k)\cdot\Big(\delta(y=M-k+m)+\delta(y=m-k)\Big)\,,
\end{split}
\end{align}
where ``$\frac{1}{2}$'' is due to the observation that a distance $Y$ could be either from $x_m$ to $x_k$ or from $x_k$ to $x_m$ with probability $\frac{1}{2}$.
Its marginal distribution is then
\begin{align}
\label{supp:eq:it_analysis_distance_dist}
\begin{split}
P(Y=y)=&\sum_{m=1}^MP(X=m)P(Y=y|X=m)\\
=&\frac{1}{2N^2}\sum_{m=1}^M\sum_{k=1}^M x_mx_k\cdot\Big(\delta(y=k-m)+\delta(M-y=m-k)\Big)\\
&+\frac{1}{2N^2}\sum_{m=1}^M\sum_{k=1}^M x_mx_k\cdot\Big(\delta(y=M-k+m)+\delta(y=m-k)\Big)\\
=&\frac{1}{N^2}\sum_{m=1}^M\sum_{k=1}^M x_mx_k\cdot\Big(\delta(y=k-m)+\delta(M-y=m-k)\Big)\,.
\end{split}
\end{align}
\end{definition}
The mutual information $I(X;Y)$ can then be computed as follows:
\begin{align}
\begin{split}
    I(X;Y)&=H(Y)-H(Y|X)\\
    &=H(Y)-\sum_{m=1}^MP(X=m)\cdot H(Y|X=m)\,.
\end{split}    
\end{align}

\end{document}